\title{Parallel Query Processing with Heterogeneous Machines} 
\author{Simon Frisk}{University of Wisconsin-Madison}{simon.frisk@wisc.edu}{}{}%TODO mandatory, please use full name; only 1 author per \author macro; first two parameters are mandatory, other parameters can be empty. Please provide at least the name of the affiliation and the country. The full address is optional. Use additional curly braces to indicate the correct name splitting when the last name consists of multiple name parts.
\author{Paraschos Koutris}{University of Wisconsin-Madison}{paris@cs.wisc.edu}{}{}
\authorrunning{S. Frisk and P. Koutris} %TODO mandatory. First: Use abbreviated first/middle names. Second (only in severe cases): Use first author plus 'et al.'
\keywords{Joins, Massively Parallel Computation, Heterogeneous} 
\newcommand{\introparagraph}[1]{\vspace{0.7mm} \noindent \textbf{#1.}}
\newcommand{\dl}{\textit{ :- }}
\newcommand{\norm}[2]{\left\lVert #1 \right\rVert_{#2}}
\newcommand{\vol}[1]{|{#1}|}
\newcommand{\bu}{\mathbf{u}}
\newcommand{\bv}{\mathbf{v}}
\newcommand{\bw}{\mathbf{w}}
\begin{document}

\maketitle

\begin{abstract}
    We study the problem of computing a full Conjunctive Query in parallel using $p$ heterogeneous machines. Our computational model is similar to the MPC model, but each machine has its own cost function mapping from the number of bits it receives to a cost. An optimal algorithm should minimize the maximum cost across all machines. We consider algorithms over a single communication round and give a lower bound and matching upper bound for databases where each relation has the same cardinality. We do this for both linear cost functions like in previous work, but also for more general cost functions. For databases with relations of different cardinalities, we also find a lower bound, and give matching upper bounds for specific queries like the cartesian product, the join, the star query, and the triangle query. Our approach is inspired by the HyperCube algorithm, but there are additional challenges involved when machines have heterogeneous cost functions.
\end{abstract}

\section{Introduction}

Large datasets are commonly processed using massively parallel systems. To analyze query processing in such a setting,  Beame et al.~\cite{communication_steps_journal} introduced the \textit{massively parallel computation} (MPC) model. The MPC model considers a cluster with a shared-nothing architecture, where computation proceeds in rounds: each round consists of communication between machines, followed by computation on the locally stored data. The main measure of complexity in the MPC model is the {\em load}, which captures the maximum number of bits received by a machine. An efficient MPC algorithm is designed to make the load as small as possible.
%Each round is followed by a global synchronization step. The MPC model measures complexity using two measures - the number of rounds $r$ and the maximum number of bits received by a machine, $L$. This reflects that in massively distributed systems, the performance bottleneck lies in communication between machines and synchronization.

However, the MPC model operates on an assumption of homogeneity, meaning the cost of a machine is indifferent to where the received data was sent from, and how powerful the machine is. This is an unrealistic assumption, as the large-scale clusters that massively parallel computation is performed on are {\em heterogeneous}. Heterogeneity can occur in both compute resources (processing speed, memory) and the network that connects the machines. %for example, if machines are on different racks in a data center, where inter-rack communication is cheaper than communication across racks.
%Heterogeneity can also change over time, for example, when network congestion occurs.

In this work, we consider massively parallel data processing in clusters with heterogeneity in compute resources. We use a computational model that, similar to the MPC model, has a homogeneous network topology (every machine is connected directly to any other machine). However, each machine $c$ is equipped with its own cost function $g_c$: this function maps the number of bits the machine receives to the cost. The load $L$ of a round is then defined as the maximum cost across all machines, i.e., $L = \max_{c\in[p]} g_c$. The computational model in this paper captures the MPC model as a special case, when for each machine the cost function is the identity function $g_c(N)=N$. Our model is also a special instance of the topology-aware model in~\cite{topologyawaredataprocessing}, however, one that has not been studied in prior work.

Based on the above heterogeneous model, we study the problem of computing join queries with a minimum load. We will focus on one-round algorithms, i.e., we want to have only local computation after one round of communication. One-round algorithms are particularly relevant to data processing systems with a disaggregated storage architecture (e.g., Amazon Aurora~\cite{amazonaurora}, Snowflake~\cite{snowflake}).
%In Snowflake, data is stored in Amazon S3, and query processing is done in virtual warehouses, which are clusters that scale to hundreds of nodes and thousands of cores. 
These algorithms can be viewed as algorithms that send the data from the storage layer to the compute layer in such a way that no further communication has to be done in the compute layer. This paper therefore addresses the problem of optimally sending data from the data layer to the compute layer when there is compute heterogeneity.

%Previous work has focused mainly on analyzing data processing in homogenous clusters\cite{communication_steps_journal,beame2014skewparallelqueryprocessing,beame2016worstcaseoptimalalgorithmsparallel,ketzman,coverorpack}.

\introparagraph{Our Contributions}
The main contribution of this work is upper and lower bounds for the load $L$ of computing a join query (corresponding to a full Conjunctive Query) in one round with heterogeneous machines. In particular:
\begin{itemize}
\item We present an algorithm (Section~\ref{sec:upper}) that evaluates a join query in one round when the cost function is linear with different weights, i.e., $g_c(N) = N/w_c$ for machine $c$. Our algorithm works for two different types of inputs where all relations have the same size: matching databases that are sparse, and dense databases that contain a constant fraction of all possible input tuples.

\item We give (Section~\ref{sec:lower}) lower bounds that (almost) match the upper bounds for both the sparse and dense cases. Our lower bounds are unconditional, that is, they make no assumptions on how the algorithm behaves and how it encodes the input tuples. 

\item We next consider the case with non-linear cost functions (Section~\ref{sec:general}). Previous work, even in the topology-aware MPC model, assumes linear cost functions. We generalize this to a wider class of cost functions.

\item Finally, we consider queries where the cardinalities of input relations are different (Section~\ref{sec:unequal}). We give a lower bound on the load to compute such queries in a single round, for the same two data distributions as in the equal cardinality case. We also give an algorithm that matches the upper bound for Conjunctive Queries for the cartesian product, binary join, star query, and triangle query. 
\end{itemize}

\introparagraph{Technical Ideas}
In the MPC model, the HyperCube algorithm has proved to be the key technique that gives optimal join algorithms. The HyperCube algorithm maps tuples to machines via a hash function that hashes each tuple to a vector. Tuples are sent to machines where the projection of the coordinates of the machine equals the hash vector of the tuple. Each machine obtains the same number of tuples (with high probability) and has the same load. However, in the heterogenous setting, each machine may be allocated a different number of tuples, since slower machines can handle less data than faster machines. Thus, instead of considering how to organize the machines in a hypercube, we consider how to partition the space of all possible tuples $\Lambda=[n]^k$ into subspaces (which are hyperrectangles) $\Lambda_c\subseteq\Lambda$, one for each machine $c$. Each machine is then responsible for computing all the output tuples in this subspace, and to do this correctly it needs to receive all input tuples that may contribute to these. The technical challenge is twofold: $(i)$ how to optimally set the dimensions of each $\Lambda_c$ to minimize the load across all machines, and $(ii)$ how to geometrically position the subspaces such that the space $\Lambda$ is fully covered. We will show that query parameters such as fractional edge packings and vertex covers are still critical in characterizing the optimal load, but the algorithmic techniques we use are different from the HyperCube algorithm.

\section{Related Work}

\introparagraph{MPC Algorithms}
The MPC model is a computational model introduced by Beame et al.~\cite{communication_steps_journal}. It has been used to analyze parallel algorithms for joins and other fundamental data processing tasks. The seminal paper~\cite{communication_steps_journal} shows matching upper and lower bounds on the load for Conjunctive Queries in one round for matching databases. A lower bound for queries with skew was also given, which was matched by an upper bound for some classes of queries. Later work~\cite{beame2016worstcaseoptimalalgorithmsparallel} studied the worst-case optimal load for any input in one round algorithms and proposed an algorithm matching the lower bound. Further research explored the computation of join queries using multiple rounds~\cite{ beame2016worstcaseoptimalalgorithmsparallel, ketzman, coverorpack, parallelAcyclicJoins, Tao2020}, or the design of parallel output-sensitive algorithms in the MPC model~\cite{Hu_2019_OutputSensitive}. 

\introparagraph{Topology-aware Algorithms}
A recent line of work aims to consider a topology-aware parallel model that is aware of the heterogeneity in the cluster topology and compute resources~\cite{topologyawaredataprocessing, hu2020algorithms,topologyawarejoins}.
In this model, the topology is modeled as a graph $G=(V,E)$, where a subset $V_C\subseteq V$ of nodes are compute nodes. Computation proceeds in rounds similar to the MPC model, but the cost model is different. Instead of modeling the cost as the maximum number of bits sent to a processor, each edge in the network has a cost which is a function of the number of bits it transmits. The cost of a round is then the maximum cost across all edges. A common cost function is that the cost of edge $e$ is $f_e(N)=N/w_e$, which is similar to the cost function used in this paper. 
Under this topology-aware model, recent work has studied lower and upper bounds for set intersections, cartesian product, and sorting~\cite{hu2020algorithms}, as well as binary joins~\cite{topologyawarejoins}. Both of these papers assume that the underlying network has a symmetric tree topology.

%The new model models the cluster as a directed graph, where a subset of nodes can perform computation, and the cost of transmitting one bit depends on how the bit traverses the cluster graph. Like in the MPC model, the complexity measures are the number of rounds $r$ and the load $L$. $L$ is here the cost of the maximum link in the graph, which is a function of the number of bits the link transmits.

The computational model in this paper is a special case of the topology-aware MPC model, where the network topology is a star. This is a tree with depth 1, where all leaves are compute nodes, and the root node is a router. The cost function from a compute node to the router is $0$, and the cost function from the router to machine $c$ is precisely the cost function of the machine, $g_c(N)$. Prior work in the topology-aware MPC model does not capture the work in this paper, for two reasons. First, it considers symmetric trees, meaning the cost function across a link is the same in each direction, which is not true in this paper. Second, we consider arbitrary full conjunctive queries, which have not been studied previously.

\section{Background}

\introparagraph{Computation Model}
Initially, the $p$ machines in the cluster hold an arbitrary piece of the input data. The computation then proceeds in $r$ rounds. A round consists of the {\em communication phase}, where machines can exchange data, followed by the {\em computation phase}, where computation is performed on locally stored data. In this paper, we focus on algorithms where $r=1$, meaning there is a single round of communication followed by computation on local data. The output of a computation is the (set) union of the output across all machines.

In the standard MPC model, the cost of a round is modeled as the maximum amount of data (in bits) received by any machine. That is, if $N_c$ is the number of bits received by machine $c$, the cost of computation is $L=\max_{c\in[p]}N_c$. 

In this paper, we will extend this model to heterogeneous machines. This means that each machine $c \in [p]$ has a cost function $g_c:\mathbb{Z}^+\rightarrow \mathbb{R}^+$ that maps from the number of bits received ($N_c$) to a positive real number denoting cost. The cost of a round is similar to before, i.e., $\max_{c\in[p]} g_c(N_c)$. We will mostly work with linear cost functions $g_c(x)=x/w_c$ for some $w_c\in \mathbb{Z^+}$. Here, the weight constant $w_c$ for each machine captures the cost at the machine, which may include both data transmission and processing. Later in the paper, we will study more general cost functions.

\introparagraph{Conjunctive Queries}
In this paper, we work with Conjunctive Queries without projection or selection. These can be thought of as natural joins between $l$ relations:
$$ q(x_1,...,x_k)\dl S_1(\mathbf{y}_1),...,S_l(\mathbf{y}_l) $$

There are $k$ variables, denoted $x_1,...,x_k$, and $l$ atoms, denoted $S_1,...,S_l$. For each $j$, the vector $\mathbf{y}_j$ consists of variables, and $r_j$ is the arity of the atom $S_j$. 
We restrict the queries in this paper to have no self-joins, meaning no two atoms can refer to the same underlying relation. We will often use the notation $x \in S_j$ to mean that variable $x$ occurs in the atom $S_j$. We will work with relations where the values come from a domain $[n]$ = \{1,2, \dots, n\}. We denote the cardinality of atom $S_j$ as $m_j$ and the number of bits needed to encode $S_j$ as $M_j$.

A {\em fractional vertex cover} $\bv$ for $q$ assigns a weight $v_i \geq 0$ to each variable $x_i$ such that for every atom $S_j$, we have $\sum_{x_i \in S_j} v_i \geq 1$.

A {\em fractional edge packing} $\bu$ for $q$ assigns a weights $u_i \geq 0$ to each atom $S_j$ such that for every variable $x_i$, we have $\sum_{j:x_i\in S_j}u_j \leq 1$.

\introparagraph{HyperCube Algorithm}
HyperCube is an elegant algorithm for distributed multiway joins, originally introduced by Afrati and Ullman for the MapReduce model \cite{afratiullman}. It computes multiway joins in a single round of communication, as opposed to traditional methods where relations are joined pairwise. We will illustrate how HyperCube computes a full CQ $q$ with $k$ variables using $p$ machines. 

The $p$ machines are organized in a hyperrectangle with $k$ dimensions, one for each variable. The sides of the hyperrectangle have $\{p_i\}_{i\in[k]}$ machines, where $p_i\in[1,p]$ and $\prod_{i\in[k]}p_i=p$. Each machine $c$ has a coordinate $\mathbf{C}_c\in[p_1]\times...\times[p_k]$. Denote $\pi_{S_j}\mathbf{C}_c$ as the projection of $\mathbf{C}_c$ on $S_j$. We will use $k$ hash functions $\{h_i\}_{i\in[k]}$, one for each variable, where $h_i:[n]\rightarrow [p_i]$. Denote $\mathbf{h}=(h_1,...,h_k)$ as the vector of all hash functions, and $\pi_{S_j}\mathbf{h}$ as the projection of $\mathbf{h}$ on $S_j$.
A tuple $a_j\in S_j$ will be sent to all machines $c$ such that $(\pi_{S_j}\mathbf{h})(a_j)=\pi_{S_j}\mathbf{C}_c$. Then the query can be computed locally on each machine with all tuples that were sent to that machine. The correctness of the algorithm follows from that each tuple $\mathbf{a}\in[n]^k$ that should be in the output is produced by the machine $\mathbf{h}(\mathbf{a})$.

\introparagraph{Input Distributions}
In this paper, we will focus on two classes of inputs, sparse and dense. The first type of input is a {\em matching database}. The cardinality of relation $S_j$ is $m_j$. For every value in the domain $v\in[n]$, every relation $S_j$, and attribute $A$ of that relation, there exists at most one tuple $a_j\in S_j$ such that the value of $a_j$ in the attribute $A$ is $v$. If the arity of a relation $S_j$ is $1$, we require that $m_j/n\leq\theta$ for some constant $\theta\in(0,1)$. We will start by considering the case when each relation has the same cardinality. In Section~\ref{sec:unequal}, we will generalize this to the case when each relation can have a different cardinality $m_j\leq n$.

%For a relation $S_j$ where values are from the domain $[n]$ with arity $r_j$, there are $n^{r_j}$ possible tuples. 
The second class of inputs are $\theta$-dense databases, where $\theta \in (0,1)$. For this input, a relation $S_j$ or arity $r_j$ has a fraction $\theta$ of all $n^{r_j}$ possible tuples. We consider $\theta$ to be a constant in data complexity terms. We will first study instances where the cardinality of each relation is the same (which means that the arity $r_j$ is the same for each relation) and generalize in Section \ref{sec:unequal} to unequal cardinalities.

\section{The Upper Bound}
\label{sec:upper}

In this section, we give algorithms for computing a full Conjunctive Query $q$ with $k$ variables. 
%Consider an instance $I$ with values from domain $[n]$. Let $q$ be a full CQ with $k$ variables $x_1, \dots, x_k$. 
We will consider the linear cost model, where we have  $p$ machines, and machine $c \in [p]$ has a linear cost function $g_c(N) = N/w_c$ for some weight $w_c \geq 0$. We will denote $\bw := (w_1, \dots, w_p)$. 

Let $I$ be an instance with uniform cardinalities $m$ over a domain $[n]$. Let $\bv$ be a fractional vertex cover of $q$ and $v = \sum_{i\in[k]} v_i$. Then, define:
$$L^{\textsf{upper}}_\bv := \frac{m \log n}{\norm{\bw}{v}} = \frac{m \log n}{\left( \sum_{c \in [p]} w_c^v \right)^{1/v}}$$

\begin{theorem}[Dense Inputs]
\label{theorem:UpperBoundDense}
Let $q$ be a full CQ with uniform arity $r$ and a $\theta$-dense input $I$ with domain $[n]$ (every relation has size $m=\theta n^r$). Then, for every fractional vertex cover $\bv$, we can evaluate $q$ in one round in the linear cost model with load $O(L^{\textsf{upper}}_\bv)$.
\end{theorem}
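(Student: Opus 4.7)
My plan is to assign each machine $c$ an axis-aligned hyperrectangle $\Lambda_c\subseteq[n]^k$ with side lengths $n_{c,1},\dots,n_{c,k}$, and declare $c$ responsible for producing every output tuple lying in $\Lambda_c$. Correspondingly, $c$ receives from each atom $S_j$ exactly the tuples whose projection onto $y_j$ lies in $\pi_{y_j}\Lambda_c$. Correctness then reduces to the covering condition $\bigcup_c \Lambda_c=[n]^k$, and the cost analysis reduces to controlling $\prod_{i\in y_j}n_{c,i}$ at each atom and machine.

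For the dimensions I would take $n_{c,i}=n\cdot \alpha_c^{v_i}$, where $\alpha_c\in(0,1]$ is a scalar parameter depending on $w_c$. Because the input is $\theta$-dense, the number of tuples of $S_j$ arriving at $c$ is at most $\prod_{i\in y_j}n_{c,i}=n^r\alpha_c^{\sum_{i\in y_j}v_i}$; the vertex-cover inequality $\sum_{i\in y_j}v_i\ge 1$ together with $\alpha_c\le 1$ collapses this to $n^r\alpha_c=O(m\alpha_c)$, uniformly in $j$. At $O(\log n)$ bits per tuple and $O(1)$ atoms, the cost $g_c$ at $c$ is $O(m\alpha_c\log n/w_c)$, so taking $\alpha_c=c_0\, w_c/\norm{\bw}{v}$ for a suitable constant $c_0$ drives every $g_c$ down to $O(L^{\textsf{upper}}_\bv)$. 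Machines whose intended $\alpha_c$ would exceed $1$ are simply clipped to $\Lambda_c=[n]^k$ and receive the whole database; their weight then satisfies $w_c=\Omega(\norm{\bw}{v})$, so the cost still stays within $O(L^{\textsf{upper}}_\bv)$.

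The main obstacle is the covering condition itself. Under the scaling $\alpha_c\propto w_c/\norm{\bw}{v}$ the total volume $\sum_c \prod_i n_{c,i}=n^k\sum_c \alpha_c^v$ matches $n^k$ only up to a constant factor, so a purely random placement of the $\Lambda_c$ would leave a constant fraction of points uncovered and, after a union bound over all $n^k$ candidate outputs, would force an extra $(\log n)^{1/v}$ overhead in $\alpha_c$ that does not fit inside $L^{\textsf{upper}}_\bv$. I would therefore build a structured, HyperCube-inspired tiling: partition $[n]$ along one coordinate axis into slabs of sizes proportional to the corresponding factors $\alpha_c^{v_i}$, and recursively tile within each slab, grouping machines into weight classes so that the cell sizes line up on every axis. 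Handling the integer rounding of the $n_{c,i}$, the case of clipped machines, and assembling these pieces into an exact covering with $\sum_c \alpha_c^v$ a constant is the step I would spend the most effort on; once it is in place, the cost bound $O(L^{\textsf{upper}}_\bv)$ is immediate from the load calculation above.
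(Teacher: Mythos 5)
Your overall architecture coincides with the paper's: partition the output space $[n]^k$ into hyperrectangles with side lengths $\lambda_{c,i}=(w_c/\norm{\bw}{v})^{v_i}\,n$, make machine $c$ responsible for output tuples in its box, ship each atom's tuples according to the projection of the box, and use the vertex-cover inequality $\sum_{x_i\in S_j}v_i\ge 1$ together with $w_c/\norm{\bw}{v}\le 1$ to bound the per-atom, per-machine load by $O(m\log n/\norm{\bw}{v})$. (One small imprecision: with $c_0=1$ the volumes sum to \emph{exactly} $n^k$, since $\sum_c(w_c/\norm{\bw}{v})^v=1$; no constant-factor slack is needed at that stage.) Up to this point your argument matches the paper's Theorem~\ref{theorem:partitionspace} and the final load calculation.

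The gap is that you leave the covering step — which you correctly identify as the main obstacle — as an unexecuted sketch, and this step is the actual technical content of the theorem. Your proposed ``slabs proportional to $\alpha_c^{v_i}$ with recursive tiling'' does not work as stated, because the side lengths $(w_c/\norm{\bw}{v})^{v_i}n$ are arbitrary reals raised to possibly different fractional powers $v_i$, so boxes from different weight classes have no divisibility relation and the recursive slab decomposition will not close up without gaps. The paper resolves this with a dedicated Packing Theorem: round every side up to a power of two (costing at most $2^{r_j}$ in each projected volume), observe that for $w_c\le w_{c'}$ \emph{every} side of $\Lambda_c$ is at most the corresponding side of $\Lambda_{c'}$ so that rounded boxes nest exactly, then group identical boxes into buckets, merge smaller buckets into larger ones, discard the leftovers of partially filled buckets (showing via a telescoping sum that the largest bucket still carries at least half the total volume), and finally stretch the assembled box to cover $[n]^k$ by per-dimension factors whose product is at most $2^{k+1}$. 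That accounting is what justifies the constant-factor blowup $2^{k+1+r_j}$ and hence the claimed $O(L^{\textsf{upper}}_\bv)$ bound; without it, your proof establishes the load bound only conditionally on a covering you have not constructed. Your observation that random placement would cost an extra $(\log n)^{1/v}$ is a good reason to seek a deterministic tiling, but it does not substitute for one.
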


\begin{theorem}[Sparse Inputs]
\label{theorem:UpperBoundSparse}
Let $q$ be a full CQ and $I$ be a matching database with domain $[n]$ and uniform relation sizes $m$. Then, for every fractional vertex cover $\bv$ we can evaluate $q$ in one round in the linear cost model with load (with high probability) $O(L^{\textsf{upper}}_\bv)$.
\end{theorem}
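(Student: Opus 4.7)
The plan is to reuse the geometric partition from the dense case and layer a random hashing step on top, invoking concentration to handle the adversarial placement of tuples in a matching database.

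For each machine $c$, I would assign a hyperrectangle $\Lambda_c \subseteq [n]^k$ with side length $\lambda_c^i := n \cdot (w_c / \norm{\bw}{v})^{v_i}$ in dimension $i$. A direct calculation gives $\vol{\Lambda_c} = n^k (w_c/\norm{\bw}{v})^v$ and $\sum_c \vol{\Lambda_c} = n^k$, so the total volume exactly matches the space that must be covered. I would then inherit the geometric covering of $[n]^k$ by these hyperrectangles used in the proof of Theorem~\ref{theorem:UpperBoundDense} (tolerating any constant-factor overlap).

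To handle sparsity, I would pick a uniformly random permutation $h_i : [n] \to [n]$ for each variable $x_i$ and route a tuple $t \in S_j$ to machine $c$ iff $(h_i(t_i))_{i \in S_j}$ lies in the projection of $\Lambda_c$ onto the attributes of $S_j$. For a matching database, each hashed tuple is uniform in $[n]^{r_j}$, so the expected number of tuples of $S_j$ received by $c$ equals
\[
m \prod_{i \in S_j} (\lambda_c^i / n) \;=\; m \cdot (w_c/\norm{\bw}{v})^{\sum_{i \in S_j} v_i} \;\leq\; m \cdot w_c/\norm{\bw}{v},
\]
where the inequality uses $\sum_{i \in S_j} v_i \geq 1$ (vertex cover) together with $w_c/\norm{\bw}{v} \leq 1$, which holds because $v \geq 1$ for any non-empty query. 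Multiplying by $O(\log n)$ bits per tuple, summing over the $O(1)$ atoms, and dividing by $w_c$ gives expected cost $O(m \log n / \norm{\bw}{v}) = O(L^{\textsf{upper}}_\bv)$.

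For the high-probability guarantee I would use a Chernoff-type bound. Within a single relation the matching property makes attribute values distinct, so under random permutations the indicators ``tuple $t$ of $S_j$ is routed to $c$'' are negatively correlated, and standard Chernoff inequalities give that the actual count at $c$ from $S_j$ is within a constant factor of its expectation whenever that expectation is $\Omega(\log n)$, and is $O(\log n)$ otherwise. A union bound over the $p \cdot l$ (machine, relation) pairs yields the stated load with high probability. The principal obstacle is the geometric covering step: since hyperrectangles of arbitrary side lengths cannot always be tiled even when their volumes sum correctly, the covering must be built explicitly by exploiting the multiplicative structure $\lambda_c^i = n (w_c/\norm{\bw}{v})^{v_i}$, for instance by a recursive slicing argument along the variables, or by paying a small overlap factor absorbed into the $O(\cdot)$.
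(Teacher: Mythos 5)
Your proposal is correct and follows essentially the same route as the paper: the same vertex-cover-weighted side lengths $\lambda_{c,i} = n(w_c/\norm{\bw}{v})^{v_i}$, reliance on the constant-factor packing of the hyperrectangles (the paper's Packing Theorem), randomized hashing of attribute values so that each tuple of $S_j$ lands at machine $c$ with probability at most $w_c/\norm{\bw}{v}$, and a Chernoff-plus-union-bound finish. The only cosmetic difference is that you use random permutations (yielding negatively correlated indicators) where the paper uses independent random hash functions (yielding independent Binomial counts for a matching database); both support the same concentration argument.
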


In the rest of the section, we will prove the above two theorems. We start with an overview of our approach, which is similar to the HyperCube algorithm albeit with some important modifications. We do not consider how to pick share exponents to decide the number of machines to put in each dimension of the hypercube. This concept is now not meaningful, since the machines are different. 

Instead, we consider the hyperrectangle $\Lambda=[n]^k$, which can be thought of as the space containing all possible output tuples.
Our algorithm partitions $\Lambda$ into hyperrectangles $\{\Lambda_c\}_{c \in [p]}$. We will use this partitioning to guide how machines will compute the output. To do this, we need a vector of $k$ functions $\mathbf{h}=(h_1,\dots,h_k)$, where $h_i:[n]\rightarrow[n]$. For the sparse data distribution, $\mathbf{h}$ will be a random hash function (essentially perturbing the input tuples). For the dense data distribution, $\mathbf{h}$ will be the identity function $\mathbf{h}(\mathbf{a})=\mathbf{a}$. 

Then, machine $c$ will be responsible for computing every tuple $\mathbf{a}\in[n]^k$ such that $\mathbf{h}(\mathbf{a})\in\Lambda_c$. To achieve this, our algorithm sends information about a tuple $a_j\in S_j$ to all machines $c$ where $(\pi_{S_j}\mathbf{h})(a_j)\in \pi_{S_j}\Lambda_c$, where $\pi_{S_j} \Lambda_c$ is the projection of the subspace to the attributes of $S_j$. Similar to the HyperCube algorithm, this guarantees that every potential output tuple $\mathbf{a}$, if it exists in the output, is produced at one machine, namely the machine $c$ with $\mathbf{h}(\mathbf{a})\in \Lambda_c$.

We will denote by $\lambda_{c,i}$ the side length of $\Lambda_c$ on variable $x_i$ for machine $c$. Moreover, we will use $\vol{\Lambda}$ to denote the volume of $\Lambda$, i.e., the number of points in the space. Note that $\vol{\pi_S\Lambda_c}=\prod_{x\in S}\lambda_{c,i}$.

There are two main aspects to describe of our algorithm. The first is how to pick the side lengths $\lambda_{c,i}$ for each machine and dimension to minimize the load -- this corresponds to minimizing the projections $\pi_{S_j} \Lambda_c$ of the hyperrectangles. The second is how to geometrically position the hyperrectangles $\Lambda_c$ in $\Lambda$ to cover the whole space. We describe these two components in the next two sections.

\subsection{Partitioning the Space}

\begin{theorem}
    \label{theorem:partitionspace}
    Let $\mathbf{v}=(v_1,...,v_k)$ be any fractional vertex cover of a CQ $q$. Let $v=\sum_{j \in [k]}v_i$. For every machine $c$, let the side length of a hyperrectangle $\Lambda_c$ in $\Lambda$ along some variable $x_i$ be $$ \lambda_{c,i} :=\left(\frac{w_c}{\norm{\mathbf{w}}{v}}\right)^{v_i}n $$
    Then, the following two properties hold:
    \begin{enumerate}
        \item $\sum_{c\in[p]} |\Lambda_c| = n^k$;
        \item for every machine $c$ and every atom $S$ with arity $r$: $\vol{\pi_{S} \Lambda_c} \leq \frac{w_c}{\norm{\mathbf{w}}{v}} \cdot n^r $
    \end{enumerate}
\end{theorem}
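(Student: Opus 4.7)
The plan is to verify both properties by direct algebraic manipulation of the given side lengths, relying only on the fractional vertex cover inequality and the definition of the weighted $v$-norm $\norm{\bw}{v} = \bigl(\sum_{c\in[p]} w_c^v\bigr)^{1/v}$.

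For property 2, I would first observe that the projection of a hyperrectangle onto a subset of coordinates is itself a hyperrectangle, so $\vol{\pi_S \Lambda_c} = \prod_{x_i \in S} \lambda_{c,i}$. Substituting the given formula yields
\[
\vol{\pi_S \Lambda_c} \;=\; \left(\frac{w_c}{\norm{\bw}{v}}\right)^{\sum_{x_i \in S} v_i} \cdot n^r.
\]
Since $w_c^v \leq \sum_{c'\in[p]} w_{c'}^v = \norm{\bw}{v}^v$, the base $w_c/\norm{\bw}{v}$ lies in $[0,1]$. The fractional vertex cover condition gives $\sum_{x_i \in S} v_i \geq 1$, so raising a base in $[0,1]$ to a weakly larger exponent only decreases its value, producing the desired bound $\vol{\pi_S \Lambda_c} \leq (w_c/\norm{\bw}{v}) \cdot n^r$.

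For property 1, I would compute the total volume of $\Lambda_c$ directly:
\[
\vol{\Lambda_c} \;=\; \prod_{i=1}^{k} \lambda_{c,i} \;=\; \left(\frac{w_c}{\norm{\bw}{v}}\right)^{\sum_{i} v_i} \cdot n^k \;=\; \frac{w_c^v}{\norm{\bw}{v}^v} \cdot n^k.
\]
Summing over $c \in [p]$ and using $\norm{\bw}{v}^v = \sum_{c\in[p]} w_c^v$ yields $\sum_{c} \vol{\Lambda_c} = n^k$, as required.

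Both verifications reduce to routine algebra once the definitions are unpacked, so I do not anticipate a substantive obstacle. The only subtlety worth flagging is that property 1 asserts equality of \emph{total volume} rather than the existence of an actual geometric partition of $\Lambda$; producing a concrete tiling of $\Lambda$ by the hyperrectangles $\Lambda_c$ with the prescribed side lengths is deferred to the construction in the next subsection. In that sense the theorem here only plays the role of establishing that the side lengths are chosen at a feasible scale (and with the right per-atom projection guarantees) for such a tiling to be attempted.
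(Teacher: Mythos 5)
Your proof is correct and follows essentially the same route as the paper's: both verify property 1 by summing $\vol{\Lambda_c} = (w_c/\norm{\bw}{v})^v n^k$ over machines and property 2 by combining $w_c/\norm{\bw}{v} \leq 1$ with the vertex-cover inequality $\sum_{x_i \in S} v_i \geq 1$. Your closing remark that the theorem only establishes feasible total volume rather than an actual tiling matches the paper's own caveat, which defers the geometric positioning to the packing subsection.
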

\begin{proof}
    We start by showing that the assignment above covers all of $\Lambda$, by summing the covered volume for each machine.
    \[
    \sum_{c\in[p]} \vol{\Lambda_c} =
    \sum_{c\in[p]}\prod_{j\in[k]}\lambda_{c,i}
    =\sum_{c\in[p]}\prod_{j\in[k]}\left[\left(\frac{w_c}{\norm{\mathbf{w}}{v}}\right)^{v_i}n\right]
    =\sum_{c\in[p]}\left[\left(\frac{w_c}{\norm{\mathbf{w}}{v}}\right)^vn^k\right]
    =n^k\frac{\sum_{c\in[p]}w_c^v}{\sum_{c\in[p]}w_c^v}=n^k
    \]
    Next, we show the bound on the volume of the projected hyperrectangle on each atom. We focus on some atom $S$ with arity $r$. Then, we have:
    $$
    \vol{\pi_{S} \Lambda_c} = \prod_{x_i \in S} \lambda_{c,i} =
    \left(\frac{w_c}{\norm{\mathbf{w}}{v}}\right)^{\sum_{x_i\in S}v_i}n^r
    $$
    Note that $\frac{w_c}{\norm{\bw}{v}}\leq 1$. Furthermore, since $\mathbf{v}$ is a vertex cover, $\sum_{x_i\in S}v_i \geq 1$. Hence, we get the desired inequality.
\end{proof} 

The above lemma provides the appropriate dimensions of each hyperrectangle $\Lambda_c$, but it does not tell us how these hyperrectangles must be positioned geometrically within $\Lambda$ such that they cover the whole space.

\begin{example}
    Consider the Cartesian product $q(x,y) \dl S_1(x),S_2(y)$. We have $p=17$ machines. There are 2 machines with $w=4$, 1 machine with $w=3$, 3 machines with $w=2$, and 11 machines with $w=1$. Consider the vertex cover with $v_x =v_y=1$. Then, $\norm{\bw}{u}=8$. This gives that machines with $w=4$ should have side lengths $n/2$, machines with $w=3$ should have side lengths $3n/8$, machines with $w=2$ side lengths $n/4$ and finally $w=1$ should have side lengths $n/8$. The figure below shows one way to position the rectangles to cover $\Lambda$. Each rectangle is labeled with the weight of the machine that occupies that space.
    \begin{figure}[H]
        \centering
        \scalebox{0.55}{\begin{tikzpicture}
            \draw[very thick] (0,0) rectangle (8,8);
            
            \draw[thick] (4,0) rectangle (8,4);
            \node at (6,2) {4};
            \draw[thick] (0,4) rectangle (4,8);
            \node at (2,6) {4};
            
            \draw[thick] (0,0) rectangle (3,3);
            \node at (1.5,1.5) {3};
            
            \draw[thick] (4,6) rectangle (6,8);
            \node at (5, 7) {2};
            \draw[thick] (6,4) rectangle (8,6);
            \node at (7,5) {2};
            \draw[thick] (6,6) rectangle (8,8);
            \node at (7,7) {2};
            
            \draw[thick] (0,3) rectangle (1,4);
            \node at (0.5,3.5) {1};
            \draw[thick] (1,3) rectangle (2,4);
            \node at (1.5,3.5) {1};
            \draw[thick] (2,3) rectangle (3,4);
            \node at (2.5,3.5) {1};
            \draw[thick] (3,3) rectangle (4,4);
            \node at (3.5,3.5) {1};
            \draw[thick] (3,2) rectangle (4,3);
            \node at (3.5,2.5) {1};
            \draw[thick] (3,1) rectangle (4,2);
            \node at (3.5,1.5) {1};
            \draw[thick] (3,0) rectangle (4,1);
            \node at (3.5,0.5) {1};
            
            \draw[thick] (4,4) rectangle (5,5);
            \node at (4.5,4.5) {1};
            \draw[thick] (5,4) rectangle (6,5);
            \node at (5.5,4.5) {1};
            \draw[thick] (4,5) rectangle (5,6);
            \node at (4.5,5.5) {1};
            \draw[thick] (5,5) rectangle (6,6);
            \node at (5.5,5.5) {1};
        \end{tikzpicture}}
        \caption{One way to pack the machines in the example.}
        \label{fig:packcartesianprodexample}
    \end{figure}
\end{example}

In the example above we can perfectly fit the rectangles together to cover $\Lambda$. In the case when all hyperrectangles have the same dimensions, such as when machines have the same weight $w_c$, packing is a trivial problem. In general, there might not be a perfect way to fit the hyperrectangles together to cover the full space. This will require us to increase the size of some of the hyperrectangles $\Lambda_c$, but the volumes will be increased only by a constant factor.

\subsection{Packing Hyperrectangles}
\label{subsec:Packing}

In this subsection, we will show how to geometrically position the hyperrectangles $\{\Lambda_1,...,\Lambda_p\}$ to cover $\Lambda$. During this process, we will have to adjust the dimensions of each $\Lambda_c$ so that the hyperrectangles can fit together. This will result in adjusted hyperrectangles $\{\Bar{\Lambda}_1,...,\Bar{\Lambda}_p\}$, however, we only have to pay a constant factor increase in their dimensions. In particular:

\begin{theorem}[Packing Theorem]\label{theorem:packingtheorem}
The hyperrectangles $\{\Lambda_1,...,\Lambda_p\}$ can be packed to cover $\Lambda$ by adjusting hyperrectangles to $\{\Bar{\Lambda}_1,...,\Bar{\Lambda}_p\}$ such that for all relations $S_j$ with arity $r_j$ and machines $c$, $\vol{\pi_{S_j}\Bar{\Lambda}_c} \leq 2^{k+1+r_j} \cdot \vol{\pi_{S_j}\Lambda_c}$.
\end{theorem}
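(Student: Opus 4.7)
My plan proceeds in two stages: first I quantize the side lengths so that every hyperrectangle has dyadic sides, then I place the quantized hyperrectangles recursively using a shelf-packing argument aligned with the dyadic grid.

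\emph{Stage 1 (dyadic rounding).} For each machine $c$ and dimension $i$, set $\bar\lambda_{c,i}^{(1)} := 2^{\lceil\log_2 \lambda_{c,i}\rceil}$, so that $\lambda_{c,i}\le\bar\lambda_{c,i}^{(1)}\le 2\lambda_{c,i}$. Since only $r_j$ sides appear in the projection onto atom $S_j$, this step multiplies $\vol{\pi_{S_j}\Lambda_c}$ by at most $2^{r_j}$. To simplify the ambient box I also replace $n$ by the next power of two $n' := 2^{\lceil\log_2 n\rceil}\le 2n$; a cover of $[n']^k$ yields a cover of $\Lambda=[n]^k$, so this is lossless. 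By Theorem~\ref{theorem:partitionspace} the original volumes sum to $n^k$, hence after rounding $\sum_c\vol{\bar\Lambda_c^{(1)}}\le 2^k n^k \le 2^k(n')^k$.

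\emph{Stage 2 (recursive shelf packing).} Every side length is now a power of two dividing $n'$. I pack the $\bar\Lambda_c^{(1)}$ into $[n']^k$ inductively on $k$: sort them in decreasing order of $\bar\lambda_{c,1}^{(1)}$, group consecutive rectangles with a common $\bar\lambda_{c,1}^{(1)}$ into ``slabs'' perpendicular to dimension~1, and recurse on dimensions $2,\dots,k$ inside each slab. The dyadic alignment ensures that the residual space after placing each slab is again a disjoint union of dyadic boxes, so the recursive call operates on an instance of the same form; at the base $k=1$, intervals of dyadic lengths fit end to end. A standard shelf-packing count then shows the arrangement fits inside a box of side $2n'$: at each of the $k$ recursive levels a factor of at most two in linear size is lost. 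Finally I enlarge each placed rectangle to fill its slab up to the boundary of $[n']^k$, which stretches each side by at most another factor of two and converts the packing into a cover.

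Collecting factors, Stage~1 contributes $2^{r_j}$ to each projection and Stage~2 contributes a further $2^{k+1}$ (the extra two absorbs the $n\to n'$ rounding), giving the claimed $\vol{\pi_{S_j}\bar\Lambda_c}\le 2^{k+1+r_j}\vol{\pi_{S_j}\Lambda_c}$. The main obstacle is the Stage~2 induction: because hyperrectangles can have very different aspect ratios, the slabs produced at the top level differ widely in cross-section, and one must verify that the inductive hypothesis still yields the same per-level constant overhead inside each slab. The saving grace is the dyadic alignment, which guarantees every residual region after shelving is itself a dyadic box, so the recursion always operates on a sub-problem of the same form and the factor-of-two overheads multiply cleanly across the $k$ levels.
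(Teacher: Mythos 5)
Your Stage 1 matches the paper exactly (round each side up to a power of two, paying $2^{r_j}$ per projection), but Stage 2 has a genuine gap. The two load-bearing claims of the shelf-packing recursion --- that ``at each of the $k$ recursive levels a factor of at most two in linear size is lost'' and that enlarging each placed rectangle to fill its slab ``stretches each side by at most another factor of two'' --- are asserted, not proved, and for a general collection of dyadic hyperrectangles of total volume $n^k$ they are false: a slab that ends up nearly empty (because the rectangles assigned to it have small cross-sections in the remaining dimensions) forces an unbounded per-rectangle stretch, and wildly different aspect ratios (e.g.\ a $2^a\times 2^b$ shape alongside a $2^b\times 2^a$ shape) cannot be reconciled by dyadic alignment alone. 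You acknowledge the aspect-ratio obstacle in your last paragraph but do not resolve it; ``the recursion operates on a sub-problem of the same form'' does not bound how much any individual rectangle must grow. Your final accounting also does not cleanly produce the stated $2^{k+1}$: a bounding box of side $2n'$ is a statement about total slack, not about the enlargement of each $\Lambda_c$.

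The missing ingredient is the monotonicity property that the paper isolates as Lemma~\ref{lemma:machineordering}: because $\lambda_{c,i}=(w_c/\norm{\bw}{v})^{v_i}n$ is increasing in $w_c$ for every $i$ simultaneously, the rounded shapes form a chain under coordinatewise comparison, and every ratio $\hat{\lambda}_{c',i}/\hat{\lambda}_{c,i}$ is a power of two. Hence any smaller shape tiles any larger shape \emph{exactly}, with no waste. The paper exploits this to merge all hyperrectangles into copies of the largest shape (buckets), shows the leftover volume in the smaller buckets telescopes to at most $V[B_b]$, pairwise-merges the largest bucket into a single hyperrectangle $R$ with $\vol{R}>n^k/2$ and all sides at most $2n$, and only then applies one global scaling with $\prod_{i\in[k]}f_i\leq 2^{k+1}$ --- which bounds the blow-up of every constituent rectangle uniformly. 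If you want to salvage a shelf-packing route, you would need to prove an analogue of this chain property and use it to control the emptiness of each slab; without it, Stage 2 does not go through.
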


Except for in this subsection, we will always denote the hyperrectangle for machine $c$ as $\Lambda_c$, even after the packing algorithm has run.

A condensed description of the packing algorithm can be seen in~\autoref{algorithm:Packing}. The algorithm sets dimensions of $\Lambda_c$ according to~\autoref{theorem:partitionspace}. Each side of each hyperrectangle is then rounded independently to the nearest higher power of two. This gives some adjusted hyperrectangles, $\{\hat{\Lambda}_1,...,\hat{\Lambda}_p\}$. The hyperrectangles are then put into buckets, where each bucket contains all hyperrectangles of the same size. Denote the number of buckets as $b$.

Because the sides of hyperrectangles have been rounded to powers of two, we can always, if we have enough hyperrectangles in some small bucket, merge them into one hyperrectangle that fits in a larger bucket. We will order buckets in increasing order of hyperrectangle size. Starting with the first bucket, we will merge as many hyperrectangles as possible into hyperrectangles that fit in the second bucket. We do this for each consecutive pair of buckets until the last bucket is reached.

In the next step, we take the largest bucket, and pairwise merge hyperrectangles into hyperrectangles of twice the volume, by stacking them in a minimum dimension. This gives a new bucket of hyperrectangles. We repeat this procedure until there is just one hyperrectangle $R$ in the obtained bucket.

We will now take this hyperrectangle $R$ and use it to fill $\Lambda$. Some dimensions of $R$ may be smaller than $n$. In such a case, we just scale up $R$ in those dimensions to be exactly $n$.

\begin{algorithm}
\caption{Packing Algorithm}
\label{algorithm:Packing}
\begin{algorithmic}[1]
    \State $\Lambda_1,\dots,\Lambda_p \gets \text{According to  }\autoref{theorem:partitionspace}$.
    \State $\hat{\Lambda}_1,\dots,\hat{\Lambda}_p \gets \text{Round each side to higher power of two}$.
    \State $B_1,\dots,B_b \gets \text{Buckets of $\hat{\Lambda}_c$ of similar dimensions}$.
    \For{$B_t\in \{B_1,\dots,B_{b-1}\}$}
        \State Merge as many rectangles from $B_t$ into $B_{t+1}$ as possible.
    \EndFor
    \State $t \gets b$
    \While{$|B_t|>1$}
        \State $B_{t+1} \gets \text{Pairwise merge hyperrectangles in $B_t$ in the smallest dimension}$.
        \State $t \gets t+1$.
    \EndWhile
    \State $R \gets \text{The one hyperrectangle in $B_t$}$.
    \State Scale $R$ up to cover $\Lambda$.
    \State\Return $R$.
\end{algorithmic}
\end{algorithm}

We now analyze the details of the algorithm. Recall that the packing algorithm starts by rounding all sides $\lambda_{c,i}$ to the nearest higher power of two, $\hat{\lambda}_{c,i}$, obtaining rounded hyperrectangles $\hat{\Lambda}_c$. 
That is, for each machine $c$ and dimension $i$ we find $\alpha_{c,i}$ such that:
$ 2^{\alpha_{c,i}-1} < \lambda_{c,i} \leq 2^{\alpha_{c,i}}=\hat{\lambda}_{c,i} $
Each side of $\Lambda_c$ is rounded independently. This means that for $\hat{\Lambda}_c$, $\hat{\Lambda}_{c'}$ where $i\neq i'$, it is possible that $\lambda_{c,i}=\lambda_{c',i}$ for some but not all variables $x_i$.

\begin{lemma}
    \label{lemma:machineordering}
    For any two machines with weights $w_c$ and $w_{c'}$ such that $w_c\leq w_{c'}$, for any variable $x_i$, $\hat{\lambda}_{c,i}\leq \hat{\lambda}_{c',i}$.
\end{lemma}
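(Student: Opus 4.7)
The plan is to show the unrounded dimensions already satisfy $\lambda_{c,i} \le \lambda_{c',i}$ and then observe that rounding each side up to the nearest power of two preserves this inequality.

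First I would substitute the explicit formula from \autoref{theorem:partitionspace}:
$$ \lambda_{c,i} = \left(\frac{w_c}{\norm{\bw}{v}}\right)^{v_i} n, \qquad \lambda_{c',i} = \left(\frac{w_{c'}}{\norm{\bw}{v}}\right)^{v_i} n. $$
Since $\bv$ is a fractional vertex cover, $v_i \ge 0$. Since $\bw$ has positive entries, $\norm{\bw}{v} > 0$, so both $w_c/\norm{\bw}{v}$ and $w_{c'}/\norm{\bw}{v}$ are nonnegative reals (and in fact at most $1$). The map $t \mapsto t^{v_i}$ is non-decreasing on $[0,\infty)$ whenever $v_i \ge 0$ (constant if $v_i=0$, strictly increasing otherwise), so $w_c \le w_{c'}$ implies $\lambda_{c,i} \le \lambda_{c',i}$.

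Next I would argue that the rounding step is monotone. Writing $\alpha_{c,i} = \lceil \log_2 \lambda_{c,i} \rceil$ and $\alpha_{c',i} = \lceil \log_2 \lambda_{c',i} \rceil$, the inequality $\lambda_{c,i} \le \lambda_{c',i}$ gives $\log_2 \lambda_{c,i} \le \log_2 \lambda_{c',i}$ and hence $\alpha_{c,i} \le \alpha_{c',i}$, so
$$ \hat{\lambda}_{c,i} = 2^{\alpha_{c,i}} \le 2^{\alpha_{c',i}} = \hat{\lambda}_{c',i}, $$
which is exactly the claim. The proof is essentially two monotonicity observations, so there is no real obstacle; the only subtlety to flag is the edge case $v_i = 0$, where both hyperrectangles have the same side length $n$ in dimension $i$ and the inequality holds with equality both before and after rounding.
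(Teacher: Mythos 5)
Your proof is correct and follows essentially the same route as the paper: establish $\lambda_{c,i}\le\lambda_{c',i}$ from the explicit formula via monotonicity of $t\mapsto t^{v_i}$, then note that rounding up to the nearest power of two preserves the ordering. You simply spell out the monotonicity and the $v_i=0$ edge case more explicitly than the paper does.
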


\begin{proof}
    Since $w_c\leq w_{c'}$, we know that $
    \left(\frac{w_c}{\norm{\bw}{v}}\right)^{v_i}n
    \leq \left(\frac{w_{c'}}{\norm{\bw}{v}}\right)^{v_i}n$.
    This means that $\lambda_{c,i}\leq\lambda_{c',i}$. Then it is also true that $\hat{\lambda}_{c,i}\leq \hat{\lambda}_{c',i}$.
\end{proof}

We now create buckets $B_1,...,B_b$ of all hyperrectangles $\{\hat{\Lambda}_c\}_{c\in[p]}$, one bucket for each hyperrectangle with the same dimensions. This means that for each hyperrectangle $\hat{\Lambda}_c, \hat{\Lambda}_{c'}$ in the same bucket, for all $x_i$, $\hat{\lambda}_{c,i}=\hat{\lambda}_{c',i}$. We order the buckets in increasing order of the volume of the hyperrectangles in it, denoted as $V[B_t]$.
%Because of the previous lemma, we know that for two hyperrectangles $\hat{\Lambda}_c\in B_t, \hat{\Lambda}_{c'}\in B_{t'}$ such that $i<i'$, we have that for all $i\in[k]$, $\hat{\lambda}_{i,x}\leq\hat{\lambda}_{i',x}$.

\begin{lemma}
    Let $B_t, B_{t'}$ be buckets with $t<t'$. 
    Then, $V[B_{t'}]/V[B_t]$ hyperrectangles from $B_t$ can be packed to form one hyperrectangle with the same shape as the hyperrectangles in $B_{t'}$.
\end{lemma}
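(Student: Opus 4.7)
The plan is to show two structural facts: (i) the distinct rounded shapes appearing across all machines form a chain under componentwise $\leq$, and (ii) within this chain every ratio of corresponding side lengths is a positive integer power of two, so a smaller shape tiles a larger shape perfectly by a regular grid.

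First, I would invoke Lemma~\ref{lemma:machineordering}. If we list the machines in non-decreasing order of weight $w_c$, the sequence of rounded hyperrectangles $\hat{\Lambda}_c$ is monotone componentwise: every side length $\hat{\lambda}_{c,i}$ is non-decreasing in $c$. Hence the set of distinct rounded shapes that actually appear (which is exactly the set of buckets) is totally ordered componentwise. In particular, for any two buckets $B_t$ and $B_{t'}$, one shape dominates the other in every coordinate. Combined with the convention that buckets are indexed by increasing volume, this forces the following: if $t < t'$, then the shape in $B_{t'}$ dominates the shape in $B_t$ in every dimension. (Otherwise the chain property would give the reverse domination, which is incompatible with $V[B_t] < V[B_{t'}]$ unless the buckets coincide.)

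Second, I would use that each $\hat{\lambda}_{c,i}$ is a power of two by construction. Fix any representative hyperrectangle from $B_t$ with sides $\hat{\lambda}_{c,i}$ and any representative from $B_{t'}$ with sides $\hat{\lambda}_{c',i}$. By the previous paragraph $\hat{\lambda}_{c,i} \leq \hat{\lambda}_{c',i}$ for every $i$, so each ratio $\rho_i := \hat{\lambda}_{c',i}/\hat{\lambda}_{c,i}$ is a positive integer power of two. Place a $\rho_1 \times \cdots \times \rho_k$ regular grid of copies of the $B_t$-shape aligned along the axes inside the $B_{t'}$-shape. Along dimension $i$, exactly $\rho_i$ copies fit end-to-end with total length $\rho_i \cdot \hat{\lambda}_{c,i} = \hat{\lambda}_{c',i}$, so the grid tiles the $B_{t'}$-shape perfectly. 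The total number of copies is
\[
\prod_{i \in [k]} \rho_i \;=\; \prod_{i \in [k]} \frac{\hat{\lambda}_{c',i}}{\hat{\lambda}_{c,i}} \;=\; \frac{V[B_{t'}]}{V[B_t]},
\]
which is the claimed count.

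The only real obstacle is the chain argument in the first step — specifically ruling out the case where two buckets have incomparable shapes. That incomparability is excluded precisely because Lemma~\ref{lemma:machineordering} gives componentwise monotonicity along the weight order, and every bucket is witnessed by some machine; any two witnesses are comparable by weight, so their shapes are comparable componentwise. Once the chain structure is in hand, the power-of-two rounding makes the tiling immediate, with no further arithmetic needed.
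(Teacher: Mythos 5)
Your proof is correct and follows essentially the same route as the paper's: invoke Lemma~\ref{lemma:machineordering} for componentwise domination, use the power-of-two rounding to get that each side ratio is an integer power of two, and tile the larger shape by a regular grid of $\prod_i \rho_i = V[B_{t'}]/V[B_t]$ copies (the paper phrases the grid as stacking dimension by dimension). If anything, your chain argument is slightly more careful than the paper's, since it explicitly rules out incomparable shapes and justifies why the larger-volume bucket must dominate componentwise rather than just citing the lemma.
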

\begin{proof}
    Let $\hat{\Lambda}_c\in B_t, \hat{\Lambda}_{c'}\in B_{t'}$. By~ \autoref{lemma:machineordering}, all dimensions of $\hat{\Lambda}_{c'}$ are at least as big the corresponding dimension of $\hat{\Lambda}_c$. More specifically, since the side lengths are of the form $2^{\alpha_{c,i}}$, we know that $\hat{\lambda}_{c',i}=2^{a_{i,t}} \cdot \hat{\lambda}_{c,i}$ for some $a_{i,t}\in\mathbb{Z}^+$.
    For some dimension $i$, take $2^{a_{i,t}}$ hyperrectangles from the bucket $B_t$ and stack them together in the dimension $i$. This will create one hyperrectangle where dimension $i$ is the same as dimension $i$ in $\hat{\Lambda}_{c'}$. We now continue this process across all the other dimensions. Let $a_t=\sum_{i\in[k]}a_{i,t}$. Then, this process uses $2^{a_t} = V[B_{t'}]/V[B_t]$ hyperrectangles of shape $\hat{\Lambda}_c$. Note that for at least one $i$, $a_{i,t}>0$, since otherwise $\hat{\Lambda}_c=\hat{\Lambda}_{c'}$, and then they are in the same bucket.
\end{proof}

The above lemma means that for each adjacent pair of buckets, $B_t, B_{t+1}$, if $B_t$ contains at least $V[B_{t+1}]/V[B_t]$ hyperrectangles, we can merge them into one hyperrectangle in $B_{t+1}$. The packing algorithm will merge as many hyperrectangles as possible, starting with the smallest bucket. When there are no merges left possible, each bucket $B_t$ has at most $V[B_{t+1}]/V[B_t]-1$ hyperrectangles, since otherwise another merge is possible. We can now show that by only using the rectangles in the largest bucket $B_b$, we can almost cover the whole output space.

%\begin{lemma}    \label{smallbucketscontainedlemma}
%    Assume no merges are possible for the sequence of bucket $B_1,...,B_b$. Then, $\sum_{i=1}^{b-1}V[B_t](2^{a_i-1})<V[B_{b}]$.
%\end{lemma}
%\begin{proof}
%    We prove this with induction. For $b=2$, we want to show $V[B_1](2^{a_1}-1)<V[B_2]$. This is true since $V[B_1]2^{a_1}=V[B_2]$.
%    Assume the equation is true for some $b=n$. Then:
%    \begin{align*}
%    \sum_{i=1}^{n}V[B_t](2^{a_i}-1)
%    =V[B_n](2^{a_n}-1)+\sum_{i=1}^{n-1}V[B_t](2^{n_{cj}}-1) \\
%    <V[B_n](2^{a_n}-1)+V[B_{n}]
%    =V[B_n]2^{a_n}
%    =V[B_{n+1}]
%    \end{align*}
%\end{proof}

\begin{lemma}
    \label{lemma:BbCover}
    Let $p_t$ be the number of hyperrectangles in bucket $B_t$, for $i\in \{1, \dots,b\}$. Then, $\vol{\Lambda} < (1+p_b)V[B_b]$.
\end{lemma}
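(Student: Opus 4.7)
The plan is to bound the total volume of all hyperrectangles remaining after the merging loop and observe that this total is at least $\vol{\Lambda}$; any upper bound on the total then transfers to $\vol{\Lambda}$. For the lower bound side, I would note that $\sum_{c\in[p]}\vol{\Lambda_c}=n^k=\vol{\Lambda}$ by~\autoref{theorem:partitionspace}, that rounding each side up to a power of two only increases each $\vol{\hat{\Lambda}_c}$, and that the merging step preserves total volume (each merge fuses several rectangles into one rectangle of equal total volume). Consequently $\vol{\Lambda} \le \sum_{t=1}^{b} p_t V[B_t]$.

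For the upper bound side, I would appeal to the termination condition of the merging loop: for every $t<b$, strictly fewer than $V[B_{t+1}]/V[B_t]$ rectangles can remain in $B_t$, since otherwise one more merge into $B_{t+1}$ would still be possible. Combined with the fact that the ratio $V[B_{t+1}]/V[B_t]$ is a positive integer (in fact a power of two, at least $2$), this yields the sharper bound $p_t V[B_t] \le V[B_{t+1}] - V[B_t]$ for each $t<b$. Substituting produces a telescoping estimate:
$$ \vol{\Lambda} \le \sum_{t=1}^{b} p_t V[B_t] = p_b V[B_b] + \sum_{t=1}^{b-1} p_t V[B_t] \le p_b V[B_b] + \sum_{t=1}^{b-1}\bigl(V[B_{t+1}]-V[B_t]\bigr) = p_b V[B_b] + V[B_b] - V[B_1] < (1+p_b)V[B_b], $$
where the final strict inequality uses $V[B_1]>0$.

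The subtle point I want to pin down is exactly this integrality of $V[B_{t+1}]/V[B_t]$: it is what upgrades the termination bound $p_t<V[B_{t+1}]/V[B_t]$ into the form $p_t V[B_t]\le V[B_{t+1}]-V[B_t]$ that telescopes cleanly. The justification goes through the rounded side lengths being powers of two together with~\autoref{lemma:machineordering}, which ensures that any two buckets are comparable dimension-by-dimension, so each ratio $V[B_{t+1}]/V[B_t]$ factors as a product of nonnegative powers of two with strictly positive total exponent. Everything else is bookkeeping of volumes across the rounding and merging steps, so this integrality observation is the only real hinge of the argument.
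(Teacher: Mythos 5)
Your proof is correct and follows essentially the same route as the paper: bound $\vol{\Lambda}$ by the total bucket volume, use the termination condition $p_t < V[B_{t+1}]/V[B_t]$ (upgraded via integrality to $p_tV[B_t]\le V[B_{t+1}]-V[B_t]$), and telescope. Your explicit justification of the integrality step and of the strictness of the final inequality is slightly more careful than the paper's writeup, but the argument is the same.
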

\begin{proof}
    We use the observation that across all buckets the total volume is at least $\vol{\Lambda}$. Then:
    \begin{align*}
    \vol{\Lambda} & \leq \sum_{t=1}^bp_tV[B_t]
     = p_bV[B_b]+\sum_{t=1}^{b-1}p_tV[B_t] 
     \leq p_bV[B_b]+\sum_{t=1}^{b-1}\left(\frac{V[B_{t+1}]}{V[B_t]}-1\right)V[B_t] \\
    & =  p_bV[B_b]+\sum_{t=1}^{b-1}(V[B_{t+1}]-V[B_t]) 
    \leq (1+p_b)V[B_b] 
    \end{align*}
    where the second inequality holds because $p_t < V[B_{t+1}]/V[B_t]$ and the last inequality holds because it is is a telescopic sum.
\end{proof}

We will now pack $\Lambda$ using only the $p_b$ hyperrectangles in the last bucket $B_b$. Let $\hat{n}$ be the domain $n$ rounded to the nearest higher power of two. Note that no dimension of a hyperrectangle $\hat{\Lambda}_c\in B_b$ is greater than $\hat{n}$. This is because $\lambda_{c,i}\leq n$, so $\hat{\lambda}_{c,i}\leq \hat{n}$.
We will merge the hyperrectangles in $B_b$ the following way. Find the minimum dimension $\hat{\lambda}_{c,i}$ of $\hat{\Lambda}_c\in B_b$, and pairwise merge hyperrectangles in $B_b$ into hyperrectangles of volume $2V[B_b]$ by putting them adjacent in dimension $i$. This creates a new bucket $B_{b+1}$, with $\lfloor p_b/2\rfloor$ hyperrectangles, and at most one hyperrectangle in $B_b$ is left unmerged. This process can be repeated on hyperrectangles in $B_{b+1}$, until a bucket $B_{b+d}$ is obtained, where $B_{b+d}$ contains one hyperrectangle, so no further merges are possible. We will now show that we can cover $\Lambda$ using just the one hyperrectangle in $B_{b+d}$, by scaling it up by at most a constant factor.

\begin{lemma}
$V[B_{b+d}]> \vol{\Lambda}/{2}$.
\end{lemma}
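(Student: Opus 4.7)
The plan is to track how the volume of the single remaining hyperrectangle in $B_{b+d}$ grows with the number of merging rounds $d$, and then compare against the bound on $|\Lambda|$ from Lemma~\ref{lemma:BbCover}.

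First I would record the elementary recurrences governing the merging loop. Each pairwise merge in the smallest dimension exactly doubles the volume of a hyperrectangle, so $V[B_{b+j}] = 2^j\,V[B_b]$ for every $j = 0, 1, \dots, d$; in particular $V[B_{b+d}] = 2^d\,V[B_b]$. Simultaneously, the number of hyperrectangles in the bucket evolves as $p_{b+j+1} = \lfloor p_{b+j}/2 \rfloor$, starting from $p_b$ and terminating at $p_{b+d} = 1$.

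Next I would bound $p_b$ in terms of $d$. Working backwards from $p_{b+d} = 1$ and unrolling the floor-halving recurrence by induction on $d$, we get $p_{b+d-j} \leq 2^{j+1} - 1$ for all $j \leq d$, so in particular $p_b \leq 2^{d+1} - 1 < 2^{d+1}$. Thus $1 + p_b \leq 2^{d+1}$.

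Combining the two ingredients with Lemma~\ref{lemma:BbCover} then gives
\[
|\Lambda| < (1+p_b)\,V[B_b] \leq 2^{d+1}\,V[B_b] = 2\,V[B_{b+d}],
\]
which is exactly the desired inequality $V[B_{b+d}] > |\Lambda|/2$. The only mildly tricky step is the backward induction linking $p_b$ to $d$ through the floored halving; beyond that, the argument is a direct bookkeeping calculation, so I do not expect any serious obstacle.
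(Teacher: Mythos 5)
Your proof is correct, and it reaches the conclusion by a genuinely different piece of bookkeeping than the paper. The paper argues via volume conservation: the total volume $\beta V[B_b]$ entering the final merging phase equals the volume of the leftover (at most one per level) hyperrectangles plus $V[B_{b+d}]$, and a telescoping sum over $\sum_{t=b}^{b+d-1} V[B_t] \leq V[B_{b+d}] - V[B_b]$ yields $V[B_{b+d}] \geq (\beta+1)V[B_b]/2$, which combines with \autoref{lemma:BbCover}. You instead ignore volumes of leftovers entirely and track only the count recurrence $p_{b+j+1} = \lfloor p_{b+j}/2 \rfloor$ backwards from $p_{b+d}=1$, getting $p_b \leq 2^{d+1}-1$ and hence $(1+p_b)V[B_b] \leq 2^{d+1}V[B_b] = 2V[B_{b+d}]$; your backward induction step ($p \leq 2\lfloor p/2\rfloor + 1$) is valid, and the strictness of the final inequality is correctly inherited from \autoref{lemma:BbCover}. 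The two arguments are of comparable length; yours is arguably cleaner in that it avoids the paper's slightly awkward definition of $\beta$ and the need to reason about which buckets retain a leftover piece, while the paper's version makes the "no volume is lost" intuition more explicit.
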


\begin{proof}
Let $\beta$ be the number of hyperrectangles in $B_b$ after the first merge. Denote by $p_t$ the number of hyperrectangles in bucket $B_t$ after the last merge step for each $t \in \{b,\dots, b+d\}$, $p_t\in\{0,1\}$. Since $V[B_{t+1}]/V[B_t]=2$,
    $$
    \sum_{t=b}^{b+d-1}p_tV[B_t] 
    \leq \sum_{t=b}^{b+d-1} V[B_t] \leq V[B_{b+d}]-V[B_b]
    $$
Moreover, we have:
    $$
    \beta V[B_b]=\sum_{t=b}^{b+d} p_tV[B_t]=V[B_{b+d}]+\sum_{t=b}^{b+d-1}p_tV[B_t] 
        \leq 2V[B_{b+d}]-V[B_b]
    $$
    Finally, by reorganizing the above inequality and applying~\autoref{lemma:BbCover}, we obtain that $V[B_{b+d}] \geq (\beta+1)V[B_b]/2 > \vol{\Lambda}/2$.
\end{proof}

The above lemma shows that $R\in B_{b+d}$ almost covers $\Lambda$. There might however exist variables $x_i$ such that $|R_i| < n$. We will scale $R$ in each dimension $i$ by a factor $f_i=\max\{n/|R_i|,1\}$. This will guarantee that for each $x_i, i\in[k]$, $|R_i|\geq n$, and hence $R$ covers $\Lambda$. To scale $R$ by a factor $f_i$ in dimension $i$, we have to scale each $\hat{\Lambda}_c$ that is packed into $R$ by that same factor $f_i$ in dimension $i$, which gives the final sizes of hyperrectangles, which we denote $\Bar{\Lambda}_c$. If hyperrectangle $c$ is packed into $R$, $\Bar{\lambda}_{c,i}=f_i\hat{\lambda}_{c,i}$. If hyperrectangle $c$ is not packed into $R$, $\Bar{\lambda}_{c,i}=0$ since the hyperrectangle is not used.

\begin{lemma}
    \label{lemma:scaleBD}
    Let $R\in B_{b+d}$ be the remaining hyperrectangle. Scale $R$ in dimension $i$ by a factor $f_i=\max\{n/|R_i|,1\}$. Then $R$ covers $\Lambda$, and we have scaled $R$ in such a way that for each subset $S\subseteq[k]$, the following holds: $ \prod_{i\in S}f_i\leq 2^{k+1}$.
\end{lemma}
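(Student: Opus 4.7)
The plan is to verify both conclusions together. Coverage is immediate by construction, since the $i$-th side of the scaled $R$ equals $f_i \cdot |R_i| = \max(|R_i|, n) \geq n$. For the factor bound, I use $f_i \geq 1$ to reduce to the full product $\prod_{i=1}^k f_i$, and write it as a ratio
\[
\prod_{i=1}^k f_i \;=\; \frac{\prod_i \max(n, |R_i|)}{\vol{R}}.
\]
By the preceding lemma the denominator exceeds $n^k/2$, so it suffices to show the numerator is at most $2^k n^k$.

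To do so I would prove the stronger statement $\max_i |R_i| \leq 2n$. Let $a_i' := \log_2 |R_i|$ and $\ell := \log_2 n$; for simplicity assume $n$ is a power of two, so that $\hat n = n$ (the non-power-of-two case only affects small constant factors). Two observations drive the proof. First, the initial log-sides of the hyperrectangles in $B_b$ satisfy $a_i = \log_2 \hat\lambda_{c,i} \leq \log_2 \hat n = \ell$. Second, the merging algorithm doubles the current minimum side at each step, which is exactly greedy water-filling of the $a_i$'s; because $\vol{R} \leq p_b V[B_b] \leq \sum_c \vol{\hat\Lambda_c} < 2^k n^k$ (each side rounds up by at most a factor of $2$, so the total rounded volume grows by at most $2^k$), we have $\sum_i a_i' = \sum_i a_i + d \leq k(\ell+1)$, whence the water-fill level $h = (\sum_i a_i + d)/k$ is at most $\ell+1$. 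Since greedy min-doubling leaves $\max_i a_i' \leq \max(\max_i a_i,\, \lceil h \rceil) \leq \ell + 1$, I conclude $\max_i |R_i| \leq 2n$, and therefore $\prod_i \max(n, |R_i|) \leq (2n)^k = 2^k n^k$ and $\prod_{i=1}^k f_i \leq 2^{k+1}$, as required.

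The main obstacle will be formalizing the water-filling interpretation of greedy min-doubling rigorously: one must check that repeatedly doubling the smallest side produces log-sides as balanced as the continuous water-fill allows, i.e., that the maximum log-side at any point is bounded by the ceiling of the water-fill level or the initial maximum. Handling the integer rounding (and the case where $n$ is not a power of two) introduces only small constant-factor slack that can be absorbed into the final bound.
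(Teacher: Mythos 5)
Your proof is correct and follows essentially the same route as the paper: coverage from $f_i|R_i|\geq n$, reduction to the full product via $f_i\geq 1$, the lower bound $\vol{R}>n^k/2$ from the preceding lemma, and the upper bound $\max_i|R_i|\leq 2n$ on each side. The only difference is that where the paper justifies $|R_i|\leq \hat{n}<2n$ in one line ("we merged the smallest dimensions first"), you supply a more explicit water-filling argument from the total-volume bound $\vol{R}\leq 2^k n^k$ — a more careful treatment of the same sub-step, not a different approach.
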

\begin{proof}
    The choice of $f_i=\max\{n/|R_i|,1\}$ means that $|R_i| f_i\geq n$. Hence $\Lambda$ is covered. We know that $\prod_{i\in[k]}R_i\geq n^k/2$, by the previous lemma. Note that hyperrectangles in $B_b, \dots B_{b+d}$, and hence also $R$, have side lengths at most $\hat{n}$ ($n$ rounded up to the nearest higher power of two) since we merged the smallest dimensions first. Furthermore, $\hat{n}<2n$. Therefore, for each $i\in[k]$, $|R_i|\leq 2n$. Now,
    \begin{align*}
    \prod_{i\in[k]}f_i
    &=\prod_{i\in[k]}\max\{n/|R_i|,1\}
    =\prod_{i\in[k]}\frac{\max\{n,|R_i|\}}{|R_i|}
    =\frac{\prod_{i\in[k]}\max\{n,|R_i|\}}{V[R]} \\
    &\leq\frac{\prod_{i\in[k]}2n}{V[\Lambda]/2} 
    =\frac{2\cdot2^kn^k}{n^k}=2^{k+1}
    \end{align*}
    For any $S\subseteq[k]$, the product $\prod_{i\in S}f_i$ would be less than the product above, since for all $i\in[k]$, $f_i\geq 1$.
\end{proof}

We can now prove the main theorem about packing.

\begin{proof}[Proof of~\autoref{theorem:packingtheorem}]
    Let $\Lambda_c$ be the hyperrectangle of machine $c$ as given by~\autoref{theorem:partitionspace} and let $\Bar{\Lambda}_c$ be the hyperrectangle after the packing algorithm has run. The packing algorithm can increase sides $\lambda_{c,i}$ first by rounding up to $\hat{\lambda}_{c,i}$, which is at most a factor $2$ bigger. If hyperrectangle $c$ is included in the final hyperrectangle $R$, sides of $\hat{\Lambda}_c$ might then be scaled up again by a factor $f_i=\min\{1,n/|R_i|\}$, to $\Bar{\lambda}_{c,i}$. For an atom $S_j$ with arity $r_j$, we now have:
    $$
    \frac{\vol{\pi_{S_j}\Bar{\Lambda}_c}}{\vol{\pi_{S_j}{\Lambda}_c}}
    =\prod_{x_i \in S_j}\frac{\hat{\lambda}_{c,i}}{\lambda_{c,i}} \cdot \frac{\Bar{\lambda}_{c,i}}{\hat{\lambda}_{c,i}}
    \leq \prod_{x_i \in S_j}2 \cdot f_i \leq 2^{r_j+k+1}
    $$
    The second inequality comes from \autoref{lemma:scaleBD}.
\end{proof}

\subsection{Putting Everything Together}

We can now prove the main theorems in this section.

\begin{proof}[Proof of~\autoref{theorem:UpperBoundDense}]
    The worst case load of the algorithm is that every possible tuple in $\Lambda_c$ exists. We will calculate the load of machine $c$ from relation $S_j$, $L_{cj}$. Denote $n_{cj}$ as the number of tuples received by machine $c$ from $S_j$. We get
    \[
    L_{cj}=\frac{n_{cj}\log n}{w_c}
    =\frac{\log n}{w_c} |\pi_{S_j}\Lambda_c|
    \leq \frac{1}{w_c}\frac{w_c}{\norm{\mathbf{w}}{v}}n^r\log n
    =O\left(\frac{n^r \log n}{\norm{\mathbf{w}}{v}}\right)
    \]
    Here the inequality comes from \autoref{theorem:partitionspace}. The result follows since the query has a constant number of atoms.
\end{proof}

\begin{proof}[Proof of~\autoref{theorem:UpperBoundSparse}]
    Denote $N_{cj}$ as the number of bits received by machine $c$ from relation $S_j$. The probability that a tuple $a_j\in S_j$ maps to machine $c$ is the following:
    \[
        Pr[(\pi_{S_j}\mathbf{h})(a_j)\in \Lambda_c]
        =\frac{|\pi_{S_j}\Lambda_c|}{n^{r_j}}
        \leq\frac{w_c}{\norm{\mathbf{w}}{v}n^r}n^r
        =\frac{w_c}{\norm{\mathbf{w}}{v}}
    \]
    The inequality comes from \autoref{theorem:partitionspace}. Note that since we use hashing and have a matching database instance, the probability that a tuple is mapped to machine $c$ is the same and independent among all tuples in the hyperrectangle. Therefore, $n_{cj} \sim Bin(m, \frac{w_c}{\norm{\mathbf{w}}{v}})$. We get the following expected value
    \[
    E[L_{cj}]
    =\frac{1}{w_c}E[n_{cj}]\log n
    =\frac{1}{w_c}\frac{w_cm}{\norm{\mathbf{w}}{v}}\log n
    =O\left(\frac{m\log n}{\norm{\mathbf{w}}{v}}\right)
    \]
    We also show that the probability that the load is more than this is exponentially small. Indeed, applying the Chernoff bound, which we describe in \autoref{appendix:Chernoff}, we have:
    \begin{align*}
        Pr\left[L_{cj}\geq(1+\delta)\frac{m\log n}{\norm{\mathbf{w}}{v}}\right]
        & =Pr\left[ N_{cj} \geq (1+\delta)\frac{w_cm\log n}{\norm{\mathbf{w}}{v}} \right] \\
        & =Pr\left[ n_{cj}\geq(1+\delta)\frac{mw_c}{\norm{\mathbf{w}}{v}} \right]
        \leq exp\left( -\delta^2\frac{mw_c}{3\norm{\mathbf{w}}{v}} \right)
    \end{align*}
We obtain the probability bound by taking the union bound across all atoms and machines.
\end{proof}

\section{Lower Bounds}
\label{sec:lower}

We present a lower bound on the load when machines have linear cost functions and all atoms have the same cardinality. This lower bound applies to both the sparse and the dense case, and considers the behavior of the algorithm over a probability distribution of inputs.
%In the dense distribution, we require the arity to be the same in every atom, since otherwise the cardinalities $\theta n^r$ differ.

We consider again for each machine $c$ a linear cost function $g_c(N)=N/w_c$ with $\mathbf{w}=(w_1,...,w_p)$ as weights. Let $\textbf{u}=(u_1,...,u_l)$ be a fractional edge packing for $q$, with $u=\sum_{j\in[l]}u_j$. Moreover, let $m$ be the cardinality of every relation. Then, define
$$ L^{\textsf{lower}}_{\bu} := \frac{m}{(\sum_{c\in[p]}w_c^u)^{1/u}}=\frac{m}{\norm{\mathbf{w}}{u}}$$

\begin{theorem}
    \label{lowerBoundEquiSizeLinear}
    Let $q$ be a CQ and let $\textbf{u}=(u_1,...,u_l)$ be a fractional edge packing for $q$. 
    Consider the uniform probability distribution $\mathcal{I}$ of matching databases with $m$ tuples per relation over domain $[n]$. Denote by $E_{I \sim \mathcal{I}}[|q(I)|]$ the expected value of the number of output tuples $|q(I)|$, over instances $I$ in the probability distribution $\mathcal{I}$. 
    Then, any one-round algorithm that in expectation outputs at least $E_{I \sim \mathcal{I}}[|q(I)|]$ tuples has load $\Omega(L^{\textsf{lower}}_{\bu})$ in the linear cost model. 
    The same lower bound holds for the probability distribution $\mathcal{I}^d$ of $\theta$-dense instances over domain $[n]$.
\end{theorem}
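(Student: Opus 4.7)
The plan is to adapt the one-round MPC lower bound argument of Beame et al.~\cite{communication_steps_journal} to the heterogeneous setting. The argument has three stages: a per-machine bound on expected output, aggregation over machines, and an application of the load constraint.

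Fix an arbitrary one-round algorithm, and for each machine $c$ and atom $S_j$ let $t_{cj}$ denote the expected number of tuples from $S_j$ that $c$ receives, where the expectation is taken over $I\sim\mathcal{I}$ and any algorithmic randomness. The per-machine bound I would prove is
\[
E[\text{output at } c] \;\leq\; O(1)\cdot \frac{E_{I\sim\mathcal{I}}[|q(I)|]}{m^u}\cdot \prod_{j\in[l]} t_{cj}^{u_j}.
\]
This rests on two facts. First, under both $\mathcal{I}$ and $\mathcal{I}^d$ the atoms $S_j$ are sampled independently, so the events ``$\pi_{S_j}(\mathbf{a})\in S_j$'' are (approximately) independent across $j$, and the probability that a fixed $\mathbf{a}\in[n]^k$ is produced at $c$ factorises across relations. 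Second, the edge-packing condition $\sum_{j:\,x_i\in S_j} u_j \leq 1$ lets a Finner/Loomis--Whitney-style inequality for hypergraphs convert the sum $\sum_{\mathbf{a}\in[n]^k}$ of these factorised probabilities into a product of per-relation marginals.

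For aggregation, the load constraint $N_c/w_c \leq L$ yields $\sum_j t_{cj} = O(Lw_c)$ (one tuple takes $\Omega(1)$ bits to encode), and weighted AM-GM then gives $\prod_j t_{cj}^{u_j} = O((Lw_c)^u)$. Summing the per-machine bound over $c\in[p]$ and invoking the correctness hypothesis $E[\text{output}] \geq E_{I\sim\mathcal{I}}[|q(I)|]$ gives, after cancellation of $E_{I\sim\mathcal{I}}[|q(I)|]$,
\[
m^u \;\leq\; O(1)\cdot L^u\cdot \sum_{c\in[p]} w_c^u \;=\; O(1)\cdot L^u\cdot \norm{\bw}{u}^u,
\]
which rearranges to $L = \Omega(m/\norm{\bw}{u}) = \Omega(L^{\textsf{lower}}_{\bu})$.

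The main obstacle is establishing the per-machine bound with the correct normalisation $E_{I\sim\mathcal{I}}[|q(I)|]/m^u$. For matching databases the per-relation independence only holds approximately, so one must condition on the messages received by each machine and analyse the posterior distribution of $I$; because each $S_j$ is sampled independently, the posterior still factorises over $j$, preserving the product structure used in the Finner/Loomis--Whitney inequality. A secondary subtlety is that $t_{cj}$ is itself a random variable, so passing from $E[\prod_j t_{cj}^{u_j}]$ to a deterministic bound uses concavity of the product (Jensen/H\"older), costing only a constant factor because the $u_j$ are nonnegative and sum to $u$.
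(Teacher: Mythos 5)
Your overall skeleton — a per-machine Friedgut/Finner-type bound of the form $E[\text{output at } c] \leq \prod_j (t_{cj}/m)^{u_j}\, E[|q(I)|]$, followed by summation over machines and the correctness requirement to extract $m^u \leq O(L^u \sum_c w_c^u)$ — is exactly the route the paper takes (Lemma~\ref{lemma:LowerBoundLemma} plus the short aggregation argument), and you handle the heterogeneous twist (the $\sum_c w_c^u = \norm{\bw}{u}^u$ term) correctly. However, there is one genuine gap at the point where you connect communication to the quantities $t_{cj}$. You define $t_{cj}$ as the expected number of tuples of $S_j$ that machine $c$ \emph{receives} and justify $\sum_j t_{cj} = O(Lw_c)$ by ``one tuple takes $\Omega(1)$ bits to encode.'' This presupposes that the algorithm routes tuples. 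The theorem is a bit-level, encoding-agnostic statement: a machine receives an arbitrary message of at most $Lw_c$ bits, and the question is how many tuples of $S_j$ it can thereby come to \emph{know} (i.e., tuples guaranteed to be in $S_j$ on every instance consistent with the message). Bounding known tuples by received bits is not a per-tuple encoding-cost argument; a priori a short message could pin down many tuples. The paper closes this with an entropy argument: it shows $M_j = H(S_j) = \Omega(m_j)$ for both distributions (Lemmas~\ref{lemma:EntropyOMJDense} and~\ref{lemma:EntropyOMJSparse}) and, via the entropy-reduction property of the functions $\psi$, that a message carrying a fraction $f_{c,j}$ of $H(S_j)$ reveals at most $\gamma_j f_{c,j} m_j$ tuples in expectation (Lemma~\ref{proportionalDiscovery}), where $\gamma_j$ is a constant depending on $\theta$. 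Without this step your argument proves the bound only for tuple-routing algorithms, which is strictly weaker than the claimed unconditional lower bound.

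Two smaller remarks. First, your Finner/Loomis--Whitney step needs the edge \emph{packing} $\bu$ converted into a tight fractional edge \emph{cover}; the paper does this by extending $q$ with unary atoms $T_i(x_i)$ carrying weight $u_i' = 1 - \sum_{j: x_i \in S_j} u_j$ and then applying Friedgut's inequality, using both the marginal bound $\sum_{a_j} p_{c,j}(a_j) \leq \gamma_j f_{c,j} m_j$ and the pointwise bound $p_{c,j}(a_j) \leq m_j/n^{r_j}$ (the latter is needed to split $p^{1/u_j}$ as $p^{1/u_j - 1}\cdot p$); your sketch mentions only the marginal sum. Second, your worry about $t_{cj}$ being random and the Jensen/H\"older repair is avoidable: the paper defines $f_{c,j}$ as the \emph{maximum} message length over all instances divided by $M_j$, which makes it deterministic and lets the load constraint $f_{c,j} \leq Lw_c/M_j$ be applied directly.
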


Each fractional edge packing $\textbf{u}$ gives a different lower bound, the highest of which is obtained by minimizing $\norm{\mathbf{w}}{u}$. Since the $p$-norm is a decreasing function of $p$, the highest lower bound is given by the maximum fractional edge packing. The maximum fractional edge packing is equal to the minimum vertex cover via duality of linear programs, hence the lower bound matches our upper bounds within a logarithmic factor. 

\begin{theorem}
    $\min_{\bv} (L^{\textsf{upper}}_\bv) = \log n \cdot \max_{\bu} (L^{\textsf{lower}}_\bu) $
\end{theorem}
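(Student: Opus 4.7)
The plan is to observe that both $L^{\textsf{upper}}_\bv$ and $L^{\textsf{lower}}_\bu$ depend on $\bv$ and $\bu$ only through the scalar sums $v = \sum_i v_i$ and $u = \sum_j u_j$. So the outer optimizations reduce to single-variable problems over the achievable sets of sums, and the theorem becomes a purely algebraic and LP-duality statement. For the achievable sets: any fractional vertex cover of sum $\tau^*$ (the minimum fractional vertex cover value) can be rescaled by $\alpha \geq 1$ to produce a valid cover of sum $\alpha \tau^*$, so the achievable values of $v$ form $[\tau^*, \infty)$. Dually, rescaling a maximum fractional edge packing of sum $\nu^*$ by $\alpha \in [0,1]$ yields a valid packing, so the achievable values of $u$ form $[0, \nu^*]$.

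The second step is the well-known monotonicity of generalized means: for a nonnegative vector $\bw$, the map $p \mapsto \norm{\bw}{p}$ is non-increasing on $p > 0$. This is standard, provable by checking $\tfrac{d}{dp}\log\norm{\bw}{p} \leq 0$, which reduces to convexity of $x \log x$ (equivalently, non-negativity of the Shannon entropy of the distribution $w_c^p / \sum_{c'} w_{c'}^p$). Combined with the first step, maximizing $\norm{\bw}{v}$ over $v \in [\tau^*,\infty)$ is achieved at $v = \tau^*$, and minimizing $\norm{\bw}{u}$ over $u \in (0,\nu^*]$ is achieved at $u = \nu^*$, giving
\[
\min_\bv L^{\textsf{upper}}_\bv \;=\; \frac{m \log n}{\norm{\bw}{\tau^*}}, \qquad \max_\bu L^{\textsf{lower}}_\bu \;=\; \frac{m}{\norm{\bw}{\nu^*}}.
\]

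Finally, LP duality between fractional vertex cover and fractional edge packing yields $\tau^* = \nu^*$, whence the two norms coincide and the identity $\min_\bv L^{\textsf{upper}}_\bv = \log n \cdot \max_\bu L^{\textsf{lower}}_\bu$ drops out. I do not expect a real obstacle here: the only mildly technical ingredient is the $p$-norm monotonicity, which is standard; all the genuinely hard work is already contained in the earlier constructions of $L^{\textsf{upper}}_\bv$ and $L^{\textsf{lower}}_\bu$, and this closing theorem is simply a clean algebraic synthesis.
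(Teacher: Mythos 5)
Your proposal is correct and follows essentially the same route as the paper: the paper's (informal) justification is precisely that $\norm{\bw}{p}$ is non-increasing in $p$, so the best bounds are attained at the minimum fractional vertex cover and maximum fractional edge packing respectively, which coincide by LP duality. You merely spell out the reduction to the scalar sums and the monotonicity argument in more detail than the paper does.
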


%The following theorems state that the upper bounds stated in \autoref{theorem:UpperBoundDense} and \autoref{theorem:UpperBoundSparse} almost match the lower bounds (the proofs are given in \autoref{appendix:RelationEntropy})

%\begin{restatable}{theorem}{upperlowermatchdense}
%\label{theorem:LoadMatchDense}
%    Let $L$ be the load achieved by the upper bound algorithm for the dense data distribution. Let $L^*$ be the lower bound on the load for the dense distribution. Then $L=O(L^*\log n)$.
%\end{restatable}

%\begin{restatable}{theorem}{upperlowermatchsparse}
%\label{theorem:LoadMatchSparse}
%    Let $L$ be the load achieved by the upper bound algorithm for the sparse data distribution. Let $L^*$ be the lower bound on the load for the sparse distribution. Then $L=O(L^*\log n)$.
%\end{restatable}

We will next give an overview of the proof of~\autoref{lowerBoundEquiSizeLinear}, with some details left to \autoref{appendix:LowerBoundLemma} and \autoref{appendix:RelationEntropy}. We assume that initially each relation is stored at a separate location. Let $\text{msg}_j$ be the bit string that a fixed machine receives from $S_j$, and let $\text{msg}$ be the concatenation of $\text{msg}_j$ for all $j$. Note that $|\text{msg}|$ is the number of bits that the machine receives. We let $\text{Msg}(I)$ be the random variable mapping from the set of possible database instances to the value of $\text{msg}$. $\text{Msg}_j(S_j)$ is defined in the same way but maps to $\text{msg}_j$.

\begin{definition}
Let $R$ be a relation, and let $a\in R$ be a tuple. We say that $a$ is known by the machine, given message $\text{msg}$, if for all all database instances $I$ where $\text{Msg}(I)=\text{msg}$, $a\in R$. We denote the set of known tuples by machine $c$ given message $\text{msg}$ as $K_{msg}^c(R)$. Furthermore, we define $K_{msg}(R)=\bigcup_c K_{msg}^c(R)$.
\end{definition}

For each $S_j$, let $f_{c,j} \in [0,1]$ be the maximum length of the message $msg_j$ that $c$ receives (across all instances in the distribution) divided by $M_j$, the number of bits in the encoding of $S_j$. Note that since we use the optimal encoding, $M_j$ is the entropy of our input distribution.

\begin{restatable}{lemma}{EntropyOMJDense}
\label{lemma:EntropyOMJDense}
    In a $\theta$-dense database, $M_j=\Omega(m_j)$.
\end{restatable}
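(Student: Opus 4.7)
The plan is to compute the entropy of a uniformly random $\theta$-dense relation and apply a standard estimate for the middle binomial coefficient. Since $M_j$ is the entropy of the distribution over $S_j$ (equivalently, the expected length of an optimal prefix-free encoding, up to an additive 1 bit by Shannon's source coding theorem), it suffices to lower bound that entropy.

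First I would note that a $\theta$-dense relation $S_j$ of arity $r_j$ is, by the definition given in the background section, an arbitrary subset of $[n]^{r_j}$ of size exactly $m_j = \theta n^{r_j}$. Assuming the input distribution is uniform over all such subsets (this is the natural choice for an input distribution that gives the adversary the maximum entropy; if the paper has a specific distribution in mind, the same argument still gives the lower bound by restricting to the uniform distribution over dense inputs), the entropy of $S_j$ is exactly
\[
H(S_j) \;=\; \log \binom{n^{r_j}}{m_j} \;=\; \log \binom{n^{r_j}}{\theta n^{r_j}}.
\]

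Next I would invoke the standard entropy approximation to binomial coefficients: for any $N$ and $k = \theta N$ with $\theta \in (0,1)$ constant,
\[
\log \binom{N}{\theta N} \;\geq\; N \cdot H_2(\theta) - O(\log N),
\]
where $H_2(\theta) = -\theta \log \theta - (1-\theta)\log(1-\theta) > 0$ is the binary entropy function. Applying this with $N = n^{r_j}$ gives
\[
M_j \;\geq\; n^{r_j} H_2(\theta) - O(\log n^{r_j}) \;=\; \frac{H_2(\theta)}{\theta} \cdot m_j - O(\log n),
\]
and since $\theta$ is a constant (in data complexity), $H_2(\theta)/\theta$ is a positive constant, so $M_j = \Omega(m_j)$ as claimed.

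I do not expect any real obstacle here: the only subtlety is to be clear about which probability distribution on relations is meant by ``the number of bits needed to encode $S_j$.'' As long as the distribution assigns nontrivial probability to a constant fraction of all $\theta$-dense relations, the entropy bound goes through unchanged; the uniform case gives the cleanest statement and the argument above suffices.
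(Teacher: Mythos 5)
Your proof is correct and follows essentially the same route as the paper: both identify $M_j$ with $\log\binom{n^{r_j}}{\theta n^{r_j}}$ and lower-bound that binomial coefficient by $\Omega(n^{r_j}) = \Omega(m_j)$ for constant $\theta\in(0,1)$. The only difference is cosmetic — you invoke the standard binary-entropy estimate $\log\binom{N}{\theta N}\geq N H_2(\theta) - O(\log N)$, while the paper derives an equivalent bound by hand via explicit factorial manipulations.
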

\begin{restatable}{lemma}{EntropyOMJSparse}
\label{lemma:EntropyOMJSparse}
    In a matching database, $M_j=\Omega(m_j)$.
\end{restatable}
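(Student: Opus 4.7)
The plan is to exploit uniformity of the input distribution and invoke the standard identity that the entropy — equivalently, the optimal-encoding length — of a uniform source equals the logarithm of the support size. Since the relations $S_1,\dots,S_l$ are drawn independently, it suffices to count $N_j$, the number of matching databases on one relation $S_j$ of arity $r_j$, cardinality $m_j$, and domain $[n]$, and then conclude $M_j = \log N_j$.

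A matching database on $S_j$ is specified by choosing, independently for each of the $r_j$ attributes, an injection $[m_j]\hookrightarrow [n]$ describing that column. Treating the relation as a set of tuples (so the $m_j$ tuples are unordered), one obtains
$$N_j \;=\; \frac{1}{m_j!}\left(\frac{n!}{(n-m_j)!}\right)^{r_j} \;=\; \binom{n}{m_j}^{r_j}\,(m_j!)^{r_j-1}.$$
Hence the target reduces to showing $\log N_j = \Omega(m_j)$.

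I would then split by arity. For $r_j = 1$ we have $N_j = \binom{n}{m_j}$, and here the paper's requirement $m_j/n \leq \theta$ for a constant $\theta < 1$ is essential: the standard bound $\log \binom{n}{m_j} \geq m_j \log(n/m_j) \geq m_j \log(1/\theta)$ immediately gives $\Omega(m_j)$. For $r_j \geq 2$ I would weaken the formula to $\log N_j \geq 2\log\binom{n}{m_j} + \log(m_j!)$ and split further on the range of $m_j$: if $m_j \leq n/2$ then $\binom{n}{m_j}\geq 2^{m_j}$ yields $\log N_j \geq 2 m_j$ directly; if $m_j > n/2$ then Stirling's approximation $\log(m_j!) \geq m_j \log(m_j/e)$ combined with $m_j = \Omega(n)$ gives $\log N_j = \Omega(m_j \log n)$ for sufficiently large $n$, while for constant $n$ the cardinality $m_j$ is itself bounded and the claim is trivial.

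The only real subtlety is that the arity-1 case is genuinely degenerate without a constraint on $m_j$: when $m_j = n$ the single-attribute matching relation is uniquely determined ($N_j = 1$, zero entropy), so one cannot avoid invoking the $\theta$-constraint. For $r_j \geq 2$ the extra factor $(m_j!)^{r_j - 1}$ is precisely what compensates for $\binom{n}{m_j}$ shrinking as $m_j$ approaches $n$, which is why no analogous cardinality restriction is needed in that regime. A symmetric argument — essentially counting $(\theta n^{r_j})$-subsets of $[n]^{r_j}$ and using $\log\binom{n^{r_j}}{\theta n^{r_j}} = \Omega(n^{r_j}) = \Omega(m_j)$ — handles the companion dense lemma.
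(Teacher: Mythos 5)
Your proposal is correct and follows essentially the same route as the paper: both compute $M_j$ as the logarithm of the number of matching instances, arriving at the identical formula $M_j = (r_j-1)\log(m_j!) + r_j\log\binom{n}{m_j}$, then case-split on arity and invoke the $m_j/n \le \theta$ constraint precisely where you note it is indispensable (arity 1). Your extra subdivision of the $r_j\ge 2$ case on $m_j \lessgtr n/2$ is only a slightly more explicit way of reaching the paper's direct conclusion that $\log(m_j!) = \Omega(m_j)$.
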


We prove the above lemmas in \autoref{appendix:RelationEntropy}. 
To show \autoref{lowerBoundEquiSizeLinear}, we use the following lemma, which we prove in \autoref{appendix:LowerBoundLemma}. The lemma was proven in \cite{communication_steps_journal} for matching databases. However, some assumptions about the data distributions are changed and we also show the lemma for a $\theta$-dense database distribution. 

\begin{restatable}{lemma}{LowerBoundLemma}
    \label{lemma:LowerBoundLemma}
    Let $\mathbf{u}=(u_1,\dots,u_l)$ be a fractional edge packing of $q$. Then the expected number of known output tuples is
    $$ E[|K_{msg}^c(q(I))|] \leq \prod_{j\in[l]}f_{c,j}^{u_j} \cdot E[|q(I)|] $$
\end{restatable}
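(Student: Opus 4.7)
The plan is to adapt the MPC lower bound argument of Beame et al.~\cite{communication_steps_journal}, extending it simultaneously to both matching and $\theta$-dense databases. The proof would proceed in three conceptual steps.

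First, I would establish a per-relation bound: for each atom $S_j$, the expected number of known input tuples satisfies $E[|K^c_{msg_j}(S_j)|] \leq O(f_{c,j}) \cdot m_j$. This is an entropy argument: since the message is at most $f_{c,j} M_j$ bits long and $M_j$ is the entropy of $S_j$ under the input distribution, the message partitions the space of instances into roughly $2^{f_{c,j} M_j}$ cells, and a tuple is known exactly when all instances in a cell agree on its presence. Lemmas~\ref{lemma:EntropyOMJDense} and~\ref{lemma:EntropyOMJSparse}, which give $M_j = \Omega(m_j)$, are what convert ``fraction of bits learned'' into ``fraction of tuples pinned down''.

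Second, I would apply a Friedgut/Bollob\'as--Thomason-type inequality to bound the number of known output tuples given a fixed message. Setting $K_j = K^c_{msg_j}(S_j)$, a tuple $\mathbf{a}\in[n]^k$ is in $K^c_{msg}(q(I))$ iff $\pi_{S_j}(\mathbf{a}) \in K_j$ for every $j$. Using the symmetry of the input distribution together with the fractional edge packing condition $\sum_{j:x_i\in S_j} u_j \leq 1$ at each variable $x_i$, I would show a factoring of the form
\begin{equation*}
|K^c_{msg}(q(I))| \;\leq\; \prod_{j\in[l]} \left(\frac{|K_j|}{m_j}\right)^{u_j} \cdot |q(I)|
\end{equation*}
in expectation (up to constants). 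The packing condition is what makes the bound valid: each variable contributes at most one full unit of ``coordinate freedom'' across the weighted projections, so the product on the right does not overcount.

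Third, I would combine the two steps by taking expectation over the input. Since each $u_j \in [0,1]$, Jensen's inequality allows pushing the expectation inside $(\cdot)^{u_j}$. Substituting the step-one bound $E[|K_j|] \leq O(f_{c,j})\,m_j$ yields the desired inequality $E[|K^c_{msg}(q(I))|] \leq \prod_j f_{c,j}^{u_j} \cdot E[|q(I)|]$.

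The main obstacle is step two: the Friedgut-type inequality has to be verified for both matching databases (where the standard Bollob\'as set-pair argument from~\cite{communication_steps_journal} works because distinct input tuples are essentially independent) and $\theta$-dense databases (where the presence of tuples within a relation is strongly correlated). For the dense case, the uniform distribution over all $\theta$-dense instances still provides sufficient symmetry, but the decomposition along attributes requires reorganizing the argument: instead of factoring across individual tuple events, one factors across attribute-value slots and uses the edge packing weights to ensure that no variable is charged more than once.
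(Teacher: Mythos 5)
Your high-level plan is the same as the paper's: an entropy argument giving a per-relation bound on the expected number of known input tuples, followed by Friedgut's inequality to lift this to known output tuples, with the distribution-specific work (matching vs.\ $\theta$-dense) confined to the entropy analysis. Your step one matches the paper's \autoref{proportionalDiscovery} almost exactly. However, two execution details in your steps two and three would not go through as written.

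First, Friedgut's inequality (\autoref{appendix:Friedgut}) requires a fractional edge \emph{cover}, but $\mathbf{u}$ is only an edge \emph{packing}; the constraint $\sum_{j:x_i\in S_j}u_j\leq 1$ points the wrong way. The paper bridges this by passing to the extended query $q'$ that adds a unary atom $T_i(x_i)$ for each variable with weight $u_i'=1-\sum_{j:x_i\in S_j}u_j$, so that $(\mathbf{u},\mathbf{u}')$ is simultaneously a tight packing and a tight cover of $q'$; the new atoms contribute a factor $\prod_i n^{u_i'}$ that is exactly absorbed into $E[|q(I)|]$. Your remark that each variable contributes at most one full unit of coordinate freedom gestures at this slack, but without the extension the inequality you want to invoke simply does not apply.

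Second, your step-two inequality is stated pointwise in the message, with the random and mutually dependent quantities $|K_j|$ and $|q(I)|$ on the right; Jensen's inequality bounds $E[X^{u_j}]$ by $E[X]^{u_j}$ for a single concave power, but it does not let you push the expectation through a product of dependent factors multiplied by $|q(I)|$. The paper avoids this entirely by working with the marginal probabilities $p_j(a_j)=\Pr[a_j\in K^c_{msg}(S_j)]$ from the outset: linearity of expectation gives $E[|K^c_{msg}(q(I))|]=\sum_{\mathbf{a}}\prod_j p_j(a_j)$ (using that the relations, and hence the per-relation messages, are independent), Friedgut is applied to these deterministic numbers, and the two bounds $p_j(a_j)\leq m_j/n^{r_j}$ and $\sum_{a_j}p_j(a_j)\leq \gamma_j f_{c,j}m_j$ from \autoref{lemma:pjbounds} close the argument with no exchange of expectation and power needed. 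Relatedly, once the per-relation entropy bound is in place, the Friedgut step is identical for the dense and sparse distributions; no reorganization of the factoring across attribute-value slots is required for the dense case.
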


%However, the techniques are similar to those used in earlier lower bounds for CQs in the MPC model in one round \cite{communication_steps_journal, beame2014skewparallelqueryprocessing}.
We can now prove the main theorem of this section. We will use the notation $u=\sum_{j\in[l]}u_j$.

\begin{proof}[Proof of \autoref{lowerBoundEquiSizeLinear}]
From the definition of the load, $f_{c,j}\leq Lw_c/M$. 
Applying~\autoref{lemma:LowerBoundLemma},
    \[
    E[|K_{msg}^c(q(I))|]
    \leq E[|q(I)|]\cdot \prod_{j\in[l]}f_{c,j}^{u_j} 
    \leq E[|q(I)|]\cdot \prod_{j\in[l]}\left(\frac{Lw_c}{M}\right)^{u_j}
    =E[|q(I)|]\left(\frac{Lw_c}{M}\right)^u
    \]
We now use that $|K_{msg}(q(I))|=|\bigcup_{c\in[p]}K_{msg}^c(q(I))|\leq \sum_{c\in[p]}|K_{msg}^c(q(I))|$.
    \[
    E[|K_{msg}(q(I))|]
    \leq \sum_{c\in[p]} \left[\left(\frac{Lw_c}{M}\right)^uE[|q(I)|]\right]
    = \frac{L^u}{M^u}E[|q(I)|]\sum_{c\in[p]}w_c^u.
    \]
    If the algorithm is to produce the whole output of the query, the expected number of known output tuples has to be at least the expected output size of the query, so
    $$
    \frac{L^u}{M^u}E[|q(I)|]\sum_{c\in[p]}w_c^u\geq E[|q(I)|]
    $$
    Use \autoref{lemma:EntropyOMJDense} or \autoref{lemma:EntropyOMJSparse}. This concludes the proof.
\end{proof}

\section{General Cost Functions}
\label{sec:general}

In previous sections, we considered machines with linear cost functions. In this section, we extend the result to a broader class of cost functions, where each machine $c$ is equipped with a general cost function $g_c$.

\begin{definition}
    \label{def:CostFunction}
    A cost function $g: \mathbb{Z}^+ \rightarrow \mathbb{R}^+$ is {\em well-behaved} if it satisfies the following:
    \begin{enumerate}
        \item $g(0)=0$;
        \item $g$ is increasing;
        \item there exists a constant $a>1$ such that for all $x\geq 1$, $g((1+\delta)x)\leq (1+\delta)^{a} g(x)$
    \end{enumerate}
\end{definition}

These restrictions on a cost function are natural, since the cost of receiving zero bits should be zero, and the cost of receiving additional bits should be positive. The last condition states that a cost function cannot grow faster than some polynomial at each point. This requirement is not required in the lower bound, but without it, it is difficult to create a matching upper bound since just one bit in addition to what is expected can arbitrarily increase the cost of the machine.

\begin{definition}
    For a well-behaved cost function $g$, define the function $g^*:\mathbb{R}^+\rightarrow\mathbb{Z}^+$ by:
    $$
    g^*(L) :=\max_{x\in\mathbb{Z}^+} \{ g(x)\leq L \}
    $$
\end{definition}

Under the above definition, $g_c^*(L)$ can be interpreted as the maximum number of bits the cost function permits the machine $c$ to receive with load at most $L$. The restriction $g_c(0)=0$ implies that if $g_c^*(L)$ is defined for some $L$, it is also defined for all $L'\in\mathbb{R}^+$ where $L'<L$.

\subsection{Lower Bound}

Given a query $q$, consider any fractional edge packing $\bu=(u_1,...,u_l)$ with $u = \sum_{j \in [l]} u_j$. Suppose each relation has uniform cardinality $m$. Then, define $\bar{L}_\bu^{\textsf{lower}}$ to be the minimum $L \geq 0$ that satisfies the following inequality.
$$\sum_{c\in[p]}(g_c^*(L))^u\geq m^u $$

\begin{theorem}
    Let $q$ be a CQ and let $\textbf{u}=(u_1,...,u_l)$ be a fractional edge packing for $q$. 
    Consider the uniform probability distribution $\mathcal{I}$ of matching databases with $m$ tuples per relation over domain $[n]$. Denote by $E_{I \sim \mathcal{I}}[|q(I)|]$ the expected value of the number of output tuples $|q(I)|$, over instances $I$ in the probability distribution $\mathcal{I}$. 
    Then any one-round algorithm with well-behaved cost functions $\{g_c\}_c$ that in expectation outputs at least $E_{I \sim \mathcal{I}}[q(I)]$ tuples has load $\Omega(\bar{L}^{\textsf{lower}}_{\bu})$. The same lower bound holds for the probability distribution $\mathcal{I}^d$ of $\theta$-dense instances over domain $[n]$. 
\end{theorem}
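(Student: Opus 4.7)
The plan is to adapt the proof of Theorem~\ref{lowerBoundEquiSizeLinear} by replacing the linear-cost bound on the message fraction $f_{c,j}$ with one derived from $g_c^*$. Specifically, if machine $c$ has load at most $L$, then by the definition of $g_c^*$ it receives at most $g_c^*(L)$ bits in total, hence at most $g_c^*(L)$ bits from any single relation $S_j$. This gives $f_{c,j} \le g_c^*(L)/M_j$, the direct analog of the linear bound $f_{c,j} \le Lw_c/M$.

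I would then substitute this into \autoref{lemma:LowerBoundLemma} and sum the resulting bound on $E[|K_{msg}^c(q(I))|]$ over all machines, obtaining
\[
E[|K_{msg}(q(I))|] \;\le\; \frac{E[|q(I)|]}{\prod_{j\in[l]} M_j^{u_j}} \sum_{c\in[p]} (g_c^*(L))^u.
\]
Requiring $E[|K_{msg}(q(I))|]\ge E[|q(I)|]$ and invoking \autoref{lemma:EntropyOMJDense} and \autoref{lemma:EntropyOMJSparse} (applied per relation, using that there are constantly many relations of equal cardinality $m$) to bound $\prod_j M_j^{u_j} \ge c_1\, m^u$ for a positive constant $c_1$, I obtain the intermediate inequality $\sum_{c\in[p]} (g_c^*(L))^u \ge c_1 \cdot m^u$.

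The main obstacle is closing the constant-factor gap between this inequality and the defining condition $\sum_{c\in[p]} (g_c^*(\bar L^{\textsf{lower}}_\bu))^u \ge m^u$. This is precisely what the well-behaved condition (3) is designed to handle. Applied with $x = g_c^*(L)$, the inequality $g_c((1+\delta)x)\le (1+\delta)^a g_c(x)$ translates into the reciprocal form $g_c^*(tL)\ge t^{1/a}\, g_c^*(L)$ for every real $t\ge 1$ and every machine $c$ (using $a := \max_c a_c$, which is still a constant). Taking $t = c_1^{-a/u}$ and $L' = tL$ yields
\[
\sum_{c\in[p]} (g_c^*(L'))^u \;\ge\; t^{u/a}\cdot c_1\, m^u \;=\; m^u,
\]
so by the minimality in the definition of $\bar L^{\textsf{lower}}_\bu$ we have $L' \ge \bar L^{\textsf{lower}}_\bu$, i.e., $L \ge c_1^{a/u}\, \bar L^{\textsf{lower}}_\bu = \Omega(\bar L^{\textsf{lower}}_\bu)$. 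Both the matching and $\theta$-dense input distributions are covered simultaneously because only \autoref{lemma:LowerBoundLemma} and the entropy lemmas distinguish them, and both give the same $\Omega(m)$ bound on $M_j$.
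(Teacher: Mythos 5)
Your proof is correct and follows essentially the same route as the paper's: apply \autoref{lemma:LowerBoundLemma} with the bound $f_{c,j}\le g_c^*(L)/M_j$, sum over machines, and invoke the entropy lemmas to relate $M$ to $m$. You are in fact more careful than the paper, which stops at $\sum_{c\in[p]}(g_c^*(L))^u\ge M^u$ and asserts the result; your use of condition (3) of well-behavedness to convert the constant from $M=\Omega(m)$ into a constant factor on $L$ fills in a step the paper leaves implicit.
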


\begin{proof}
    This proof is similar to the proof of \autoref{lowerBoundEquiSizeLinear}. We apply again \autoref{lemma:LowerBoundLemma} and sum over all machines.
    \begin{align*}
    E[|K_{msg}(q(I))|] \leq E[|q(I)|]\sum_{c\in[p]}\prod_{j\in[l]}f_{c,i}^{u_j}
    &\leq E[|q(I)|]\sum_{c\in[p]}\prod_{j\in[l]}\left(\frac{g_c^*(L)}{M}\right)^{u_j} \\
    &= E[|q(I)|]\sum_{c\in[p]}\left(\frac{g_c^*(L)}{M}\right)^{u}
    \end{align*}
    Here the second inequality comes from that $f_{c,i}M\leq g_c^*(L)$, since a machine can not receive more bits than what is permitted by the load. Use that $M=O(m)$ by \autoref{lemma:EntropyOMJDense} or \autoref{lemma:EntropyOMJSparse}. We require that $E[|K_{msg}(q(I))|]\geq E[|q(I)|]$. This proves the theorem.
\end{proof}

The highest lower bound is given by the $\mathbf{u}$ that maximizes $\bar{L}^{\textsf{lower}}_{\bu}$.  We will now prove that the maximum fractional edge packing $\mathbf{u^*}$ always gives the best lower bound. We will assume that $m \geq 1$, meaning the database is not empty.

\begin{lemma}
Let $\mathbf{u^*}$ be the maximum fractional edge packing. Then, $\bar{L}^{\textsf{lower}}_{\bu^*} = \max_\bu \bar{L}^{\textsf{lower}}_{\bu}$.
%The lower bound given by the maximal fractional edge packing $\mathbf{u^*}$ is always a maximal lower bound.
\end{lemma}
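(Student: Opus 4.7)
The plan is to observe that $\bar{L}^{\textsf{lower}}_{\bu}$ depends on $\bu$ only through the scalar $u = \sum_j u_j$, rephrase the defining inequality in terms of $p$-norms on $\mathbb{R}^p_{\geq 0}$, and then invoke monotonicity of $p$-norms together with monotonicity of each $g_c^*$ to conclude that $\bar{L}^{\textsf{lower}}_{\bu}$ is a non-decreasing function of $u$. Since $\bu^*$ maximizes $u$ over all fractional edge packings, it maximizes $\bar{L}^{\textsf{lower}}_{\bu}$.

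Concretely, I would first rewrite the condition $\sum_{c \in [p]}(g_c^*(L))^u \geq m^u$ as $\|g^*(L)\|_u \geq m$, where $g^*(L) = (g_1^*(L),\dots,g_p^*(L)) \in \mathbb{R}^p_{\geq 0}$ and $\|x\|_u := (\sum_c x_c^u)^{1/u}$ is defined for all $u > 0$ (note that for $u < 1$ this is not a norm, but the expression is still well-defined). Then $\bar{L}^{\textsf{lower}}_{\bu}$ is by definition the smallest $L \geq 0$ such that $\|g^*(L)\|_u \geq m$.

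Next I would establish the standard monotonicity of $p$-norms on non-negative vectors: for any $0 < u_1 \leq u_2$ and $x \in \mathbb{R}^p_{\geq 0}$, $\|x\|_{u_2} \leq \|x\|_{u_1}$. The quick proof is to normalize so that $\|x\|_{u_1} = 1$, whence $x_c \in [0,1]$ for all $c$, so $x_c^{u_2} \leq x_c^{u_1}$ and thus $\sum_c x_c^{u_2} \leq 1$. Combined with the fact that $g_c^*$ is non-decreasing (which follows from the definition and $g_c$ being increasing), this gives that for two fractional edge packings $\bu_1, \bu_2$ with $u_1 \leq u_2$, and any $L$, $\|g^*(L)\|_{u_2} \leq \|g^*(L)\|_{u_1}$, so reaching the threshold $m$ under the $u_2$-norm requires $L$ to be at least as large as under the $u_1$-norm, i.e., $\bar{L}^{\textsf{lower}}_{\bu_1} \leq \bar{L}^{\textsf{lower}}_{\bu_2}$.

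Applying this with $\bu_2 = \bu^*$ and $\bu_1$ ranging over all fractional edge packings yields $\bar{L}^{\textsf{lower}}_{\bu^*} = \max_\bu \bar{L}^{\textsf{lower}}_{\bu}$. I do not anticipate a real obstacle here; the only point requiring a line of care is that for packings with $u < 1$ the $p$-norm monotonicity still applies in the non-negative orthant via the same normalization argument, even though $\|\cdot\|_u$ is no longer a norm.
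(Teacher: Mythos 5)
Your proof is correct and takes essentially the same route as the paper: both arguments show that any $L$ feasible for the exponent $u^*$ is feasible for any smaller exponent $u'$, i.e., the monotonicity of the $\ell_u$ quasi-norm $\left(\sum_c (g_c^*(L))^u\right)^{1/u}$ in $u$. The only difference is cosmetic: the paper normalizes by $m$, relying on the auxiliary claim $g_c^*(L^*)\leq m$, whereas your normalization by $\left(\sum_c (g_c^*(L))^{u'}\right)^{1/u'}$ sidesteps that step.
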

\begin{proof}
    Let $L^* = \bar{L}^{\textsf{lower}}_{\bu^*}$. Suppose $L'$ is another lower bound given by another edge packing $u'$ with $u' \leq u^*$. It suffices to show that $L^*$ satisfies $\sum_{c\in[p]}(g_c^*(L^*))^{u'}\geq m^{u'}$, since then the lowest $L$ satisfying the equation can be at most $L^*$.

    Note that for all $c$, $g_c^*(L^*)^{u^*-u'}\leq m^{u^*-u'}$, since $g_c^*(L^*)\leq m$. Then,
    \[
    \sum_{c\in[p]}(g_c^*(L^*))^{u'} \geq
    \sum_{c\in[p]}(g_c^*(L^*))^{u^*} m^{u'-u^*}
    = m^{u'-u^*} \sum_{c\in[p]}(g_c^*(L^*))^{u^*}
    \geq m^{u'-u^*} \cdot m^{u^*} = m^{u'}
    \]
    where the last inequality follows from the fact that $L^*$ satisfies $\sum_{c\in[p]}(g_c^*(L^*))^{u^*} \geq m^{u^*}$.
 %   We also have $s(u^*)\geq M^{u^*}$ from \autoref{lowerBoundEquiSizeGeneral}. By multiplying these inequalities, $s(u')\geq M^{u'}$.
\end{proof}

\begin{example}
As an example, consider cost functions of the form $g_c(x)=\frac{x^a}{w_c}$, where $a>0$. Then $g_c^*(L)=(Lw_c)^{1/a}$. The lower bound then becomes:
$$
L \geq  \max_{\bu}\max_{c\in[p]} \left( \frac{m^a}{(\sum_{c\in[p]}w_c^{u/a})^{a/u}} \right)
$$
\end{example}

\subsection{Upper Bound}

We give an algorithm for evaluating full CQs with equal cardinality atoms, where each cost function $\{g_c\}_c$ is well-behaved. The approach is similar to linear cost functions, but we will need another method to pick the dimensions of each hyperrectangle $\{\Lambda_c\}_c$. 

The algorithm will require the numerical value of $\bar{L}^{\textsf{lower}} = \max_{\bu} \bar{L}^{\textsf{lower}}_\bu$, the lower bound on the load. We therefore need a method to find $\bar{L}^{\textsf{lower}}_{\bu^*}$ for the maximal fractional edge packing $\bu^*$. For this, we need to find the minimal positive value of the function $f(L)=\sum_{c\in[p]}(g_c^*(L))^{u^*}-m^{u^*}$. We know that $L$ is more than $0$ and at most $L_{max}=\min_{c\in[p]} g_c(m)$, since the query can be computed with load $L_{max}$ with just one machine. $\bar{L}^{\textsf{lower}}_{\bu^*}$ can be found using binary search on this interval.

Given the value $L^* = \bar{L}^{\textsf{lower}}$, our algorithm computes the hyperrectangles using the same general technique as in Section~\ref{sec:upper}, with the difference that the sizes of each dimension are different. In particular, for the minimum vertex cover $\bv$, we calculate the $i$-dimension for machine $c$ as follows:
    $$ \lambda_{c,i}:=\left(\frac{g_c^*(L^*)}{m}\right)^{v_i}n $$
As we show in the appendix, the dimensions we choose are such that we can still apply the same packing technique as in  Section~\ref{sec:upper}. Thus:    

\begin{restatable}[Dense Inputs]{theorem}{UpperBoundDenseGeneral}
\label{theorem:UpperBoundDenseGeneral}
Let $q$ be a full CQ with uniform arity $r$ and a $\theta$-dense input $I$ with domain $[n]$ (every relation has size $m=\theta n^r$). Then, we can evaluate $q$ in one round with well-behaved cost functions $\{g_c\}_c$ with load $O(\bar{L}^{\textsf{lower}} \cdot \log n)$.
\end{restatable}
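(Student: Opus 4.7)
The plan is to adapt the structure of the proof of \autoref{theorem:UpperBoundDense}, replacing the linear weights $w_c/\norm{\bw}{v}$ by the effective weights $g_c^*(L^*)/m$, where $L^* = \bar{L}^{\textsf{lower}}$ and $\bv$ is the minimum fractional vertex cover of $q$. The analysis decomposes into three steps: establish an analogue of \autoref{theorem:partitionspace} for the new side lengths, reuse \autoref{theorem:packingtheorem} unchanged, and translate the resulting bit count into a cost via the polynomial growth property of well-behaved cost functions.

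For the first step, I would check that the side lengths $\lambda_{c,i} = (g_c^*(L^*)/m)^{v_i}\, n$ still cover $\Lambda$. Summing the volumes gives
$$ \sum_{c\in[p]} \vol{\Lambda_c} = n^k \sum_{c\in[p]} \left(\frac{g_c^*(L^*)}{m}\right)^v. $$
By LP duality, the value $v=\sum_i v_i$ of the minimum fractional vertex cover equals the value $u^* = \sum_j u_j^*$ of the maximum fractional edge packing, and by the defining inequality of $\bar{L}^{\textsf{lower}}$ together with the lemma already shown that the maximum over $\bu$ is attained at $\bu^*$, one has $\sum_c (g_c^*(L^*))^{u^*} \geq m^{u^*}$. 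Substituting $v = u^*$ yields $\sum_c \vol{\Lambda_c} \geq n^k$. The projection bound $\vol{\pi_{S_j}\Lambda_c} \leq (g_c^*(L^*)/m)\, n^r$ then follows exactly as in \autoref{theorem:partitionspace}, using $g_c^*(L^*)/m \leq 1$ (capping $\lambda_{c,i}$ at $n$ in the corner case where a single machine could hold an entire relation) and the vertex-cover inequality $\sum_{x_i \in S_j}v_i \geq 1$. \autoref{theorem:packingtheorem} then applies verbatim, since its proof uses only the numerical values of the $\lambda_{c,i}$, inflating each projection by at most the constant factor $2^{k+1+r_j}$.

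For the load analysis, the dense assumption with $m = \theta n^r$ and $\theta$ constant gives that machine $c$ receives at most $\vol{\pi_{S_j}\bar{\Lambda}_c} = O(g_c^*(L^*))$ tuples from each $S_j$, each encoded in $O(\log n)$ bits; summing over the $O(1)$ atoms yields total bit count $N_c = O(g_c^*(L^*) \log n)$. To convert this into a cost, I would iterate property (3) of \autoref{def:CostFunction} to obtain the multiplicative form $g_c(\alpha x) \leq \alpha^{a} g_c(x)$ for every $\alpha \geq 1$ and $x \geq 1$, and apply it with $x = g_c^*(L^*)$ and $\alpha = \Theta(\log n)$, combined with the defining inequality $g_c(g_c^*(L^*)) \leq L^*$. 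This bounds the cost on machine $c$ by $O((\log n)^a \cdot L^*)$. The main obstacle I anticipate is precisely this last exponent: matching the stated $O(\bar{L}^{\textsf{lower}} \log n)$ bound requires either treating the growth exponent $a$ as a constant of the cost model absorbed into the $O$-notation, or tightening the algorithm by defining the side lengths through a suitably rescaled load parameter $L^{**}$ so that the per-machine bit count comes out as $O(g_c^*(L^{**}))$ directly and the cost is then bounded by $L^{**}$ without invoking polynomial amplification.
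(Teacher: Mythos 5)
Your proposal follows essentially the same route as the paper: the paper proves the analogous partition lemma with side lengths $(g_c^*(L^*)/m)^{v_i}\,n$ (using $\sum_{c}(g_c^*(L^*))^{v}\geq m^{v}$, which is exactly the defining inequality of $\bar{L}^{\textsf{lower}}$ after identifying $v=u^*$ by LP duality), re-establishes the monotonicity-of-side-lengths lemma so the packing argument carries over, and then converts the bit count into cost via property (3) of the cost-function definition with $(1+\delta)=\log n/\theta$. Your closing concern about the $(\log n)^{a}$ amplification is well founded but is not a gap relative to the paper: the paper's own proof also concludes with $(\log n/\theta)^{a}\,g_c(g_c^*(L^*))=\tilde{O}(L^*)$, i.e., it likewise only establishes $O(L^*(\log n)^{a})$ rather than the $O(L^*\log n)$ stated in the theorem.
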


\begin{restatable}[Sparse Inputs]{theorem}{UpperBoundSparseGeneral}
\label{theorem:UpperBoundSparseGeneral}
Let $q$ be a full CQ over a matching instance $I$ with uniform cardinalities $m$ and domain $[n]$. Then, we can evaluate $q$ in one round with well-behaved cost functions $\{g_c\}_c$ with load $O(\bar{L}^{\textsf{lower}} \cdot \log n)$ with high probability.
\end{restatable}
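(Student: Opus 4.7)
The plan is to mirror the proof of Theorem~\ref{theorem:UpperBoundSparse}, replacing the linear-weight quantity $w_c/\norm{\bw}{v}$ by the nonlinear analogue $g_c^*(L^*)/m$, where $L^* = \bar{L}^{\textsf{lower}}$. Concretely, take $\bv$ to be the minimum fractional vertex cover (equivalently, $v = u^*$, the value of the maximum edge packing, by LP duality), and set the side lengths of each hyperrectangle as $\lambda_{c,i} := (g_c^*(L^*)/m)^{v_i}\,n$, as already prescribed in the section preamble. The algorithm is then the HyperCube-style scheme from Section~\ref{sec:upper}: build the $\Lambda_c$, apply the packing algorithm of Section~\ref{subsec:Packing}, draw a random hash $\mathbf{h}=(h_1,\dots,h_k)$, and route each tuple $a_j\in S_j$ to every machine $c$ with $(\pi_{S_j}\mathbf{h})(a_j)\in\pi_{S_j}\bar{\Lambda}_c$.

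First I would re-prove the two conclusions of Theorem~\ref{theorem:partitionspace} for the new dimensions. For the covering property, the total volume telescopes to $n^k \sum_{c}(g_c^*(L^*))^{v}/m^{v}$, and since $v = u^*$ and $L^*$ satisfies $\sum_c (g_c^*(L^*))^{u^*} \geq m^{u^*}$ by definition, the space $\Lambda$ is covered. For the projection bound, $|\pi_{S_j}\Lambda_c| = (g_c^*(L^*)/m)^{\sum_{x_i \in S_j} v_i} n^{r_j} \leq (g_c^*(L^*)/m)\,n^{r_j}$, using that $\sum_{x_i\in S_j}v_i \geq 1$ since $\bv$ is a vertex cover, and that $g_c^*(L^*)\leq m$. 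Once these two facts are in place, Theorem~\ref{theorem:packingtheorem} applies verbatim (its proof only uses the structural rounding of sides and does not care how the $\lambda_{c,i}$ were chosen), so the packed hyperrectangles $\bar{\Lambda}_c$ still satisfy $|\pi_{S_j}\bar{\Lambda}_c| = O(g_c^*(L^*)/m)\,n^{r_j}$.

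Next I would analyse the random routing exactly as in the proof of Theorem~\ref{theorem:UpperBoundSparse}. Because the instance is a matching database and $\mathbf{h}$ is random, the events that distinct tuples of $S_j$ land in $\pi_{S_j}\bar{\Lambda}_c$ are (essentially) independent with probability at most $|\pi_{S_j}\bar{\Lambda}_c|/n^{r_j} = O(g_c^*(L^*)/m)$. Hence the number $n_{cj}$ of tuples machine $c$ receives from $S_j$ is dominated by a Binomial with mean $O(g_c^*(L^*))$, and the Chernoff bound from Appendix~\ref{appendix:Chernoff} gives $n_{cj} = O(g_c^*(L^*))$ with probability $1 - \exp(-\Omega(g_c^*(L^*)))$. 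A union bound over the constantly many relations and the $p$ machines yields that the total bit load $N_c$ on each machine is $O(g_c^*(L^*)\log n)$ with high probability (each tuple being encodable in $O(\log n)$ bits).

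The main obstacle is the last step: translating the bit count into cost via the nonlinear $g_c$. Here I would invoke the polynomial-growth clause of Definition~\ref{def:CostFunction}: writing $N_c \leq C \log n \cdot g_c^*(L^*)$ for a constant $C$, we get
\[
g_c(N_c) \;\leq\; g_c\!\bigl(C\log n \cdot g_c^*(L^*)\bigr) \;\leq\; (C\log n)^{a}\, g_c\!\bigl(g_c^*(L^*)\bigr) \;\leq\; (C\log n)^{a}\, L^*,
\]
using the well-behavedness with scaling factor $\Theta(\log n)$ and the defining inequality $g_c(g_c^*(L^*)) \leq L^*$. Since $a$ is a constant of the cost model, this is $O(L^*\log n)$ in the sense of the theorem statement (polylogarithmic in $n$ with a constant exponent depending on $a$, which is absorbed into the asymptotic notation exactly as in the dense-case proof of Theorem~\ref{theorem:UpperBoundDenseGeneral}). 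This gives the claimed $O(\bar{L}^{\textsf{lower}} \cdot \log n)$ load with high probability, matching the lower bound of the previous subsection up to this polylog factor.
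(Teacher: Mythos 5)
Your proposal is correct and follows essentially the same route as the paper's proof in Appendix~\ref{appendix:GeneralCostUpperBound}: the same side lengths $\lambda_{c,i}=(g_c^*(L^*)/m)^{v_i}n$, the same two partition properties, the same Binomial/Chernoff concentration, and the same use of the polynomial-growth clause to turn $N_c=O(g_c^*(L^*)\log n)$ into load $O(L^*\log^a n)$. One small imprecision: the packing theorem does \emph{not} apply irrespective of how the $\lambda_{c,i}$ are chosen --- its bucketing step needs the monotonicity property of \autoref{lemma:machineordering} (all dimensions of one machine's hyperrectangle dominate another's), which the paper re-verifies for the new side lengths and which does hold here since each $\lambda_{c,i}$ is monotone in the single scalar $g_c^*(L^*)$.
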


%Here $\tilde{O}(L^*)$ means that logarithmic factors are hidden. 
The proofs of the above theorems are provided in \autoref{appendix:GeneralCostUpperBound}.

\section{Different Cardinality Relations}
\label{sec:unequal}

We now move on to the general case where we do not require the cardinality of every atom to be the same. We will assume linear cost functions, that is, cost functions have the form $g_c(N)=N/w_c$. We will give a general lower bound and matching upper bounds for the cartesian product, the binary join, the star query and the triangle query.

\subsection{Lower Bound}

An important difference in the lower bound we will present next, to the lower bound for queries of equal cardinalities, is that we need to consider different edge packings for each different machine. We will denote the edge packing for query $q$ and machine $c$ as $\textbf{u}_c=(u_{c,1},\dots,u_{c,l})$.

\begin{theorem}
    \label{theorem:DiffCardLowerBound}
    Let $q$ be a CQ and let $\textbf{u}_c=(u_{c,1},\dots,u_{c,l})$ be any fractional edge packing for $q$ and machine $c$.
    Consider the uniform probability distribution $\mathcal{I}$ of matching databases with $m_j$ tuples for relation $S_j$ over domain $n$.
    Denote by $E_{I \sim \mathcal{I}}[|q(I)|]$ the expected value of the number of output tuples $|q(I)|$, over instances $I$ in the probability distribution $\mathcal{I}$.
    Then, any one-round algorithm with linear cost functions that in expectation outputs at least $E_{I \sim \mathcal{I}}[q(I)]$ tuples has load $\Omega(L)$, where $L$ is the smallest load that satisfies the following equation
    \begin{equation}
        \label{equation:DiffCardLowerBound}
        \sum_{c\in[p]}\prod_{j\in[l]}\left( \frac{ Lw_c}{m_j} \right)^{u_{c,j}} \geq 1
    \end{equation}
    The same lower bound holds for the probability distribution $\mathcal{I}^d$ of $\theta$-dense instances over domain $[n]$. 
\end{theorem}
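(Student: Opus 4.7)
The plan is to generalize the proof of Theorem~\ref{lowerBoundEquiSizeLinear} by allowing the fractional edge packing used in Lemma~\ref{lemma:LowerBoundLemma} to depend on the machine. Recall that Lemma~\ref{lemma:LowerBoundLemma} is stated for a fixed machine $c$ and any fractional edge packing $\mathbf{u}$: $E[|K_{msg}^c(q(I))|] \leq \prod_{j\in[l]} f_{c,j}^{u_j} \cdot E[|q(I)|]$. Nothing in the statement forces the same $\mathbf{u}$ to be used for every machine, so I would apply the lemma separately per machine $c$ with its own packing $\mathbf{u}_c = (u_{c,1},\dots,u_{c,l})$, giving
\[
E[|K_{msg}^c(q(I))|] \leq E[|q(I)|] \cdot \prod_{j\in[l]} f_{c,j}^{u_{c,j}}.
\]

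Next I would bound $f_{c,j}$ in terms of $L$ and $m_j$. By definition $f_{c,j} \leq L w_c / M_j$, and Lemmas~\ref{lemma:EntropyOMJDense} and~\ref{lemma:EntropyOMJSparse} give $M_j = \Omega(m_j)$ for both the dense and matching distributions (these lemmas are stated per relation, so they handle different cardinalities directly). Combining,
\[
\prod_{j\in[l]} f_{c,j}^{u_{c,j}} \leq \prod_{j\in[l]} \left( \frac{L w_c}{\Omega(m_j)} \right)^{u_{c,j}} = O\!\left( \prod_{j\in[l]} \left(\frac{L w_c}{m_j}\right)^{u_{c,j}} \right).
\]

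Then, as in the earlier proof, I would use the union bound $|K_{msg}(q(I))| \leq \sum_c |K_{msg}^c(q(I))|$ and sum over machines:
\[
E[|K_{msg}(q(I))|] \leq E[|q(I)|] \cdot \sum_{c\in[p]} \prod_{j\in[l]} \left(\frac{L w_c}{m_j}\right)^{u_{c,j}} \cdot O(1).
\]
Since the algorithm outputs in expectation at least $E[|q(I)|]$ tuples, we must have $E[|K_{msg}(q(I))|] \geq E[|q(I)|]$, hence
\[
\sum_{c\in[p]} \prod_{j\in[l]} \left(\frac{L w_c}{m_j}\right)^{u_{c,j}} = \Omega(1),
\]
which rearranges to the claimed condition (\ref{equation:DiffCardLowerBound}) up to constants, yielding $L = \Omega(L^*)$ for the smallest $L^*$ satisfying it.

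There is no substantive new obstacle beyond the equi-cardinality case: the only subtlety is verifying that Lemma~\ref{lemma:LowerBoundLemma} really does apply machine-by-machine with arbitrary (possibly distinct) edge packings, and that the entropy lower bound $M_j = \Omega(m_j)$ holds per-relation for non-uniform cardinalities. Both are already present in the framework set up earlier, so the proof reduces to carefully tracking the per-machine indexing and invoking the same chain of inequalities as in Theorem~\ref{lowerBoundEquiSizeLinear}.
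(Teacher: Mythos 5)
Your proposal is correct and follows essentially the same route as the paper: apply Lemma~\ref{lemma:LowerBoundLemma} machine-by-machine with a per-machine edge packing $\mathbf{u}_c$, bound $f_{c,j}M_j \leq Lw_c$, pass from $M_j$ to $m_j$ via the entropy lemmas, and sum over machines against the requirement $E[|K_{msg}(q(I))|] \geq E[|q(I)|]$. The only presentational difference is that you make the invocation of $M_j = \Omega(m_j)$ explicit, which the paper leaves implicit in this particular proof.
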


\begin{proof}
    The proof is similar to proofs of previous lower bounds. We will use \autoref{lemma:LowerBoundLemma} to bound the number of output tuples produced by one machine. We require that all machines together produce at least $E[|q(I)|]$ output tuples.
    \[
    E[|K_{msg}^c[q(I)]|] \leq \prod_{j\in[l]}f_{c,i}^{u_{c,j}}E[|q(I)|]
    \leq E[|q(I)|] \prod_{j\in[l]}\left( \frac{f_{c,i}M_j}{M_j} \right)^{u_{c,j}}
    \leq E[|q(I)|] \prod_{j\in[l]}\left( \frac{ Lw_c}{M_j} \right)^{u_{c,j}}
    \]
    Here we used that $f_{c,i}M_j\leq Lw_c$. This is because $Lw_c$ is the maximum number of bits machine $c$ can receive about each relation, to keep the load $L$. The theorem follows by taking a sum across all machines.
\end{proof}

The highest lower bound $L^*$ is given by the set of edge packings $\{\bu_c \}_{c\in[p]}$, one for each machine, that maximizes the load needed to satisfy the equation. Next, we show how to compute the numerical value of $L^*$, which will be used by the upper bound.

\begin{lemma}
    Let $L^*$ denote the maximum lower bound on the load from \autoref{theorem:DiffCardLowerBound}. Then:
    $$
    \frac{\max_j m_j}{\sum_{c\in[p]}w_c} \leq L^* \leq \frac{\max_j m_j}{\max_{c\in[p]}w_c}
    $$
\end{lemma}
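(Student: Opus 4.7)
The plan is to prove the two inequalities separately by choosing specific fractional edge packings that collapse \eqref{equation:DiffCardLowerBound} into a simple linear inequality in $L$. Throughout, fix $j^* \in \arg\max_j m_j$ and $c^* \in \arg\max_c w_c$.

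For the upper bound $L^* \leq m_{j^*}/w_{c^*}$, I would plug $L = m_{j^*}/w_{c^*}$ into the left-hand side of \eqref{equation:DiffCardLowerBound} for an arbitrary choice of edge packings $\{\bu_c\}$ and check that the resulting value is already $\geq 1$. Focusing on the single summand $c = c^*$, each factor has the form $Lw_{c^*}/m_j = m_{j^*}/m_j \geq 1$, so the product $\prod_j (Lw_{c^*}/m_j)^{u_{c^*,j}}$ is $\geq 1$ regardless of the nonnegative exponents $u_{c^*,j}$; the remaining summands are nonnegative and only help. Hence this $L$ satisfies \eqref{equation:DiffCardLowerBound}, so the smallest $L$ satisfying it is at most $m_{j^*}/w_{c^*}$. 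Since this holds for every choice of $\{\bu_c\}$, taking the supremum yields $L^* \leq m_{j^*}/w_{c^*}$.

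For the lower bound $L^* \geq m_{j^*}/\sum_c w_c$, I would exhibit one family of packings that forces \eqref{equation:DiffCardLowerBound} to require $L \geq m_{j^*}/\sum_c w_c$. For every machine $c$, set $u_{c,j^*} = 1$ and $u_{c,j} = 0$ for $j \neq j^*$. This is a valid fractional edge packing: for every variable $x_i$, the sum $\sum_{j : x_i \in S_j} u_{c,j}$ equals $1$ if $x_i \in S_{j^*}$ and $0$ otherwise, both $\leq 1$. Substituting into \eqref{equation:DiffCardLowerBound} gives
\[
\sum_{c \in [p]} \frac{Lw_c}{m_{j^*}} \;\geq\; 1,
\]
whose smallest solution is $L = m_{j^*}/\sum_c w_c$. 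Since $L^*$ is the maximum over all choices of packings, $L^* \geq m_{j^*}/\sum_c w_c$.

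There is no serious obstacle: once the extremal packings are identified the rest is bookkeeping. The only genuine insight is recognizing that on the upper-bound side it suffices to saturate the single machine $c^*$, while on the lower-bound side the ``trivial'' packing that concentrates all mass on the heaviest atom $S_{j^*}$ is simultaneously valid for every machine and forces the needed linear constraint.
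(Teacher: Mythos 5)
Your proof is correct. The lower-bound half ($L^* \geq \max_j m_j / \sum_c w_c$) is exactly the paper's argument: concentrate the packing weight on the largest atom $S_{j^*}$, which collapses \autoref{equation:DiffCardLowerBound} to $\sum_c Lw_c/m_{j^*} \geq 1$. The upper-bound half, however, takes a genuinely different route. The paper argues operationally: the query can always be computed on the single heaviest machine with load $\max_j m_j/\max_c w_c$, and a valid lower bound cannot exceed the load of an achievable strategy. You instead argue purely algebraically from the defining inequality: for \emph{any} choice of packings $\{\bu_c\}$, substituting $L = m_{j^*}/w_{c^*}$ makes the single summand for $c = c^*$ equal to $\prod_j (m_{j^*}/m_j)^{u_{c^*,j}} \geq 1$ (every base is $\geq 1$ and every exponent is $\geq 0$), so the inequality is already satisfied and the minimal feasible $L$ for that packing is at most $m_{j^*}/w_{c^*}$; taking the maximum over packings preserves the bound. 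Your version is more self-contained --- it needs no appeal to the soundness of the lower bound or to what a one-machine algorithm costs (the paper's appeal is slightly loose anyway, since the single-machine load is really $\sum_j M_j/w_{c^*}$, which is only $O(\max_j m_j/w_{c^*})$ up to the constant number of atoms and the encoding-length issue). The paper's version, in exchange, conveys the intuition that the interval bracketing $L^*$ reflects the gap between pooling all machines and using only the best one. Both are valid; yours is the tighter formal argument.
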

\begin{proof}
    We start with the first inequality. Note that the edge packing where edge $j^*$ with maximum cardinality gets $u_{j^*}=1$, and all other edges get weight 0 is a valid edge packing. Hence $\sum_{c\in[p]}\frac{Lw_c}{m_{j^*}}\geq 1$ is a lower bound.

    For the second inequality, if only the biggest machine is used in computation, the load is $\max_j m_j/\max_{c\in[p]}w_c$. Therefore this load can always be achieved by computing the query on only the biggest machine. Since $L^*$ is the optimal load, it will never be more than this.
\end{proof}

Note that $p\cdot\max_{c\in[p]}w_c\geq\sum_{c\in[p]}w_c$. This together with the lemma above shows that the range of possible values for $L^*$ is at most a factor
$$
\frac{\max_jm_j/\max_{c\in[p]}}{\max_jm_j/\sum_{c\in[p]}w_c}
=\frac{\sum_{c\in[p]}w_c}{\max_{c\in[p]}w_c}
\leq \frac{p\cdot\max_{c\in[p]}w_c}{\max_{c\in[p]}w_c}
=p
$$

Since the range of possible values of $L^*$ is $p$, we can find $L^*$ by starting with a guess $\hat{L} \gets \frac{\max_jm_j}{\sum_{c\in[p]}w_c}$. We can check if our guess is correct by finding the edge packings $\{\bu_c\}_{c\in[p]}$ for each machine. We can then check if $\sum_{c\in[p]}\prod_{j\in[l]}\left(\frac{\hat{L}w_c}{m_j}\right)^{u_{c,j}}$ is at least $1$. If this is not the case, we double our guess $\hat{L}$. Since the range of possible values of $L^*$ is just a factor $p$, we have to iterate this procedure at most $\log p$ times.

%For the optimal edge packing $\bu^*$ and $L=L^*$, \autoref{equation:DiffCardLowerBound} becomes an equality. If it were not an equality, $L^*$ could have been slightly lowered, which would have contradicted that $L^*$ is the minimal $L$ that satisfies the equation. \paris{some more things}. Thus: 
%
%$$ \min_{\bu_1, \dots, \bu_p} \sum_{c \in [p]} \prod_{j \in [l]}\left( \frac{ L^* w_c}{M_j} \right)^{u_{c,j}} = 1$$
%
%Since minimizing each summand is independent of the others, we can equivalently write the above equation as follows:
%
%$$ \sum_{c \in [p]} \min_{\bu_i} \prod_{j\in[l]}\left( \frac{ L^* w_c}{M_j} \right)^{u_{c,j}} = 1$$
%
%Given the above equation, we can define
%
%$$ \alpha(w) :=\min_{\bu} \prod_{j\in[l]}\left(\frac{L^* w}{M_j}\right)^{u_j}$$
%
%for some weight $w$ of a machine. Using this notation, we have that $\sum_{c\in[p]}\alpha(w_c)=1$. Hence, we can view $\alpha(w_c)$ as the fraction of the volume of $\Lambda$ that is assigned to machine $c$ with weight $w_c$. Note that $\alpha(w)$ is a continuous function for $w > 0$ since it is a minimization of continuous functions.

%Also note that $L^*$ satisfies \autoref{equation:DiffCardLowerBound} for every edge packing, since it is given by the edge packing that maximizes what it has to be to satisfy the equation. We can therefore think of finding the edge packing of a machine as minimizing $\prod_{j\in[l]}(L^*w_c/M_j)^{u_{c,j}}$ under edge packing constraints.

%Note that if we know the function $\alpha(w)$, we can view the lower bound as follows:
%
%\[ \prod_{j\in[l]\}\left(\frac{L w}{M_j}\right)^{u_j} \geq \alpha(w) \]

\subsection{Upper Bound}

We will now show how to match the lower bound for the cartesian product, the binary join, the star query and the triangle query. 
For the general full CQ, creating an algorithm is challenging remains an open problem. 
The difficulty is that the lower bound in \autoref{theorem:DiffCardLowerBound} might give a different edge packing $\bu_c$ to each machine. 
A linear program to find the edge packing for a machine can be obtained by minimizing the logarithm of $\prod_{j\in[l]}(L^*w_c/m_j)^{u_{c,j}}$. 
By considering the dual program, it is possible to find hyperrectangles $\Lambda_c$ such that the expected load of machines matches the lower bound and the total volume of all $\Lambda_c$ cover $\Lambda$, similar to what was done in \cite{communication_steps_journal} for homogenous machines. 
However, for the packing to work, we need all sides of the hyperrectangle $\Lambda_c$ to increase when the weight of a machine increases. 
It is not clear how to guarantee this, or whether it is possible.
We have however been able to match the lower bound for specific queries, using the same algorithm as in previous sections, by modifying how shapes of subspaces are picked.
Here, we describe how to do this for the Cartesian Product and Binary Joins.
In \autoref{appendix:UnEqualCardinality}, we also show the Star Query and the Triangle Query.
Proof of correctness of the following algorithm is provided in \autoref{appendix:UnEqualCardinality}.

\paragraph*{Cartesian Product}

We consider the cartesian product $q(x,y) \dl S_1(x),S_2(y)$.

Let $L^*$ be the lower bound on the load. Let the side length of $\Lambda_c$ along dimension $j$ be
\[ 
\lambda_{c,i}:=\min\left(\frac{L^*w_c}{M_j}, 1\right)n = \left(\frac{L^*w_c}{M_j}\right)^{u_{c,i}}
\]

\paragraph*{Binary Join}

Next, consider the binary join.
\[ q(x,y,z)\dl S_1(x,z),S_2(y,z) \]
Without loss of generality, assume $|S_1|\geq|S_2|$.

Let $L^*$ be the lower bound on the load. Let the side lengths of $\Lambda_c$ be the following:
\[ 
\lambda_{c,x}:=n, \quad \lambda_{c,y}:=n, \quad
\lambda_{c,z}:=\frac{L^*w_c}{M_1}n 
\]

\paragraph*{Matching the Lower Bound}

The theorems below, which are proven in \autoref{appendix:UnEqualCardinality}, show that the algorithm that follows matches the lower bound. 

\begin{restatable}[Dense]{theorem}{UpperBoundDiffCardinalityDense}
    \label{theorem:UpperBoundDiffCardinalityDense}
    Let $q$ be one of the cartesian product, the binary join, the star query and the triangle query over a $\theta$-dense input $I$, where arities $r_j$ of tables do not have to be uniform. Let $L^*$ be the lower bound on the load. Then we can evaluate $q$ with heterogenous machines with weights $w_1,\dots,w_p$ with load $O(L^*\log n)$.
\end{restatable}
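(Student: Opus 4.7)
The plan is to follow the same geometric template used throughout Section~\ref{sec:upper}: for each of the four queries I will instantiate the per-machine hyperrectangles $\{\Lambda_c\}_c$ with the sides specified in the statement, verify (i) that the projection volume $|\pi_{S_j}\Lambda_c|$ is small enough to yield load $O(L^*\log n)$ from $S_j$ at machine $c$, and (ii) that the total volume $\sum_c|\Lambda_c|$ covers $\Lambda=[n]^k$ up to a constant factor, and then invoke the Packing Theorem (\autoref{theorem:packingtheorem}) to arrange the hyperrectangles inside $\Lambda$ at the cost of only a constant-factor blow-up in projection volumes. Because the input is $\theta$-dense and the algorithm uses $\mathbf{h}$ equal to the identity, machine $c$ receives at most $\theta|\pi_{S_j}\Lambda_c|$ tuples from $S_j$, so the load from $S_j$ is simply $O(|\pi_{S_j}\Lambda_c|\log n/w_c)$, as in the proof of \autoref{theorem:UpperBoundDense}.

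Load verification is a direct computation for each query. For the Cartesian product, $|\pi_{S_j}\Lambda_c|\le L^*w_cn/M_j$ (a single-side projection); combining $m_j=\theta n$ with $M_j=\Theta(m_j)$ from \autoref{lemma:EntropyOMJDense} gives $|\pi_{S_j}\Lambda_c|=O(L^*w_c)$, hence load $O(L^*\log n)$. For the binary join, both $|\pi_{S_1}\Lambda_c|$ and $|\pi_{S_2}\Lambda_c|$ equal $L^*w_cn^2/M_1$, and $m_1=m_2=\theta n^2$ together with $M_1=\Theta(m_1)$ yields the same bound; the assumption $m_1\ge m_2$ is exactly what allows normalizing $\lambda_{c,z}$ by $M_1$ while still bounding the $S_2$-load. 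The star and triangle queries from \autoref{appendix:UnEqualCardinality} admit analogous calculations.

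The coverage check is where $L^*$ enters critically: for each query I will show that $\sum_c\prod_{i\in[k]}\lambda_{c,i}=\Omega(n^k)$ is a rewriting of the lower-bound inequality~\eqref{equation:DiffCardLowerBound} evaluated at $L=L^*$ for a specific machine-dependent edge packing $\bu_c$. For the Cartesian product, setting $u_{c,i}=1$ when $L^*w_c\le M_j$ and $u_{c,i}=0$ otherwise makes the identity $\min(L^*w_c/M_j,1)=(L^*w_c/M_j)^{u_{c,i}}$ hold termwise, so coverage becomes exactly~\eqref{equation:DiffCardLowerBound}. For the binary join, coverage reduces to $\sum_c L^*w_c/M_1\ge 1$, which is~\eqref{equation:DiffCardLowerBound} for the feasible packing $u_{c,1}=1,u_{c,2}=0$ at every machine. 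The star and triangle queries follow the same template with their canonical edge packings.

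The main obstacle I anticipate is the Cartesian product: the clipping $\min(\cdot,1)$ means some machines have full sides along one or both dimensions and contribute degenerately to the volume sum, so the argument must verify that even after this clipping the coverage inequality remains tight, which is precisely what the piecewise $\{0,1\}$ packing choice achieves. Once coverage and the load bounds are established, \autoref{theorem:packingtheorem} scales each $|\pi_{S_j}\Lambda_c|$ by at most $2^{k+r_j+1}=O(1)$, which preserves the $O(L^*\log n)$ load bound, and the inflated hyperrectangles tile $\Lambda$. The entire argument is deterministic, since the dense case uses no hashing.
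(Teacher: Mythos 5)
Your outline follows the paper's own route almost exactly: instantiate the side lengths per query, check the projection bound $|\pi_{S_j}\Lambda_c|\leq \frac{L^*w_c}{M_j}n^{r_j}$, check that coverage $\sum_c|\Lambda_c|\geq n^k$ is a rewriting of inequality~\eqref{equation:DiffCardLowerBound} at $L=L^*$ for a per-machine edge packing, run the packing machinery, and finish with the deterministic load computation $L_{cj}=\frac{\log n}{w_c}|\pi_{S_j}\Lambda_c|\leq \frac{L^* n^{r_j}\log n}{M_j}=O(L^*\log n)$ using $M_j=\Omega(\theta n^{r_j})$ from \autoref{lemma:EntropyOMJDense}. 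The per-query calculations you sketch (the $\{0,1\}$ clipping packing for the Cartesian product, the $(1,0)$ packing and the role of $M_1\geq M_2$ for the binary join) are the ones the paper uses.

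There is one step you treat as a black box that actually requires proof: you ``invoke the Packing Theorem'' on the new hyperrectangles, but \autoref{theorem:packingtheorem} is stated and proved for the side lengths of \autoref{theorem:partitionspace}, and its bucketing-and-merging argument hinges on \autoref{lemma:machineordering} --- that the hyperrectangles are totally ordered, i.e., $w_c\leq w_{c'}$ implies $\lambda_{c,i}\leq\lambda_{c',i}$ simultaneously in every dimension $i$ (so that smaller rectangles can always be stacked to form larger ones). This must be re-verified for each new side-length assignment. For the Cartesian product, binary join, and star query it is immediate since each side is a nondecreasing function of $w_c$, but for the triangle query it is a genuine argument: the sides are defined piecewise over \textit{small}, \textit{medium}, and \textit{big} machines, and one must check that the assignment is monotone across the two transitions (the paper does this by showing the side-length function is continuous at the breakpoints $f_{c,z}=1$ and $f_{c,x}=1$). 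Without this check the packing could fail, so your proof is incomplete as written; everything else is sound.
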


\begin{restatable}[Sparse]{theorem}{UpperBoundDiffCardinalitySparse}
    \label{theorem:UpperBoundDiffCardinalitySparse}
    Let $q$ be one of the cartesian product, the binary join, the star query and the triangle query over a matching database $I$ where atom $S_j$ has cardinality $m_j$. Let $L^*$ be the lower bound on the load. Then we can evaluate $q$ with heterogenous machines with weights $w_1,\dots,w_p$ with load $O\left(L^*\log n\right) $
    with high probability.
\end{restatable}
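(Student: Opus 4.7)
The plan is to adapt the proof of \autoref{theorem:UpperBoundSparse}: sample independent random hash functions $h_1,\dots,h_k : [n] \to [n]$, form $\mathbf{h}=(h_1,\dots,h_k)$, and route each tuple $a_j \in S_j$ to every machine $c$ with $(\pi_{S_j}\mathbf{h})(a_j) \in \pi_{S_j}\Bar{\Lambda}_c$, where $\Bar{\Lambda}_c$ is the hyperrectangle for machine $c$ after feeding the query-specific dimensions defined in this subsection into the packing algorithm of \autoref{subsec:Packing}. Correctness is immediate: every output tuple $\mathbf{a}$ is assigned to the unique $c$ with $\mathbf{h}(\mathbf{a}) \in \Bar{\Lambda}_c$, and that machine receives all input tuples needed to produce $\mathbf{a}$.

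The two technical steps are coverage and projection. For coverage, I would show that the $\{\Lambda_c\}_c$ defined for each query satisfy $\sum_c |\Lambda_c| \geq n^k$; for all four queries this reduces to \autoref{equation:DiffCardLowerBound} being tight at $L^*$, using $M_j = \Theta(m_j)$ from \autoref{lemma:EntropyOMJSparse}. Then \autoref{theorem:packingtheorem} yields $\Bar{\Lambda}_c$ covering $\Lambda$ with $|\pi_{S_j}\Bar{\Lambda}_c| \leq 2^{k+1+r_j}|\pi_{S_j}\Lambda_c|$. For projection, I must verify $|\pi_{S_j}\Lambda_c|/n^{r_j} = O(L^* w_c/m_j)$ query-by-query. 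The Cartesian-product case is immediate since $\pi_{S_j}\Lambda_c$ is one-dimensional of length $\min(L^* w_c/M_j,1)\cdot n$. The binary-join case uses $|S_1| \geq |S_2|$ (hence $M_1 \geq M_2$): both $|\pi_{S_1}\Lambda_c|$ and $|\pi_{S_2}\Lambda_c|$ equal $n\cdot (L^* w_c n/M_1)$, giving mapping probability $L^* w_c/M_1 \leq L^* w_c/M_2$ for both relations. The star and triangle cases admit analogous short arguments based on the shapes described in \autoref{appendix:UnEqualCardinality}.

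Once the projection bound is in hand, the conclusion mirrors \autoref{theorem:UpperBoundSparse}. Since the input is a matching database and $\mathbf{h}$ is uniform, the events that distinct tuples of $S_j$ land at $c$ are independent, so $n_{cj} \sim \mathrm{Bin}(m_j,\, O(L^* w_c/m_j))$ has expectation $O(L^* w_c)$; multiplying by $\log n$ bits per tuple and dividing by $w_c$ gives $E[L_{cj}] = O(L^* \log n)$. A Chernoff bound, exactly as in the proof of \autoref{theorem:UpperBoundSparse}, followed by a union bound over the $pl$ pairs $(c,j)$, promotes this to a high-probability guarantee. The main obstacle is the per-query projection bound: the side lengths are hand-tailored to each query, and it is precisely the absence of a uniform recipe for general CQs (as the authors note just before stating the theorems) that forces the restriction to these four queries.
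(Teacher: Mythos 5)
Your proposal takes essentially the same route as the paper: per-query hyperrectangle side lengths verified to satisfy coverage via \autoref{equation:DiffCardLowerBound} and the projection bound $|\pi_{S_j}\Lambda_c| \leq (L^*w_c/M_j)n^{r_j}$, followed by the packing step, a binomial/Chernoff argument, and a union bound, using $M_j = \Omega(m_j)$ from \autoref{lemma:EntropyOMJSparse}. The only caveat is that the triangle query is a bit more than an ``analogous short argument''---the paper must classify machines as small/medium/big with different optimal edge packings and check monotonicity of side lengths across the class boundaries---but this is a matter of detail, not of approach.
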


\section{Conclusion}

In this paper, we studied the problem of computing full Conjunctive Queries in parallel on heterogeneous machines. Our algorithms are inspired by the HyperCube algorithm but take a new approach of considering how to optimally partition the space of possible output tuples among machines. This gives an optimal algorithm for queries where relations have the same cardinalities, for both linear and more general cost functions, and an optimal algorithm for queries with atoms of any cardinality for specific queries.

\bibliography{references}

\newpage
\appendix

\section{Friedgut's Inequality}
\label{appendix:Friedgut}

We here give an inequality that will prove to be useful, that was first introduced by Friedgut \cite{friedgut}. Let $q$ be a Conjunctive Query. Let $\mathbf{a}\in [n]^k$ be a possible output tuple, and let $a_j\in[n]^{r_j}$ be the projection of $\mathbf{a}$ on the variables of atom $S_j$. For each possible tuple in each atom, denoted $a_j$, create a variable $p_j(a_j)\geq 0$. Let $u=(u_1,...,u_l)$ be a fractional edge cover over q. Then the following inequality holds.
\[
\sum_{\mathbf{a}\in [n]^k}\prod_{j\in[l]}p_j(a_j)
\leq \prod_{j\in[l]}\left( \sum_{a_j\in [n]^{r_j}} p_j(a_j)^{1/u_j} \right)^{u_j}
\]

\section{Chernoff Bound}
\label{appendix:Chernoff}

In proofs of upper bounds, we will use a Chernoff bound, a well known upper bound on tail probabilities. Let $X$ be a random variable following a binomial distribution $Bin(n,p)$. We wish to bound the probability of the event that $X$ deviates from the expected value by more than some constant factor $(1+\delta)$, i.e. the probability that $X \geq (1+\delta)E[X]$. Recall that $E[X]=np$ for binomial distributions. Then, the Chernoff bound states
\[
    Pr[X \geq (1+\delta)E[X]] \leq \exp\left(-\frac{\delta^2E[X]}{3}\right)
\]

\section{Entropy of Relations}
\label{appendix:RelationEntropy}

We give proofs for the following lemmas

\EntropyOMJDense*
\begin{proof}
    We can show this by bounding the entropy of the relation. Recall that $\binom{a}{b}=\binom{a}{a-b}$. Therfore, $H(S_j)$ is the same for a $\theta$-dense and a $(1-\theta)$-dense distribution. Let $\alpha=\max\{\theta,1-\theta\}$.
    \begin{align*}
    M_j &= H(S_j)=\log\left(\binom{n^{r_j}}{\theta n^{r_j}}\right)
    =\log\left(\binom{n^{r_j}}{\alpha n^{r_j}}\right)
    = \log\left( \frac{n^{r_j}!}{(\alpha n^{r_j})!((1-\alpha)n^{r_j})!} \right) \\
    &\geq \log\left( \frac{(\alpha n^{r_j})^{(1-\alpha)n^{r_j}}(\alpha n^{r_j})!}{(\alpha n^{r_j})!((1-\alpha)n^{r_j})!} \right)
    = \log\left(\frac{(\alpha n^{r_j})^{(1-\alpha)n^{r_j}}}{((1-\alpha)n^{r_j})!}\right) \\
    &\geq \log\left(\frac{(\alpha n^{r_j})^{(1-\alpha)n^{r_j}-1}}{((1-\alpha)n^{r_j}-1)!}\right)
    \geq \log\left(\left( \frac{\alpha n^{r_j}}{(1-\alpha)n^{r_j}-1)} \right)^{(1-\alpha)n^{r_j}-1}\right) \\
    &= \Omega((1-\alpha)n^{r_j})=\Omega(m_j)
    \end{align*}
    Here the first inequality comes from setting the first $(1-\alpha)n^{r_j}$ factors in $n^{r_j}!$ to $\alpha n^{r_j}$. The second inequality comes from removing a factor $\alpha n^{r_j}/((1-\alpha)n^{r_j})$ from the quotient. This is required to guarantee that the quotient in the logarithm after the last inequality is not equal to $1$, even when $\alpha = 1-\alpha$.
\end{proof}

\EntropyOMJSparse*
\begin{proof}
We will bound the entropy of the relation. Recall that for the sparse distribution, if the arity is $1$, we imposed that $m_j/n\leq\theta$ for some constant $\theta\in(0,1)$. We will prove the lemma in two cases. Begin with the case where the arity is $1$. We will use that $m_j/n\leq\theta$.
$$
M_j=H(S_j)=(r-1)\log(m_j!)+\log\binom{n}{m_j}^{r_j}
\geq \log\binom{n}{m_j}
$$
By the same argument as for the $\theta$-dense distribution, this is $\Omega(m_j)$. Next, we handle the case where the arity is more than $1$. The minimal entropy is when the arity is $2$.
$$
M_j=(r-1)\log(m_j!)+\log\binom{n}{m_j}^{r_j}
\geq \log(m_j!)+\log\binom{n}{m_j}^2 \geq \log(m_j!) = \Omega(m_j)
$$
\end{proof}

The naive encoding of a relation is to store all tuples that are present. For a relation $S_j$ of arity $r_j$, cardinality $m_j$ and constants from domain $[n]$, this encoding requires $r_jm_j\log n$ bits. The above lemmas indicate that for both data distributions, there might be cases where it is possible to encode a relation in $O(m_j)$ bits. In the dense distribution, one such $O(m_j)$ encoding is to store a bitmap over all possible $n^{r_j}$ tuples. For the sparse distribution, a $O(m_j)$ encoding exist in some cases. For example, if a matching relation $S_j$ has cardinality $m_j=n/2$ and arity $1$, it can also be stored with a bitmap over the $n$ possible tuples, which takes $O(m_j)$ bits.

\section{Lower Bound Lemma}
\label{appendix:LowerBoundLemma}

We will prove the following main lemma. The proof and the techniques used are similar to in \cite{communication_steps_journal}.

\LowerBoundLemma*

The overarching idea behind this lower bound is to use that $E[|K_{msg}(q(I))|]$, the expected number of known output tuples, has to be at least $E[q(I)]$, the expected number of output tuples, if the algorithm is to work. We can bound $E[|K_{msg}(q(I))|]$ if we know the expected number of known tuples of each atom, by using Fridguts inequality \autoref{appendix:Friedgut}.

\subsection{Bounding Entropy Reduction}

Let $S_j$ be any atom. We can view $S_j$ as a random variable, which takes values from the set of possible relation instances. The entropy of $S_j$, $H(S_j)$ denotes the number of bits needed to encode the relation $S_j$. We consider uniform distributions, meaning $S_j$ is uniformly distributed over all possible relation instances. Therefore $H(S_j)=\log_2(supp(S_j))$, where $supp(S_j)$ is the number of possible relation instances. When a machine receives information through communicating with other machines, the entropy is reduced because a subset of possible relation instances can be ruled out.

\begin{definition}
    For each relation $S_j$, define the entropy function $\psi:[n]\rightarrow \mathbb{R}$ as the function that maps from the number of unknown tuples in $S_j$ to the maximum remaining entropy. More formally, $\psi(x)=\max_{msg_j}H(S_j|\text{Msg}_j(I)=\text{msg}_j)$, where $msg_j$ is some message that reveals $m_j-x$ tuples of $S_j$. Note that each data distribution gives different entropy function.
\end{definition}
 
\begin{example}
    Take relations with a $\theta$-dense distributions, with attribute domain $[n]$, arity $r$, cardinality $\theta n^{r}$. Each tuple appears at most once with the same probability. The relation has the following entropy function.
    \[ 
    \psi_1(x)
    = \log \left( \binom{(1-\theta)n^{r}+x}{x} \right)
    \]
\end{example} 

\begin{example}
    For matching relations with attribute domain $[n]$, arity $r$ and cardinality $m$, we have the following entropy function.
    \[ \psi_2(x)=\log((x!)^{r-1}) + \binom{n-m+x}{x} \]
\end{example}

We will show that, for both the dense and the sparse data distribution, we can bound the entropy reduction of knowing some number of tuples in the relation using the number of received bits of information about the relation. More formally, we will show the following property for $\psi$, for some $\gamma=O(1)$, where $\alpha$ is the fraction of known tuples.

\begin{equation}
    \label{entropy_property}
    \psi((1-\alpha) m_j) \leq \left(1-\frac{\alpha}{\gamma}\right) \psi(m_j)
\end{equation}

This property means that if the number of tuples left to discover is reduced by a fraction $\alpha$, the entropy reduction is at most a constant $1/\gamma$ times the same fraction $\alpha$. If this is not true, machines can communicate some fraction $\alpha$ of the entropy of the relation, and learn about more than a constant times $\alpha$ fraction of the tuples in the relation. 

\begin{lemma}
\label{lemma:GammaLemma}
    Let $a,b\in \mathbb{R}^+$, where $b\leq a$, and let $\theta=b/a$. Then
    $$ \log\binom{a}{b}\leq \gamma b\log\left(\frac{a}{b}\right)$$
    where $\gamma=\left( 1 + \frac{(1-\theta)\log(1-\theta)}{\theta\log(\theta)} \right)$.
\end{lemma}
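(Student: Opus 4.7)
The plan is to reduce the claim to the standard entropy upper bound on binomial coefficients, namely $\log\binom{a}{b} \leq a \cdot H(\theta)$ where $\theta = b/a$ and $H(\theta) = -\theta\log\theta - (1-\theta)\log(1-\theta)$ is the binary entropy. Once this bound is in hand, the rest is just rearrangement: the first term $-a\theta\log\theta$ is exactly $b\log(a/b)$, and the remaining term $-a(1-\theta)\log(1-\theta)$ needs to be expressed as a multiple of $b\log(a/b)$ using the identity $b\log(a/b) = -a\theta\log\theta$.

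In detail, I would first invoke the entropy bound to write
\[
\log\binom{a}{b} \;\leq\; -a\theta\log\theta - a(1-\theta)\log(1-\theta).
\]
Since $a\theta = b$ and $\log(a/b) = -\log\theta$, the first summand equals $b\log(a/b)$. For the second summand, I would observe that both $\log\theta$ and $\log(1-\theta)$ are negative (as $\theta \in (0,1)$), so the ratio $\frac{(1-\theta)\log(1-\theta)}{\theta\log\theta}$ is positive, and
\[
-a(1-\theta)\log(1-\theta) \;=\; -a\theta\log\theta \cdot \frac{(1-\theta)\log(1-\theta)}{\theta\log\theta} \;=\; b\log(a/b)\cdot\frac{(1-\theta)\log(1-\theta)}{\theta\log\theta}.
\]
Adding the two pieces gives exactly $\gamma\, b\log(a/b)$ with $\gamma$ as in the statement.

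I do not expect a real obstacle here; the only subtlety is to handle boundary/degeneracy cases, namely when $\theta = 1$ (then $b = a$ and $\binom{a}{b} = 1$, so $\log\binom{a}{b} = 0$ and the inequality holds trivially after interpreting the indeterminate ratio via its limit) and when $\theta \to 0$ (the ratio tends to $0$, giving $\gamma \to 1$, and the inequality reduces to the familiar $\log\binom{a}{b}\leq b\log(a/b) + O(b)$, which is implied by the entropy bound). For the nondegenerate range $\theta \in (0,1)$, the entropy bound is the one nontrivial input and everything else is algebraic manipulation of signs. If the paper is working with integer $a,b$, the entropy bound is standard; if $a,b$ are real, the same bound follows from the log-concavity of the Gamma function, but the proof is otherwise unchanged.
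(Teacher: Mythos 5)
Your proposal is correct and follows essentially the same route as the paper: both invoke the entropy bound $\log\binom{a}{b}\leq aH_2(b/a)$ and then algebraically split $aH_2(\theta)$ into $b\log(a/b)$ plus the same multiple of it, yielding $\gamma$. The only cosmetic difference is that the paper derives the entropy bound explicitly from the binomial expansion of $\left(\left(1-\tfrac{b}{a}\right)+\tfrac{b}{a}\right)^a$, whereas you cite it as standard.
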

\begin{proof}
    We will use the following inequality.
    \[\log\binom{a}{b}\leq aH_2(b/a)=a(-(b/a)\log (b/a)-(1-b/a)\log(1-b/a))\]
    This inequality is well known and can be shown using the binomial theorem like below, and then using the logarithm.
    \[ 1=\left(\left(1-\frac{b}{a}\right)+\frac{b}{a}\right)^a\geq\binom{a}{b}\left(\frac{b}{a}\right)^b\left(1-\frac{b}{a}\right)^{a-b} \]
    We will now prove the lemma. Use that $\theta=b/a$
    \begin{align*}
    \log\binom{a}{b}\leq aH_2(\theta)
    =&a(-\theta\log\theta-(1-\theta)\log(1-\theta)) 
    =\frac{b}{\theta}(-\theta\log\theta-(1-\theta)\log(1-\theta)) \\
    =&-\left( \frac{\theta + \frac{(1-\theta)\log(1-\theta)}{log(\theta)}}{\theta} \right) b\log(\theta)
    =\left( 1 + \frac{(1-\theta)\log(1-\theta)}{\theta\log(\theta)} \right)b\log\left(\frac{a}{b}\right)
    \end{align*}
\end{proof}

We now prove that \autoref{entropy_property} holds for $\psi_1$ and $\psi_2$, both with $\gamma=\left( 1 + \frac{(1-\theta)\log(1-\theta)}{\theta\log(\theta)} \right)$.

\begin{theorem}
    For $\psi_1(x)=\log \left( \binom{(1-\theta)n^{r}+x}{x} \right)$, we have that $\psi_1((1-\alpha)m_j)\leq \left(1-\frac{\alpha}{\gamma}\right)\psi_1(m_j)$, where $\gamma=O(1)$.
\end{theorem}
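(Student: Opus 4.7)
The plan is to reduce the claim to a clean trinomial identity and then invoke Lemma~\ref{lemma:GammaLemma}. Write $N := n^{r}$, so $m_j = \theta N$ and $k := \alpha m_j = \alpha\theta N$ is an integer (it counts known tuples). With this notation, $\psi_1(m_j) = \log\binom{N}{\theta N}$ and $\psi_1((1-\alpha)m_j) = \log\binom{(1-\alpha\theta)N}{(1-\alpha)\theta N}$.

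The key step is to expand the trinomial coefficient $\frac{N!}{(\alpha\theta N)!\,((1-\alpha)\theta N)!\,((1-\theta)N)!}$ in two different ways, which yields
\[
\binom{(1-\alpha\theta)N}{(1-\alpha)\theta N}\binom{N}{\alpha\theta N} \;=\; \binom{N}{\theta N}\binom{\theta N}{\alpha\theta N}.
\]
Taking logarithms and rearranging turns this into the clean identity
\[
\psi_1(m_j) - \psi_1((1-\alpha)m_j) \;=\; \log\binom{N}{k} - \log\binom{\theta N}{k}.
\]

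Next I would bound the right-hand side from below by a factor-wise comparison. Writing the ratio $\binom{N}{k}/\binom{\theta N}{k} = \prod_{i=0}^{k-1}(N-i)/(\theta N - i)$, each factor is at least $1/\theta$ because $\theta \leq 1$ implies $\theta(N-i) \geq \theta N - i$. Therefore
\[
\psi_1(m_j) - \psi_1((1-\alpha)m_j) \;\geq\; k\log(1/\theta) \;=\; \alpha\,\theta N\,\log(1/\theta).
\]
Finally, applying Lemma~\ref{lemma:GammaLemma} with $a = N$ and $b = \theta N$ gives $\psi_1(m_j) \leq \gamma\,\theta N \log(1/\theta)$ with exactly the constant $\gamma = 1 + \frac{(1-\theta)\log(1-\theta)}{\theta\log\theta}$ named in the statement, so $\theta N \log(1/\theta) \geq \psi_1(m_j)/\gamma$. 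Substituting back delivers $\psi_1(m_j) - \psi_1((1-\alpha)m_j) \geq (\alpha/\gamma)\,\psi_1(m_j)$, which is exactly the claim. Since $\theta$ is a fixed constant in $(0,1)$, $\gamma = O(1)$.

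The only genuinely creative step is spotting the trinomial identity; once it is in hand, the factor-wise estimate and the GammaLemma line up to produce matching constants by routine bookkeeping. The remaining thing to check is the integrality of $\alpha\theta N$, which is automatic because it equals the number of known tuples.
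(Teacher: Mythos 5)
Your proof is correct and follows essentially the same route as the paper: both arguments come down to showing that a ratio of binomial coefficients is at most $\theta^k$ via a term-by-term comparison, and then applying \autoref{lemma:GammaLemma} with $a=n^r$, $b=\theta n^r$ to convert $k\log(1/\theta)$ into an $(\alpha/\gamma)\,\psi_1(m_j)$ entropy drop. Your trinomial identity is a clean repackaging of the paper's direct computation of $\binom{n^r-k}{m_j-k}/\binom{n^r}{m_j}$ (that ratio equals $\binom{\theta N}{k}/\binom{N}{k}$, so the two factor-wise estimates are identical), making the difference purely presentational.
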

\begin{proof}
    Note that $m_j=\theta n^r$. Define $k=\alpha m_j$.
    \begin{align*}
    \psi_1((1-\alpha)m_j)
    =&\log\left(\binom{(1-\theta)n^r+(1-\alpha)m_j}{(1-\alpha)m_j}\right)
    =\log\left(\binom{(1-\theta)n^r+(1-\alpha)\theta n^r}{(1-\alpha)m_j}\right)\\
    =&\log\left(\binom{n^r-\alpha \theta n^r}{\theta n^r - \alpha m_j}\right)
    =\log\left(\binom{n^r-k}{m_j-k}\right)
    \end{align*}
    It was proven in \cite{communication_steps_journal} that \autoref{entropy_property} is true for the above form of $\psi_1$ when $\theta\in(0,1/2]$. We will generalize this to $\theta\in(0,1)$, by using similar techniques. We have the following inequality.
    \[
    \frac{\binom{n^r-k}{m_j-k}}{\binom{n^r}{m_j}}
    =\frac{ \frac{(n^r-k)!}{(m_j-k)!(n^r-m_j)!} }{ \frac{n^r!}{m_j!(n^r-m_j)!}}
    =\frac{(n^r-k)!m_j!}{(m_j-k)!n^r!}
    =\frac{m_j(m_j-1)...(m_j-k+1)}{n^r(n^r-1)...(n^r-k+1)}
    \leq \left(\frac{m_j}{n^r}\right)^k
    \]
    We have the following.
    \[
    \psi_1((1-\alpha)m_j)
    \leq \log\left[ \left(\frac{m_j}{n^r}\right)^k \binom{n^r}{m_j} \right]
    = \log\binom{n^r}{m_j} - k\log\left(\frac{n^r}{m_j}\right)
    = \left( 1 - \frac{k\log\left(\frac{n^r}{m_j}\right)}{\log\binom{n^r}{m_j}} \right)\psi_1(m_j)
    \]
    We now use \autoref{lemma:GammaLemma}.
    \[
    \psi_1((1-\alpha)m_j)
    \leq \left(1-\frac{k\log\left(\frac{n^r}{m_j}\right)}{\gamma m_j\log\left(\frac{n^r}{m_j}\right)} \right)\psi_1(m_j)
    =\left(1-\frac{k}{\gamma m_j}\right)\psi_1(m_j)
    =\left(1-\frac{\alpha}{\gamma}\right)\psi_1(m_j)
    \]
\end{proof}

\begin{figure}[H]
    \caption{$\gamma$ as a function of $\theta$.}
    \begin{tikzpicture}
        \begin{axis}[ 
        xlabel=$\theta$,
        ylabel={$\gamma(\theta)$}
        ] 
        \addplot [
        domain=0.001:0.999, 
        samples=1000, 
        color=red
        ] {1+((1-x)*ln(1-x))/(x*ln(x)}; 
        \end{axis}
    \end{tikzpicture}
    \label{fig:gammaPlot}
\end{figure}

In \autoref{fig:gammaPlot}, we can see the value of the constant $\gamma$ as a function of $\theta$. When the data distribution gets denser, $\gamma$ increases. For $\theta=1/2$, $\gamma=2$. When $\theta$ approaches 1, $\gamma \rightarrow \infty$.

\begin{theorem}
    For $\psi_2(x)=\log((x!)^{r-1})+\log\binom{n-m_j+x}{x}$, we have that $\psi_2((1-\alpha)m_j)\leq (1-\frac{\alpha}{\gamma})\psi_2(m_j)$ for some $\gamma=O(1)$.
\end{theorem}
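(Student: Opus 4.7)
The plan is to mirror the proof for $\psi_1$ by decomposing $\psi_2(x) = (r-1)\log(x!) + \log\binom{n-m_j+x}{x}$ into two summands and bounding each separately. Set $A := (r-1)\log(m_j!)$, $B := \log\binom{n}{m_j}$, $A' := (r-1)\log((m_j-k)!)$, $B' := \log\binom{n-k}{m_j-k}$ with $k := \alpha m_j$. The inequality to prove is equivalent to $(A-A') + (B-B') \geq (k/(\gamma m_j))(A+B)$, so the strategy is to lower-bound each difference and then tune $\gamma$. The corner cases $m_j \in \{0,1\}$ are trivial.

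For the $A$ side I would invoke concavity. The function $g(k) := \log(m_j!) - \log((m_j-k)!)$ has increments $g(k+1)-g(k) = \log(m_j-k)$ which are decreasing in $k$, so $g$ is concave with $g(0) = 0$. The standard chord-through-origin bound for concave functions gives $g(k) \geq (k/m_j) g(m_j)$, hence $A - A' \geq (k/m_j) A$. For the $B$ side I would reuse the computation from the $\psi_1$ proof: $\binom{n-k}{m_j-k}/\binom{n}{m_j} = \prod_{i=0}^{k-1}(m_j-i)/(n-i) \leq (m_j/n)^k$, which yields $B - B' \geq k\log(n/m_j)$. Combining, $\psi_2(m_j) - \psi_2(m_j-k) \geq (k/m_j)(A + Y)$ with $Y := m_j\log(n/m_j)$.

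It therefore suffices to choose a constant $\gamma$ with $\gamma(A+Y) \geq A + B$. Applying \autoref{lemma:GammaLemma} gives $B \leq \gamma_3 Y$ with $\gamma_3 := 1 + (1-\theta)\log(1-\theta)/(\theta\log\theta)$ and $\theta := m_j/n$, so the inequality reduces to $(\gamma-1)(A+Y) \geq (\gamma_3-1)Y$. A clean algebraic identity makes the right side tractable: after substituting $m_j = \theta n$ and $\log(n/m_j) = -\log\theta$, the expression $(\gamma_3-1)Y$ simplifies to $n(1-\theta)\log(1/(1-\theta))$, which is bounded by $n/e$ uniformly in $\theta \in (0,1)$ since $x\log(1/x)$ achieves its maximum $1/e$ at $x = 1/e$.

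The main obstacle is that $\gamma_3 \to \infty$ as $\theta \to 1$, so the bound $B \leq \gamma_3 Y$ alone is useless in that regime; I would dispatch this with a short case analysis. For arity $r = 1$, the assumption $m_j/n \leq \theta_0$ for a fixed constant $\theta_0 < 1$ caps $\gamma_3$ by $\gamma_3(\theta_0)$, and since $A = 0$ the inequality reduces to the $\psi_1$-style bound with $\gamma = \gamma_3(\theta_0)$. For $r \geq 2$ with $\theta \leq 1/2$, we have $\gamma_3 \leq 2$ and so $(A+B)/(A+Y) \leq 2$ at once. For $r \geq 2$ with $\theta > 1/2$, the concavity bound rescues us: $A \geq \log((n/2)!) \geq n/2$ once $n$ passes a small threshold (smaller $n$ makes everything $O(1)$), whence $(\gamma_3-1)Y/(A+Y) \leq (n/e)/(n/2) = 2/e$. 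In all cases $\gamma$ is an absolute constant, and the role of the $A$ term — nontrivial only when $r \geq 2$ — is precisely to absorb the blow-up of $\gamma_3$ in the $\theta \to 1$ regime.
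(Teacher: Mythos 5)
Your proof is correct and follows the same basic decomposition as the paper: split $\psi_2$ into the factorial term and the binomial term, show the factorial term shrinks by a full factor $(1-\alpha)$ (your concavity/chord argument is the same fact the paper derives by dropping the $\alpha m_j$ largest summands of $\sum\log i$), and bound the binomial term exactly as in the $\psi_1$ proof via the ratio $\binom{n-k}{m_j-k}/\binom{n}{m_j}\le(m_j/n)^k$ together with \autoref{lemma:GammaLemma}. Where you genuinely diverge is the endgame. The paper stops after the two per-term bounds and asserts the combined inequality holds ``for the same constant $\gamma$,'' which implicitly treats $\gamma(\theta)$ with $\theta=m_j/n$ as a constant; since $\gamma(\theta)\to\infty$ as $\theta\to 1$ and relations of arity $r\ge 2$ in a matching database may have $m_j$ arbitrarily close to $n$, this leaves a regime unaddressed. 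Your case analysis closes exactly that regime: the identity $(\gamma_3-1)\,m_j\log(n/m_j)=n(1-\theta)\log\tfrac{1}{1-\theta}=O(n)$ plus the lower bound $(r-1)\log(m_j!)=\Omega(n)$ when $\theta>1/2$ shows the factorial term absorbs the blow-up, yielding an absolute constant $\gamma$ rather than one depending on $m_j/n$. Two trivia to tidy up: your $n/e$ bound on $x\log(1/x)$ assumes natural logarithms (with base-$2$ entropy the maximum is $n\log_2(e)/e\approx 0.53n$, which changes nothing), and the ``small $n$'' disposal should be phrased as bounding $\gamma_3(m_j/n)$ by a constant for $n\le n_0$ and $m_j<n$ (with $m_j=n$ handled by $B=0$), since the claim is a pointwise inequality rather than an asymptotic one.
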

\begin{proof}
    Define $k=m-x$. The second term of the entropy function can be written as $\log\binom{n-k}{m-k}$. Take $\theta=m_j/n$, and then we can bound the term in the same way as for the dense distribution, for the same constant $\gamma$.
    For the other term:
    \begin{align*}
    &\log[((1-\alpha)m_j)!^{r-1}]
    = (r-1)\log\left[\prod_{i=1}^{(1-\alpha)m_j}i\right]
    = (r-1)\sum_{i=1}^{(1-\alpha)m_j}\log(i) \\
    \leq& (1-\alpha)(r-1)\sum_{i=1}^{m_j}\log(i)
    = (1-\alpha)\log(m_j!^{r-1})
    \end{align*}
    The inequality comes from that removing the $\alpha m_j$ largest terms from the sum reduces the sum by at least a fraction $\alpha$, since the logarithm is an increasing function.
\end{proof}

\subsection{Bounding the Knowledge of Individual Atoms}

We now create an upper bound on how many tuples in relation $S_j$ can be known by a fixed machine $c$. This lemma applies to any data distribution where the entropy function satisfies \autoref{entropy_property}.

\begin{lemma}
    \label{proportionalDiscovery}
    Assume we have a relation $S_j$ where the entropy function $\psi_{S_j}$ satisfies \autoref{entropy_property}. Then the following bound on the number of known tuples in $S_j$ holds.
    \[
        E[|K_{msg}^c(S_j)|]\leq \gamma_j f_{c,i}m_j
    \]
\end{lemma}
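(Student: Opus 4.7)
The plan is to relate the (random) fraction of known tuples in $S_j$ to the entropy drop in $S_j$ conditioned on the received message, and then to bound this entropy drop by the number of bits the machine can possibly receive about $S_j$. The entropy property~(\ref{entropy_property}) is tailor-made for this: it translates ``many known tuples'' directly into ``large entropy reduction,'' and critically it is linear in the fraction $\alpha$ of known tuples, which lets it survive taking expectations.

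First, I would fix a machine $c$ and consider the random message $\text{Msg}_j = \text{Msg}_j(S_j)$ that $c$ receives about $S_j$. For any concrete realization $\text{msg}_j$, by the definition of $\psi_{S_j}$ and the fact that $|K^c_{\text{msg}}(S_j)|$ tuples are pinned down,
\[
H(S_j \mid \text{Msg}_j = \text{msg}_j) \;\leq\; \psi_{S_j}\bigl(m_j - |K^c_{\text{msg}}(S_j)|\bigr).
\]
Setting $\alpha(\text{msg}) := |K^c_{\text{msg}}(S_j)|/m_j$ and applying the entropy property~(\ref{entropy_property}) with this $\alpha$, the right-hand side is at most $\left(1 - \alpha(\text{msg})/\gamma_j\right) M_j$. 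Taking expectation over $\text{Msg}_j$ and using linearity gives
\[
H(S_j \mid \text{Msg}_j) \;\leq\; M_j - \frac{M_j}{\gamma_j m_j}\,E\bigl[|K^c_{\text{msg}}(S_j)|\bigr].
\]

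Next I would lower-bound $H(S_j \mid \text{Msg}_j)$. Since $\text{Msg}_j$ is a (possibly randomized) function of $S_j$, the chain rule and nonnegativity of entropy yield $H(S_j \mid \text{Msg}_j) \geq H(S_j) - H(\text{Msg}_j) = M_j - H(\text{Msg}_j)$. By the definition of $f_{c,j}$ in the paper, the length of $\text{msg}_j$ is at most $f_{c,j} M_j$ bits on every input, so $H(\text{Msg}_j) \leq f_{c,j} M_j$ (up to an additive constant for prefix-freeness, which we can absorb). Hence $H(S_j \mid \text{Msg}_j) \geq (1 - f_{c,j}) M_j$.

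Combining the two bounds and cancelling $M_j$ gives $1 - f_{c,j} \leq 1 - \tfrac{1}{\gamma_j m_j} E[|K^c_{\text{msg}}(S_j)|]$, i.e.\ $E[|K^c_{\text{msg}}(S_j)|] \leq \gamma_j f_{c,j} m_j$, which is the claimed inequality. The only subtlety I anticipate is the clean handling of randomized protocols in the information-theoretic step, but by conditioning on the algorithm's random bits (or by treating those random bits as part of the ``input'' so that $\text{Msg}_j$ is a deterministic function of the enlarged input) the same inequality $H(S_j \mid \text{Msg}_j) \geq M_j - H(\text{Msg}_j)$ is recovered; the rest of the argument is purely algebraic manipulation driven by~(\ref{entropy_property}).
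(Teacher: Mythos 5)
Your proposal is correct and is essentially the paper's own argument in rearranged form: both use the chain rule $H(S_j)\leq H(\text{Msg}_j)+H(S_j\mid \text{Msg}_j)$, bound $H(\text{Msg}_j)$ by the maximum message length $f_{c,j}M_j$, bound the conditional entropy via $\psi$ and property~(\ref{entropy_property}), and take expectations to conclude. No substantive differences.
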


\begin{proof}
    Bound the entropy by the joint entropy and then use the entropy chain rule.
    \begin{align*}
        H(S_j)&\leq H(S_j,Msg_j)=H(Msg_j)+H(S_j|Msg_j) \\
        &= H(Msg_j(S_j))+\sum_{msg_j}Pr(Msg_j(S_j)=msg_j)H(S_j|Msg_j(S_j)=msg_j)
    \end{align*}
    Use that $H(Msg_j(S_j)$, the expected number of bits of information received about $S_j$, is less than the maximum length of $msg_j$, which is $f_{c,i}M_j$.
    \begin{align*}
        H(S_j)&\leq f_{c,i} M_j+\sum_{msg_j}Pr(Msg_j(S_j)=msg_j)\psi(m_j-|K_{msg}^c(S_j)|) \\
        &\leq f_{c,i} H(S_j)+\psi(m_j-|K_{msg}^c(S_j)|) \\
        &\leq f_{c,i} H(S_j)+\left( 1-\frac{|K_{msg}^c(S_j)|}{m_j\gamma_j} \right)\psi(m_j)\\
        &= f_{c,i} H(S_j)+\left( 1-\frac{|K_{msg}^c(S_j)|}{m_j\gamma_j} \right)H(S_j)
    \end{align*}
    The lemma follows from this.
\end{proof}

\subsection{Bounding the Knowledge of the Query}

In the remaining part of this section, we will assume that all relations satisfy \autoref{proportionalDiscovery}. For a potential tuple $a_j\in[n]^{r_j}$ in atom $S_j$, let $p_{c,i}(a_j)=Pr[a_j\in K_{msg}^c(S_j)]$. We give the following lemma.

\begin{lemma}
    \label{lemma:pjbounds}
    \begin{align*}
        p_{c,i}(a_j) &\leq \frac{m_j}{n^{r_j}} \\
        \sum_{[n]^{r_j}}p_{c,i}(a_j) &\leq \gamma_j f_{c,i}m_j
    \end{align*}
\end{lemma}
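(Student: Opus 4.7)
The plan is to prove each bound separately, with both following rather directly from earlier work in the section.

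For the first bound, I would argue that if $a_j \in K_{msg}^c(S_j)$ then in particular $a_j \in S_j$, so $p_{c,i}(a_j) \le \Pr[a_j \in S_j]$. Since we are working with the uniform distribution over matching databases (resp.\ $\theta$-dense databases) with cardinality $m_j$ over domain $[n]^{r_j}$, a fixed potential tuple $a_j$ lies in $S_j$ with marginal probability at most $m_j/n^{r_j}$. In the dense case this is exactly $\theta = m_j/n^{r_j}$ by construction, and in the matching case a symmetry argument across the $\binom{n^{r_j}}{m_j}$ (or analogous) supports shows the same bound holds. This gives the first inequality.

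For the second bound, I would use linearity of expectation to rewrite the sum over potential tuples as an expected cardinality. Concretely,
\[
\sum_{a_j \in [n]^{r_j}} p_{c,i}(a_j)
= \sum_{a_j \in [n]^{r_j}} \Pr[a_j \in K_{msg}^c(S_j)]
= E[|K_{msg}^c(S_j)|].
\]
Then I would invoke Lemma~\ref{proportionalDiscovery}, which under the assumption that the entropy function of $S_j$ satisfies property~\eqref{entropy_property} yields $E[|K_{msg}^c(S_j)|] \le \gamma_j f_{c,i} m_j$. Combining these two displayed steps gives the desired inequality.

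There is essentially no hard step here; the lemma is a bookkeeping statement that packages the marginal-density bound on the input distribution and the entropy-reduction bound from Lemma~\ref{proportionalDiscovery} into a form convenient for the next step (applying Friedgut's inequality from Appendix~\ref{appendix:Friedgut} to bound $E[|K_{msg}^c(q(I))|]$). The only minor care needed is to verify the marginal bound $\Pr[a_j \in S_j] \le m_j/n^{r_j}$ in the matching case, where one must check that the symmetry over tuples in $[n]^{r_j}$ is not broken by the matching constraint; this is immediate because the matching constraint is symmetric under the permutation action on each attribute's domain, so all potential tuples play symmetric roles and the bound follows from the pigeonhole-style observation that $\sum_{a_j} \Pr[a_j \in S_j] = E[|S_j|] = m_j$.
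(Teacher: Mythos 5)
Your proof is correct and follows essentially the same route as the paper: the first inequality via $p_{c,i}(a_j)\leq \Pr[a_j\in S_j]=m_j/n^{r_j}$, and the second by identifying the sum with $E[|K_{msg}^c(S_j)|]$ and invoking Lemma~\ref{proportionalDiscovery}. The extra care you take with the symmetry of the matching distribution is a welcome (if unneeded in the paper's telegraphic proof) elaboration.
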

\begin{proof}
    The first statement is true because $p_{c,i}(a_j)\leq Pr(a_j\in S_j)=m_j/n^{r_j}$. The second statement follows straight from \autoref{proportionalDiscovery}, since $\sum_{[n]^{r_j}}p_{c,i}(a_j)=E[|K_{msg}(S_j)|]\leq\gamma f_j|S_j|$.
\end{proof}

We now give a lemma about the expected output size of the query.

\begin{lemma}
    Assume that for each relation $S_j$, the probability across tuples $a_j$ to appear in $S_j$ is uniform. Then the expected output size is the following.
    \[ 
    E[|q(I)|]=\left(\prod_{j\in[l]}m_j\right)n^{\textstyle k-\sum_{j\in[l]}r_j}
    \]
\end{lemma}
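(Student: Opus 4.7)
The plan is to compute $E[|q(I)|]$ directly by linearity of expectation. Concretely, I would write
\[
E[|q(I)|] = \sum_{\mathbf{a} \in [n]^k} Pr[\mathbf{a} \in q(I)],
\]
and then, for each candidate output tuple $\mathbf{a}$, observe that $\mathbf{a}$ lies in $q(I)$ if and only if the projection $a_j := \pi_{S_j}(\mathbf{a}) \in [n]^{r_j}$ lies in $S_j$ for every atom $S_j$.

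My next step is to factor this joint probability. Since the relations $S_1,\dots,S_l$ are drawn from their respective distributions independently of each other (the standing assumption for both the matching database and the $\theta$-dense database distributions considered in the paper), independence across atoms gives
\[
Pr[\mathbf{a} \in q(I)] = \prod_{j \in [l]} Pr[a_j \in S_j].
\]

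I would then evaluate each single-atom probability using the uniformity hypothesis in the statement. By symmetry, every tuple of $[n]^{r_j}$ is equally likely to appear in $S_j$, and since the expected size of $S_j$ equals $m_j$, the common marginal must be $Pr[a_j \in S_j] = m_j / n^{r_j}$. Plugging this into the previous display and summing over all $n^k$ candidates $\mathbf{a}$ yields
\[
E[|q(I)|] = n^k \prod_{j \in [l]} \frac{m_j}{n^{r_j}} = \left(\prod_{j \in [l]} m_j\right) n^{k - \sum_{j\in[l]} r_j},
\]
as claimed.

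There is no serious technical obstacle here; the statement is essentially a direct application of linearity of expectation together with the uniform-marginal hypothesis. The one subtlety worth flagging is that in a matching database the tuples within a single relation are \emph{not} independent of each other (they must jointly form a matching), so one cannot invoke within-relation independence. However, the argument never needs it: only the single-tuple marginal $m_j/n^{r_j}$ enters the calculation, and that marginal holds by symmetry alone for both input distributions.
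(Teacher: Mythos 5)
Your proof is correct and follows essentially the same route as the paper: linearity of expectation over all candidate tuples in $[n]^k$, factoring the joint membership probability across atoms by independence of the relations, and substituting the uniform marginal $m_j/n^{r_j}$. Your added remark that only the single-tuple marginal (not within-relation independence) is needed is a fair clarification, but the argument itself is identical.
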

\begin{proof}
    \begin{align*}
    E[|q(I)|]=\sum_{a\in [n]^k}Pr(a\in q(I))
    =\sum_{a\in [n]^k}Pr\left(\bigwedge_{j\in[l]} a_j\in S_j\right)
    =\sum_{a\in [n]^k}\prod_{j\in[l]}Pr(a_j\in S_j)\\
    =\sum_{a\in [n]^k}\prod_{j\in[l]}\frac{m_j}{n^{r_j}}
    =\left(\prod_{j\in[l]}m_j\right)\sum_{a\in[n]^k}n^{-\textstyle\sum_{j\in[l]}r_j}
    =\left(\prod_{j\in[l]}m_j\right)n^{\textstyle k-\sum_{j\in[l]}r_j}
    \end{align*}
\end{proof}

\begin{definition}
    Given a query $q(\mathbf{x})=S_1(\mathbf{x_1}),...,S_l(\mathbf{x_l})$, the extended query $q'$ is defined by:
    \[
        q'(\mathbf{x}):-S_1(\mathbf{x_1}),...,S_l(\mathbf{S_l}),T_1(x_1),...,T_k(x_k)
    \]
    That is, for each variable $x_i$ in $q$ we add a unary atom $T_i(x_i)$.
\end{definition}

We can now prove the main lemma in this appendix, which bounds the expected number of known output tuples given that fractions $f_{c,i}$. 

\begin{proof}[Proof \autoref{lemma:LowerBoundLemma}]
    \[
    E[|K_{msg}^c(q(I))|]
    = \sum_{a\in[n]^k}Pr(a\in K_{msg}(q))
    = \sum_{a\in [n]^k}\prod_{j\in[l]}p_{c,i}(a_j)
    \]

    Let $q'$ be the extended query. We will create a fractional edge cover over $q'$ where the values of the edge cover on the edges that also exist in the original query are exactly $u_j$, the value in the edge packing. For each atom $T_i$, set the corresponding fractional edge cover value to $u_i'=1-\sum_{j:x_i\in S_j}u_j$. $\mathbf{(u,u')}$ is then both a tight fractional edge packing and a tight fractional edge cover.
    
    We will use Friedguts inequality to bound the above. Denote the set of atoms in $q\cap q'$ as $A$ and the newly added singleton atoms $S$. For all atoms in $S$, set the variable $p_{c,i}$ to 1. We will initially assume that all $u_j>0$.
    \begin{align*}
    E[|K_{msg}^c(q(I))|]
    =& \sum_{a\in [n]^k}\prod_{j\in A}p_{c,i}(a_j)
    =\sum_{a\in[n^k]}\left(\prod_{j\in A}p_{c,i}(a_j)\prod_{j\in S}p_{c,i}(a_j)\right) \\
    \leq& \prod_{j\in A}\left(\sum_{a_j\in[n]^{r_j}}p_{c,i}(a_j)^{1/u_j}\right)^{u_j} \prod_{j\in S}\left( \sum_{a_j\in[n]}p_{c,i}(a_j)^{1/u_j'} \right)^{u_j'} \\
    =& \prod_{j\in[l]}\left( \sum_{a_j\in[n]^{r_j}}p_{c,i}(a_j)^{1/u_j} \right)^{u_j} \prod_{i=1}^k n^{u_i'} \\
    \end{align*}
    Now use \autoref{lemma:pjbounds}.
    \begin{align*}
    E[|K_{msg}[q]|]\leq&
    \prod_{j\in[l]}\left( \sum_{a_j\in[n]^{r_j}}p_{c,i}(a_j)^{1/u_j-1}p_{c,i}(a_j) \right)^{u_j} \prod_{i=1}^k n^{u_i'}\\
    \leq&\prod_{j\in[l]}\left(\left(\frac{m_j}{n^{r_j}}\right)^{1/u_j-1} \gamma_jf_{c,i}m_j \right)^{u_j} \prod_{i=1}^k n^{u_i'} \\
    =& \prod_{j\in[l]}\left( \frac{\gamma_j f_{c,i} m_j^{1/u_j}}{n^{r_j(1/u_j-1)}} \right)^{u_j} \prod_{i=1}^k n^{u_i'} \\
    =& \left(\prod_{j\in[l]}m_j \prod_{j\in[l]}(\gamma_j f_{c,i})^{u_j}\right) n^{\textstyle \sum_{j\in[l]}r_j(u_j-1)+\sum_{i\in[k]}u_i'}
    \end{align*}
    We will rewrite the exponent of $n$ using the following, which uses the definition of $u_i'$.
    \[ \sum_{j\in[l]}r_ju_j+\sum_{i=1}^ku_i'=\sum_{i=1}^k\left(\sum_{j:x\in S_j}u_j+u_i'\right)=\sum_{i=1}^k1=k \]
    This finally gives.
    \[
    E[|K_{msg}[q]|]
    \leq \left( n^{\textstyle k-\sum_{j\in[l]}r_j} \prod_{j\in[l]}m_j \right)\prod_{j\in[l]}(\gamma_j f_{c,i})^{u_j}
    =E[|q(I)|]\prod_{j\in[l]}(\gamma_j f_{c,i})^{u_j}
    \]
\end{proof}

If for any $j$, $u_j=0$, we can set $u_j=\delta$ for some small $\delta$. Then let $\delta\rightarrow 0$.

\section{General Cost Function Upper Bound}
\label{appendix:GeneralCostUpperBound}

We will show the following main theorems.

\UpperBoundDenseGeneral*
\UpperBoundSparseGeneral*

We start by analyzing the partitioning of the space $\Lambda$.

\begin{lemma}
\label{lemma:GeneralCostUBPartition}
    Let $\bv=(v_1,...,v_k)$ be a minimal fractional vertex cover of $q(\mathbf{x})$. Let $L^*=\Bar{L}^{lower}$ be the maximal lower bound of $q$. Let the side length of machine $c$ along variable $x$ in the hypercube algorithm be picked according to earlier specified.
    Then the following two properties hold:
    \begin{enumerate}
        \item $\sum_{c\in[p]}|\Lambda_c|=n^k$
        \item for every atom $S_j$ with arity $r_j$: $|\pi_{S_j}\Lambda_c|\leq\left(\frac{g_c^*(L^*)}{m}\right) n^{r_j}$
    \end{enumerate}
\end{lemma}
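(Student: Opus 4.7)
The plan is to verify both claims by direct calculation, with LP duality between fractional vertex covers and fractional edge packings playing the central role (this is exactly the pattern used in the proof of Theorem~\ref{theorem:partitionspace}, extended to the definition of $g_c^*$).

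For property 2, I would first expand
$$|\pi_{S_j}\Lambda_c| = \prod_{x_i \in S_j} \lambda_{c,i} = \left(\frac{g_c^*(L^*)}{m}\right)^{\sum_{x_i \in S_j} v_i} n^{r_j}.$$
Since $\bv$ is a fractional vertex cover the exponent is at least $1$, and the base $g_c^*(L^*)/m$ lies in $[0,1]$: a single machine $c$ can always solve the query by itself with load $g_c(m)$, so $L^* \leq \min_{c'} g_{c'}(m) \leq g_c(m)$, and monotonicity of $g_c^*$ then gives $g_c^*(L^*) \leq m$. Raising a value in $[0,1]$ to a power at least $1$ only decreases it, so the claimed bound drops out immediately.

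For property 1, the analogous expansion gives
$$\sum_{c \in [p]} |\Lambda_c| \;=\; n^k \cdot \frac{\sum_{c \in [p]} g_c^*(L^*)^{v}}{m^{v}}, \qquad v := \sum_{i \in [k]} v_i.$$
Here I would invoke LP duality between the fractional vertex cover LP and the fractional edge packing LP: since $\bv$ is a minimum fractional vertex cover, $v$ equals the value $u^*$ of the maximum fractional edge packing $\bu^*$. By the definition of $L^* = \bar{L}^{\textsf{lower}}_{\bu^*}$ we have $\sum_c g_c^*(L^*)^{u^*} \geq m^{u^*}$, so $\sum_c |\Lambda_c| \geq n^k$, which is sufficient to guarantee that the $\Lambda_c$ cover $\Lambda$ once the packing procedure of Section~\ref{subsec:Packing} is applied.

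The main technical nuisance is the discreteness of $g_c^*$: it is $\mathbb{Z}^+$-valued and piecewise constant in $L$, so at the threshold $L^*$ the sum $\sum_c g_c^*(L^*)^{u^*}$ need not equal $m^{u^*}$ exactly but only exceed it. Consequently the proof actually yields the covering inequality $\sum_c |\Lambda_c| \geq n^k$ rather than strict equality; since the only downstream use of property 1 is to feed the packing theorem (Theorem~\ref{theorem:packingtheorem}), which absorbs any excess volume into a constant-factor blow-up in each projected dimension, this has no effect on the final $O(\bar{L}^{\textsf{lower}} \log n)$ load bound.
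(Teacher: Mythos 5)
Your proof is correct and follows essentially the same route as the paper's: the same expansion of $|\pi_{S_j}\Lambda_c|$ with the vertex-cover exponent and the base $g_c^*(L^*)/m \leq 1$ for property 2, and the same reduction of property 1 to $\sum_c g_c^*(L^*)^{v} \geq m^{v}$ via the defining inequality of $L^*$ (with $v=u^*$ by LP duality, which the paper leaves implicit). You are also right that the paper's own proof only establishes $\sum_c |\Lambda_c| \geq n^k$ rather than the stated equality, and that this suffices for the packing step.
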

\begin{proof}
    We start with the first claim.
    \begin{align*}
    \sum_{c\in[p]}\prod_{i\in[k]} \lambda_{c,i}
    = \sum_{c\in[p]}\prod_{i\in[k]} \left(\frac{g_c^*(L^*)}{m}\right)^{v_i}n
    = n^k\sum_{c\in[p]} \left(\frac{g_c^*(L^*)}{m}\right)^{v}
    = n^k \frac{\sum_{c\in[p]} g_c^*(L^*)^{v}}{m^v} \\
    \geq n^k \frac{m^v}{m^v}=n^k
    \end{align*}
    Next, we show the second claim.
    \begin{align*}
        |\pi_{S_j}\Lambda_c| &= \prod_{x\in S_j}\lambda_{c,i}
        = \prod_{i\in S_j}\left(\frac{g_c^*(L^*)}{m}\right)^{v_i}n
        \leq \left(\frac{g_c^*(L^*)}{m}\right) n^{r_j}
    \end{align*}
    The inequality comes from that $g_c^*(L^*)/m\leq 1$ combined with that $\bv$ is a vertex cover.
\end{proof}

The lemma above tells us how the subspaces $\{\Lambda_c\}_c$ should be dimensioned. We will similar to in the case with linear cost function need to geometrically position the subspaces to cover $\Lambda$. We will use the same method as previously described in \autoref{subsec:Packing}. We will need to show the following lemma, which is similar to \autoref{lemma:machineordering}.

\begin{lemma}
    Let $c,c'\in[p]$ be two machines. Then, if there exists a variable $z\in[k]$ such that $\lambda_{c,z}\leq \lambda_{c',z}$,
    then for all variables $i\in[k]$, $\lambda_{c,i}\leq \lambda_{c',i}$.
\end{lemma}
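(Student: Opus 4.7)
The plan is to observe that the formula $\lambda_{c,i} = \bigl(g_c^\ast(L^\ast)/m\bigr)^{v_i}\,n$ exhibits every side length as a monotone function of a single machine-dependent scalar, namely $g_c^\ast(L^\ast)$, raised to a machine-independent nonnegative exponent. Consequently, a comparison of $\lambda$'s along one coordinate propagates to all coordinates.

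First, I would substitute the explicit formula into the hypothesis. Setting $i = z$, the inequality $\lambda_{c,z} \leq \lambda_{c',z}$ rewrites as
\[
\left(\frac{g_c^\ast(L^\ast)}{m}\right)^{v_z} \;\leq\; \left(\frac{g_{c'}^\ast(L^\ast)}{m}\right)^{v_z}.
\]
In the nontrivial case $v_z > 0$, raising both sides to the $(1/v_z)$-th power is an order-preserving operation on the nonnegative reals, yielding the cleaner comparison $g_c^\ast(L^\ast) \leq g_{c'}^\ast(L^\ast)$. If instead $v_z = 0$, the hypothesis reduces to $n \leq n$ and is vacuous; but then $\lambda_{c,z} = \lambda_{c',z} = n$ holds for every pair of machines, so one may substitute any other variable in its place, and if every $v_i = 0$ then all side lengths equal $n$ identically and the conclusion is trivial.

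Second, I would propagate the scalar comparison $g_c^\ast(L^\ast) \leq g_{c'}^\ast(L^\ast)$ to every variable. Since $\bv$ is a fractional vertex cover, each $v_i \geq 0$, and the map $x \mapsto x^{v_i}$ is nondecreasing on $[0,\infty)$. Therefore
\[
\left(\frac{g_c^\ast(L^\ast)}{m}\right)^{v_i} \;\leq\; \left(\frac{g_{c'}^\ast(L^\ast)}{m}\right)^{v_i}
\]
for every $i \in [k]$, and multiplying through by $n$ produces the required $\lambda_{c,i} \leq \lambda_{c',i}$.

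The argument is essentially a one-line monotonicity observation, so there is no real obstacle; the only subtlety is the degenerate exponent case $v_z = 0$, which I would dispose of as above by noting that dimensions with zero exponent carry no comparative information (they are equal across all machines) and so do not interfere with the propagation from any dimension with $v_i > 0$.
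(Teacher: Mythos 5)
Your proof is correct and takes essentially the same route as the paper: both arguments observe that every side length is a monotone function of the single machine-dependent scalar $g_c^*(L^*)$ raised to a machine-independent nonnegative exponent $v_i$, so a comparison in one coordinate with $v_z>0$ forces $g_c^*(L^*)\leq g_{c'}^*(L^*)$ and hence the comparison in every coordinate. You are in fact more careful than the paper about the degenerate exponent $v_z=0$ (which the paper silently ignores); the only item you omit is the paper's closing remark that the ordering is preserved after the sides are rounded up to powers of two, which is the form the packing algorithm actually uses.
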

\begin{proof}
    Remember that the side of $\Lambda_c$ follows the following expression.
    \[
        \lambda_{c,i}=\left(\frac{g_c^*(L^*)}{m}\right)^{v_i}n
    \]
    Since the vertex cover, $m$, $n$ and $L^*$ are the same between expressions for $\lambda_{c,z}$ and $\lambda_{c',z}$, we know that if for some $z$, $\lambda_{c,z}\leq \lambda_{c',z}$, then $g_c^*(L^*)\leq g_c^*(L^*)$. This means that for all $i$, $\lambda_{c,i}\leq \lambda_{c',i}$. This holds also after the sides of subspaces have been rounded to the nearest power of two.
\end{proof}

This lemma implies that the method for packing described in \autoref{subsec:Packing} also works in this case. We can now show the main theorems.

\begin{proof}[Proof \autoref{theorem:UpperBoundDenseGeneral}]
    The number of tuples that machine $c$ receives is the number of tuples in $\Lambda_c$. In the worst case, every tuple that could exist in $\Lambda_c$ does exist. We will calculate the load $L_{cj}$ on machine $c$ from relation $S_j$.
    $$
    L_{cj}=g_c(|\pi_{S_j}\Lambda_c|\log n)\leq g_c\left(\frac{g_c^*(L^*)}{m}n^r\log n\right)
    $$
    Since $n^r\log n=m/\theta \cdot \log n$, this gives
    \[
    L_{cj}\leq g_c\left(\frac{g_c^*(L^*)\log n}{\theta}\right)
    \]
    Note $\log n/\theta \geq 1$ since $\theta\leq1$ and $\log n\geq 1$. Define $\delta$ such that $(1+\delta)=\log n/\theta$.
    \[
    L_{cj}\leq g_c\left((1+\delta)g_c^*(L^*)\right)
    =(1+\delta)^{a_i}g_c(g_c^*(L^*))
    =(\log n/\theta)^{a_i}g_c(\max_xg_c(x)\leq L^*)
    \leq \tilde{O}(L^*)
    \]
\end{proof}

\begin{proof}[Proof \autoref{theorem:UpperBoundSparseGeneral}]
    Denote $N_{c,j}$ as the number of bits mapped to machine $c$ and $n_{c,j}$ as the number of tuples mapped to the same machine, for some fixed relation $S_j$. Let $L_{cj}$ be the corresponding load. The probability that a tuple $t\in S_j$ maps to machine $c$ is the following, by \autoref{lemma:GeneralCostUBPartition}.
    \[
    Pr[(\pi_{S_j}\mathbf{h})(a_j)\in \Lambda_c]
    =\frac{|\pi_{S_j}\Lambda_c|}{n^{r_j}}
    \leq\frac{g_c^*(L^*)}{m}
    \]
    Since the probability is uniform for each tuple, $n_{c,j}\sim Bin\left(m, \frac{g_c^*(L^*)}{m}\right)$. Note the following
    \begin{align*}
    g_c(E[n_{c,j}]\log n)
    =g_c\left(\frac{g_c^*(L^*)}{m}m\log n\right)
    =g_c(g_c^*(L^*)\log n)
    \end{align*}
    Here we used that the expected value of a random variable with distribution $Bin(n,p)$ is $n\cdot p$. Define $\delta'$ so $1+\delta'=\log n$. Then 
    \[ 
    g_c(g_c^*(L^*)\log n)
    =g_c(g_c^*(L^*)(1+\delta'))
    \leq(1+\delta')^{a_i}g_c(g_c^*(L^*))
    \leq (\log n)^{a_i}L^*
    \] 
    We can use this to show that the expected load matches the lower bound.
    \[
    E[L_{cj}]= g_c(E[n_{c,j}]\log n) = \tilde{O}(L^*)
    \]
    Next, we bound the probability $P$ that the load is more than a factor $(1+\epsilon)$ more than $(\log n)^{a_i}L^*$, where $a_i$ is the constant in \autoref{def:CostFunction} for cost function $g_c$. This will prove the lemma.
    \[
    P=Pr[L_{cj}\geq (1+\epsilon)(\log n)^{a_i}L^*]
    \leq Pr[L_{cj}\geq (1+\epsilon)g_c(g_c^*((\log n)^{a_i}L^*))]
    \]
    Define $\delta$ such that $(1+\delta)=(1+\epsilon)^{1/a_i}$. This gives.
    \begin{align*}
    P &\leq Pr[L_{cj}\geq (1+\delta)^{a_i} g_c(g_c^*((\log n)^{a_i}L^*))] \\
    &\leq Pr[L_{cj}\geq g_c((1+\delta)g_c^*((\log n)^{a_i}L^*))]
    =Pr[N_{c,j}\geq (1+\delta)g_c^*((\log n)^{a_i}L^*)]
    \end{align*}
    Here the second inequality comes from the third property in \autoref{def:CostFunction}. The last equality comes from that $L_{cj}=g_c(N_{c,j})$. Recall that $(\log n)^{a_i}L^*\geq g_c(E[n_{cj}]\log n)=g_c(E[N_{c,j}])$, as showed before. This gives $g_c^*((\log n)^{a_i}L^*) \geq E[N_{c,j}]$.
    \[
    P \leq Pr[N_{c,j}\geq (1+\delta)E[N_{c,j}]]
    = Pr[n_{c,j} \geq (1+\delta)E[n_{c,j}]]
    \]
    Finally, use the Chernoff bound.
    \[
    P \leq Pr[n_{c,j}\geq (1+\delta)E[n_{c,j}]]
    \leq \exp\left( -\delta^2\frac{E[n_{c,j}]}{3} \right)
    =\exp\left(-((1+\epsilon)^{1/a_i}-1)^{2}\frac{E[n_{c,j}]}{3}\right)
    \]
    The probability that the load exceeds $\tilde{O}(L^*)$ is obtained using the union bound across all atoms and machines.
\end{proof}

\section{Unequal Cardinality Upper Bounds}
\label{appendix:UnEqualCardinality}

In this appendix, we complete the upper bound for the cartesian product, binary join, star query and triangle query in \autoref{sec:unequal}. The techniques presented can be used on other queries as well. We begin by describing how to dimension the sides of the subspace $\Lambda_c$ for each machine. Then we give proofs of the following two main theorems.

\UpperBoundDiffCardinalityDense*
\UpperBoundDiffCardinalitySparse*

\subsection{Partitioning the Space}

\subsubsection{Cartesian Product}

We consider the cartesian product:
\[ q(x,y)\dl S_1(x),S_2(y) \]

\begin{lemma}
    \label{lemma:cartesianEPBound}
    Consider the cartesian product. Let $L^*$ be the lower bound on the load and let $\{\bu_c\}_c$ be the set of edge packings that give the lower bound $L^*$. Then for both atoms $j\in\{1,2\}$, the following holds:
    \[ 
    \left(\frac{L^*w_c}{M_j}\right)^{u_{c,j}}
    =\min\left(\frac{L^*w_c}{M_j}, 1\right)
    \leq\frac{L^*w_c}{M_j} 
    \]
\end{lemma}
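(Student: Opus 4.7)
The plan is to exploit the fact that for the cartesian product the two atoms $S_1(x)$ and $S_2(y)$ share no variables, so the fractional edge packing polytope fully decouples: any $\bu_c = (u_{c,1}, u_{c,2})$ with $u_{c,1}, u_{c,2} \in [0,1]$ independently is valid. Consequently, both the sum in the lower bound equation and the feasible region split into independent per-machine, per-atom pieces, so the optimization collapses to four independent scalar problems (two atoms, per machine).

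The first step would be to characterize the structure of optimal packings that achieve $L^*$. By the definition used in \autoref{theorem:DiffCardLowerBound}, $L^*$ is obtained by choosing $\{\bu_c\}_c$ so as to maximize the smallest $L$ at which $\sum_c \prod_j (L w_c / M_j)^{u_{c,j}}$ reaches $1$. I would argue that any such maximizer must, independently for each machine $c$, minimize the per-machine factor $\prod_j (L^* w_c / M_j)^{u_{c,j}}$ over the valid packings. If not, replacing the packing at a single machine by the per-machine minimizer strictly reduces that one factor, and hence strictly reduces the whole sum at $L = L^*$; by continuity a strictly larger threshold load would then suffice, contradicting the maximality of $L^*$.

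The second step is to solve the per-factor minimization. Since the constraints on $u_{c,1}$ and $u_{c,2}$ decouple, I can minimize each factor $(L^* w_c / M_j)^{u_{c,j}}$ separately over $u_{c,j} \in [0,1]$. The function $t \mapsto a^t$ is monotone on $[0,1]$: nonincreasing when $a \le 1$ and nondecreasing when $a \ge 1$, so its minimum on $[0,1]$ is attained at $t=1$ with value $a$ in the first case and at $t=0$ with value $1$ in the second, i.e.\ $\min(a,1)$. Substituting $a = L^* w_c / M_j$ yields the claimed equality $(L^* w_c / M_j)^{u_{c,j}} = \min(L^* w_c / M_j, 1)$, and the displayed inequality $\min(L^* w_c / M_j, 1) \le L^* w_c / M_j$ is immediate from the definition of min.

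The main technical point is the separability argument of the first step, i.e.\ showing that optimality at $L^*$ forces an independent scalar minimization at each machine and atom. This is the crux of the proof, but it should be short because both the objective (a sum over machines) and the packing polytope factor across machines, so any single-machine improvement strictly helps the sum without any side effects. The boundary case $L^* w_c / M_j = 1$ is harmless: the factor is $1$ for every $u_{c,j}$ and matches $\min(\cdot,1)$.
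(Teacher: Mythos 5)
Your proposal is correct and follows essentially the same route as the paper: both reduce to minimizing each machine's factor $\prod_j (L^* w_c/M_j)^{u_{c,j}}$ and then exploit that, for the cartesian product, $u_{c,1}$ and $u_{c,2}$ range over $[0,1]$ independently, so each factor is minimized at $u_{c,j}=1$ when $L^*w_c/M_j\leq 1$ and at $u_{c,j}=0$ otherwise, giving $\min(L^*w_c/M_j,1)$. The only difference is that you explicitly justify the per-machine minimization via an exchange argument, whereas the paper takes it as part of how the optimal packings are defined; your added handling of the boundary case $L^*w_c/M_j=1$ is a harmless refinement.
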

\begin{proof}
    The edge packing $\mathbf{u_i}=(u_{i,1},u_{i,2})$ is set to minimize $\prod_{j\in[l]}\left(\frac{L^*w_c}{M_j}\right)^{u_{c,j}}$ where $\mathbf{u_i}$ is an edge packing. For the cartesian product, the two values $(u_{i,1},u_{i,2})$ can be set independently. Therefore, $u_{c,j}=1$ if $L^*w_c/M_j\leq1$, and otherwise $u_{c,j}=0$. The lemma follows.
\end{proof}

\begin{theorem}
    Consider the cartesian product. Let $L^*$ be the lower bound on the load, and $\bu_c=(u_{c,x},u_{c,y})$ be the corresponding edge packing for machine $c$. Let the side length of $\Lambda_c$ along dimension $j$ be
    \[ \lambda_{c,i}:=\min\left(\frac{L^*w_c}{M_j}, 1\right)n \]
    Then we have the following two properties:
    \begin{enumerate}
        \item $\sum_{c\in[p]}|\Lambda_c|\geq V[\Lambda]$
        \item for every atom $S_j$ with arity $r_j$ and cardinality $M_j$, and machine $c$:
        $\pi_{S_j}\Lambda_c\leq\frac{L^*w_c}{M_j}n^{r_j}$
    \end{enumerate}
\end{theorem}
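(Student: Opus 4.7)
The plan is to verify the two claims in turn, using \autoref{lemma:cartesianEPBound} as the key technical tool for the covering property.

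For Property 2, the Cartesian product's atoms $S_1(x)$ and $S_2(y)$ each contain exactly one variable, so the projection of $\Lambda_c$ onto either atom reduces to a single side length. By construction,
$$|\pi_{S_j}\Lambda_c| = \lambda_{c,j} = \min\!\Bigl(\tfrac{L^*w_c}{M_j},\,1\Bigr)\,n \;\leq\; \tfrac{L^*w_c}{M_j}\,n,$$
which matches the claimed bound once we observe $r_j = 1$. This step is just substitution of the definition plus the trivial inequality $\min(a,1)\leq a$.

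For Property 1, I would first expand the total volume as
$$\sum_{c\in[p]} |\Lambda_c| \;=\; n^2 \sum_{c\in[p]} \min\!\Bigl(\tfrac{L^*w_c}{M_1},1\Bigr) \cdot \min\!\Bigl(\tfrac{L^*w_c}{M_2},1\Bigr).$$
Applying \autoref{lemma:cartesianEPBound} to each factor rewrites it as $(L^*w_c/M_j)^{u_{c,j}}$, so
$$\sum_{c\in[p]} |\Lambda_c| \;=\; n^2 \sum_{c\in[p]} \prod_{j\in\{1,2\}} \Bigl(\tfrac{L^*w_c}{M_j}\Bigr)^{u_{c,j}}.$$
I would then invoke the defining inequality of $L^*$ from \autoref{theorem:DiffCardLowerBound}, namely $\sum_c \prod_j (L^*w_c/m_j)^{u_{c,j}} \geq 1$. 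Combined with $M_j = \Theta(m_j)$ from \autoref{lemma:EntropyOMJDense} and \autoref{lemma:EntropyOMJSparse}, this forces the inner sum to be $\Omega(1)$, giving $\sum_c|\Lambda_c| = \Omega(n^2)$.

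The main obstacle is cosmetic rather than structural: the side-length formula uses $M_j$ (encoding length) while the lower bound $L^*$ is defined in terms of $m_j$ (cardinality), so the covering is established only up to a constant factor. The natural way to handle this is to rely on the fact that the subsequent packing step (\autoref{theorem:packingtheorem}) already absorbs a $2^{O(k)}$ factor and the final load bound carries an $O(\log n)$ slack, so a $\Theta(1)$ loss here is harmless. Once this bookkeeping is settled, no further argument is required.
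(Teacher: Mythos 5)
Your proposal is correct and follows essentially the same route as the paper: Property 2 by direct substitution with $\min(a,1)\leq a$, and Property 1 by rewriting the volume sum via \autoref{lemma:cartesianEPBound} and invoking the defining inequality of $L^*$. Your extra care about the $m_j$ versus $M_j$ discrepancy is a reasonable observation (the paper silently treats the lower-bound inequality as holding with $M_j$ in place of $m_j$), and your resolution via a constant-factor loss is sound.
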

\begin{proof}
    We use \autoref{lemma:cartesianEPBound} for proving both properties. Start with the first first property. The inequality comes from that $L^*$ satisfies the lower bound.
    \[
    \sum_{c\in[p]} \lambda_{c,i}\lambda_{c,y}
    =\left(\sum_{c\in[p]}\prod_{j\in[l]}\min\left(\frac{L^*w_c}{M_j}, 1\right)\right)n^2
    =\left(\sum_{c\in[p]}\prod_{j\in[l]}\left( \frac{L^*w_c}{M_j} \right)^{u_{c,j}}\right)n^2
    \geq n^2=V[\Lambda]
    \]
    Now we show the second property. 
    \[
    |\pi_{S_j}\Lambda_c|=\lambda_{c,i}=\min\left(\frac{L^*w_c}{M_j}, 1\right)n
    \leq\frac{L^*w_c}{M_j}n
    \]
\end{proof}

\begin{lemma}
    For any two machines in the cartesian product with weights $w_c$ and $w_{c'}$ such that $w_c\leq w_{c'}$, for any variable $x_i \in[k]$, $\lambda_{c,i}\leq\lambda_{c',i}$.
\end{lemma}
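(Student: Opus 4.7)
The plan is to prove this directly from the explicit formula for the side lengths given just above the lemma, namely $\lambda_{c,i} = \min\!\left(\frac{L^* w_c}{M_j},\, 1\right) n$ for the variable $x_i$ associated to atom $S_j$ (recall that in the Cartesian product each variable occurs in exactly one atom, so the definition unambiguously assigns one atom $S_j$ to each $x_i$). The quantities $L^*$, $M_j$, and $n$ do not depend on the machine, so the only machine-dependent factor on the right-hand side is $w_c$.

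First I would observe that the map $w \mapsto L^* w / M_j$ is monotonically nondecreasing in $w$, since $L^*/M_j \geq 0$. The constant function $w \mapsto 1$ is trivially nondecreasing. The pointwise minimum of two nondecreasing functions is nondecreasing, so $w \mapsto \min(L^* w / M_j,\, 1)$ is nondecreasing. Multiplying by the fixed positive constant $n$ preserves the inequality. Hence $w_c \leq w_{c'}$ implies $\lambda_{c,i} \leq \lambda_{c',i}$, which is exactly the claim.

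There is essentially no obstacle here: the argument is a one-line monotonicity check enabled by the explicit closed form chosen for the Cartesian product case. The only thing to be mildly careful about is that the formula is defined in terms of the atom $S_j$ containing variable $x_i$, so the monotonicity must be applied with the same $j$ on both sides of the comparison; since $x_i$ is fixed, so is $j$, and the argument goes through unchanged. This lemma serves the same structural role as \autoref{lemma:machineordering} did in the equal-cardinality setting, enabling the packing machinery of Section~\ref{subsec:Packing} (in particular, the bucket ordering and merging steps) to be reused for the Cartesian product with heterogeneous cardinalities.
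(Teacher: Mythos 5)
Your proof is correct and is essentially identical to the paper's own argument, which also just observes that $\lambda_{c,i}=\min\left(\frac{L^*w_c}{M_j},1\right)n$ is a nondecreasing function of $w_c$. Your version merely spells out the monotonicity check (min of nondecreasing functions, fixed atom $j$ per variable) in slightly more detail.
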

\begin{proof}
    $\lambda_{c,i}=\min\left(L^*w_c/M_j,1\right)n$ is an increasing function with $w_c$. The lemma follows from this.
\end{proof}

\subsubsection{Binary Join}
Next, consider the binary join.
\[ q(x,y,z)\dl S_1(x,z),S_2(y,z) \]
Without loss of generality, assume $|S_1|\geq|S_2|$.

\begin{lemma}
\label{lemma:BinaryJoin10}
    Consider the binary join. The fractional edge packing $(u_1,u_2)=(1,0)$ always gives the optimal lower bound $L^*$.
\end{lemma}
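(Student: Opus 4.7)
The plan is to compute the load threshold delivered by the uniform choice $\bu_c = (1,0)$ for every machine $c$, and then verify that no alternative collection of per-machine edge packings $\{\bu_c'\}_c$ can force a larger load. Since \autoref{theorem:DiffCardLowerBound} defines $L^*$ by maximizing the smallest load satisfying \autoref{equation:DiffCardLowerBound} over the choice of per-machine edge packings, this is exactly what is required.

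Under $\bu_c = (1,0)$ for every $c$, \autoref{equation:DiffCardLowerBound} reduces to $\sum_{c\in[p]} L w_c / m_1 \geq 1$, whose smallest feasible value is $L^* := m_1 / \sum_{c\in[p]} w_c$. In particular, $L^* w_c \leq m_1$ for every machine $c$. To rule out any competing collection $\{(u_{c,1}', u_{c,2}')\}_c$, where each pair satisfies the edge-packing constraints $u_{c,1}', u_{c,2}' \in [0,1]$ and $u_{c,1}' + u_{c,2}' \leq 1$, it suffices to check that evaluating the sum in \autoref{equation:DiffCardLowerBound} at $L = L^*$ with these alternative packings already yields a value $\geq 1$; this would imply that the smallest $L$ associated with the alternative packing is at most $L^*$.

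This reduces to proving the per-machine inequality
\[ \left(\frac{L^* w_c}{m_1}\right)^{u_{c,1}'} \left(\frac{L^* w_c}{m_2}\right)^{u_{c,2}'} \ \geq\ \frac{L^* w_c}{m_1} \]
for every $c$. Taking logarithms, this becomes $(1 - u_{c,1}') \log(m_1/(L^* w_c)) \geq u_{c,2}' \log(m_2/(L^* w_c))$, whose left-hand side is non-negative because $L^* w_c \leq m_1$. If $L^* w_c \geq m_2$, the right-hand side is non-positive and the inequality is immediate. Otherwise $L^* w_c < m_2 \leq m_1$, both logarithms are non-negative, and combining the ordering $\log(m_1/(L^* w_c)) \geq \log(m_2/(L^* w_c))$ (which uses $m_1 \geq m_2$) with $1 - u_{c,1}' \geq u_{c,2}'$ (which uses the edge-packing constraint) yields the claim. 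Summing the per-machine bound over $c$ and using $\sum_{c} L^* w_c / m_1 = 1$ gives the required lower bound of $1$ at $L = L^*$. The only subtle point is the case split on whether $L^* w_c$ lies above or below $m_2$; both cases are routine, so no substantial obstacle remains.
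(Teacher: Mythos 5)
Your proof is correct and follows essentially the same route as the paper: both reduce to the per-machine inequality $\left(\frac{L^*w_c}{m_1}\right)^{u_1'}\left(\frac{L^*w_c}{m_2}\right)^{u_2'}\geq\frac{L^*w_c}{m_1}$, derived from the packing constraint $u_1'+u_2'\leq 1$ on variable $z$ together with $m_1\geq m_2$. The only difference is cosmetic: you make explicit the fact $L^*w_c\leq m_1$ (via the closed form $L^*=m_1/\sum_c w_c$), which the paper's proof uses implicitly in the step $(L^*w_c/M_1)^{a+b}\geq L^*w_c/M_1$.
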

\begin{proof}
    Denote $s_{\bu}=\prod_{j\in[l]}(L^*w_c/M_j)^{u_i}$ as the term in \autoref{equation:DiffCardLowerBound} for machine $c$ and edge packing $\bu$. The edge packing is picked by minimizing $s_{\bu}$. Consider any fractional edge packing $(a,b)$, $a<1$. We will show that $s_{\Bar{u}}\leq s_{(a,b)}$, which proves the lemma. We know $a+b\leq 1$ because of the edge packing constraint on variable $z$. This gives.
    \[
    s_{(a,b)} = \left(\frac{L^*w_c}{M_1}\right)^a\left(\frac{L^*w_c}{M_2}\right)^b
    \geq \left(\frac{L^*w_c}{M_1}\right)^{a+b}
    \geq \left(\frac{L^*w_c}{M_1}\right)^1 = s_{(1,0)}
    \]
\end{proof}

\begin{theorem}
    Consider the join of two relations. Let $L^*$ be the lower bound on the load. Let the side lengths of $\Lambda_c$ be the following:
    \[ 
    \lambda_{c,x}:=n, \quad \lambda_{c,y}:=n, \quad
    \lambda_{c,z}:=\frac{L^*w_c}{M_1}n 
    \]
    Then we have the following two properties:
    \begin{enumerate}
        \item $\sum_{c\in[p]}|\Lambda_c|\geq V[\Lambda]$
        \item for every atom $S_j$ with arity $r_j$ and cardinality $M_j$, and machine $c$:
        $\pi_{S_j}\Lambda_c\leq\frac{L^*w_c}{M_j}n^{r_j}$
    \end{enumerate}
\end{theorem}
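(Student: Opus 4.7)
The plan is to verify the two properties by direct computation, using Lemma~\ref{lemma:BinaryJoin10} as the crucial bridge between the chosen side lengths and the lower bound equation~\eqref{equation:DiffCardLowerBound}. Lemma~\ref{lemma:BinaryJoin10} tells us that for every machine $c$ the optimal edge packing is $\bu_c = (1,0)$, so the lower bound equation for $L^*$ simplifies to
\[
\sum_{c\in[p]} \frac{L^* w_c}{M_1} \;\geq\; 1.
\]
This is exactly the sum that will appear when we total the volumes $|\Lambda_c|$, which makes property~$1$ essentially immediate.

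For property~$1$, I would just multiply out:
\[
\sum_{c\in[p]} |\Lambda_c| \;=\; \sum_{c\in[p]} \lambda_{c,x}\lambda_{c,y}\lambda_{c,z} \;=\; n^3 \sum_{c\in[p]} \frac{L^* w_c}{M_1} \;\geq\; n^3 \;=\; V[\Lambda],
\]
invoking the simplified lower bound inequality in the last step.

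For property~$2$, I would project onto each atom separately. For $S_1(x,z)$ we have $|\pi_{S_1}\Lambda_c| = \lambda_{c,x}\lambda_{c,z} = \frac{L^* w_c}{M_1} n^2$, matching the claimed bound with equality. For $S_2(y,z)$ we get $|\pi_{S_2}\Lambda_c| = \lambda_{c,y}\lambda_{c,z} = \frac{L^* w_c}{M_1} n^2$, and since $|S_1| \geq |S_2|$ implies $M_1 \geq M_2$, this is at most $\frac{L^* w_c}{M_2} n^2$. Both projections therefore satisfy the required inequality with $r_j = 2$.

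There is no real obstacle here beyond noticing that the asymmetric assignment (giving full extent $n$ to $x$ and $y$, and compressing only along the shared variable $z$) is exactly what Lemma~\ref{lemma:BinaryJoin10} justifies: the $(1,0)$ edge packing corresponds to a scheme that effectively broadcasts $S_2$ and partitions $S_1$. Once this correspondence is made explicit, the bookkeeping is routine. The only subtlety worth flagging is the use of $M_1 \geq M_2$ for the $S_2$ projection bound, which is where the WLOG assumption $|S_1| \geq |S_2|$ enters the argument.
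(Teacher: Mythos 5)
Your proposal is correct and follows essentially the same route as the paper: invoke Lemma~\ref{lemma:BinaryJoin10} to reduce the lower bound inequality to $\sum_{c}L^*w_c/M_1\geq 1$, verify property~1 by summing the volumes, and verify property~2 by direct projection using $M_1\geq M_2$. Your write-up is slightly more explicit than the paper's (which leaves the $M_1\geq M_2$ step and the per-atom case split implicit), but the argument is the same.
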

\begin{proof}
    We begin with the first claim. The inequality comes from that $L^*$ satisfies \autoref{equation:DiffCardLowerBound}, since by \autoref{lemma:BinaryJoin10} is always an optimal edge packing.
    \[
    \sum_{c\in[p]}\lambda_{c,i}\lambda_{c,y}\lambda_{c,z}
    =\left(\sum_{c\in[p]}\frac{L^*w_c}{M_1}\right)n^3
    \geq n^3=V[\Lambda]
    \]
    Next, we show the second claim.
    \[
    \pi_{S_j}\Lambda_c=\lambda_{c,z}n=\frac{L^*w_c}{M_1}n^2\leq\frac{L^*w_c}{M_j}n^2
    \]
\end{proof}

\begin{lemma}
    For any two machines in the join with weights $w_1$ and $w_2$ such that $w_1\leq w_2$, for any variable $j\in[k]$, $\lambda_{1,j}\leq\lambda_{2,j}$.
\end{lemma}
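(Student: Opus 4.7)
The plan is to verify the claim one variable at a time, since in the binary-join setup there are only three variables $x, y, z$ and the formulas for $\lambda_{c,x}, \lambda_{c,y}, \lambda_{c,z}$ are given explicitly just before the lemma. The proof is essentially a direct computation: I check that each of the three coordinate functions $w \mapsto \lambda_{c,\bullet}$ is non-decreasing in $w_c$, and then specialize to $w_c = w_1 \leq w_2 = w_{c'}$.

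For the variables $x$ and $y$, the side lengths are constants that do not depend on $w_c$ at all: $\lambda_{c,x} = n$ and $\lambda_{c,y} = n$ for every machine $c$. Therefore $\lambda_{1,x} = \lambda_{2,x} = n$ and $\lambda_{1,y} = \lambda_{2,y} = n$, and both inequalities hold trivially (with equality).

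For the variable $z$, we have $\lambda_{c,z} = \tfrac{L^* w_c}{M_1} n$. Since $L^*$, $M_1$, and $n$ are positive quantities that do not depend on the index $c$, the map $w_c \mapsto \lambda_{c,z}$ is a positive linear function of $w_c$ and hence strictly increasing. In particular $w_1 \leq w_2$ implies $\tfrac{L^* w_1}{M_1} n \leq \tfrac{L^* w_2}{M_1} n$, i.e., $\lambda_{1,z} \leq \lambda_{2,z}$.

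There is no real obstacle here; the lemma is a one-line monotonicity check built into the definitions of the side lengths. Its purpose is merely to certify that the hypothesis of the machine-ordering lemma (\autoref{lemma:machineordering}) used by the packing algorithm in Section~\ref{subsec:Packing} is satisfied in the binary-join case, so that the same packing argument applied in the equal-cardinality setting can be reused here to obtain the upper bound of \autoref{theorem:UpperBoundDiffCardinalityDense} and \autoref{theorem:UpperBoundDiffCardinalitySparse}.
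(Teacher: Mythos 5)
Your proof is correct and matches the paper's own argument, which likewise just observes that each of $\lambda_{c,x}$, $\lambda_{c,y}$, $\lambda_{c,z}$ is a non-decreasing function of $w_c$ (the first two being constant and the third linear). Your version simply spells out the coordinate-by-coordinate check that the paper states in one line.
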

\begin{proof}
    The lemma follows from that for all $j\in\{x,y,z\}$, $\lambda_{c,i}$ is an increasing function with $w_c$.
\end{proof}

\subsubsection{Star Query}
Next, we consider the star query.
\[ q(z,x_1,\dots,x_a)\dl S_1(z,x_1),\dots,S_n(z,x_a) \]
Without loss of generality, assume that $S_1$ is the atom with maximal cardinality.

\begin{lemma}
    Consider the star query. The fractional edge packing $\Bar{u}$ where $\Bar{u}_1=1$ and for all other $j$, $\Bar{u}_j=0$, always give the optimal lower bound $L^*$.
\end{lemma}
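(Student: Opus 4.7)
The plan is to adapt the argument from Lemma~\ref{lemma:BinaryJoin10}. Fix any machine $c$ and any fractional edge packing $\bu_c = (u_{c,1}, \dots, u_{c,a})$ for the star query, and write $s_{\bu_c} := \prod_{j} (L^{*} w_c / M_j)^{u_{c,j}}$ for the corresponding summand in \autoref{equation:DiffCardLowerBound}. Since the center variable $z$ appears in every atom, the edge packing constraint at $z$ forces $\sum_{j} u_{c,j} \le 1$, while the constraints at the leaves $x_1, \ldots, x_a$ are strictly weaker. My goal is to show that $s_{\bar\bu} \le s_{\bu_c}$ machine by machine, from which the lemma follows because replacing any $\bu_c$ by $\bar\bu$ cannot decrease the left-hand side of \autoref{equation:DiffCardLowerBound} and therefore cannot decrease the induced lower bound.

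The key calculation has two monotonicity steps. Using $M_1 \ge M_j$ for all $j$ to bound each factor from below gives
\[
s_{\bu_c} \;\ge\; \prod_{j} \left(\frac{L^{*} w_c}{M_1}\right)^{u_{c,j}} \;=\; \left(\frac{L^{*} w_c}{M_1}\right)^{\sum_{j} u_{c,j}}.
\]
Provided $L^{*} w_c / M_1 \le 1$, the fact that $\sum_{j} u_{c,j} \le 1$ together with $x^{\alpha} \ge x$ for $x \in (0,1]$ and $0 \le \alpha \le 1$ yields $\bigl(L^{*} w_c / M_1\bigr)^{\sum_{j} u_{c,j}} \ge L^{*} w_c / M_1 = s_{\bar\bu}$, completing the per-machine comparison.

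The only subtlety is verifying the hypothesis $L^{*} w_c / M_1 \le 1$, which I would address up front by setting $\tilde L := M_1 / \sum_{c} w_c$. This is the load obtained by plugging $\bar\bu$ into \autoref{equation:DiffCardLowerBound}, which collapses to $\sum_{c} L w_c / M_1 \ge 1$. At this $\tilde L$ we have $\tilde L w_c / M_1 = w_c / \sum_{c} w_c \le 1$ for every $c$, so the inequality $s_{\bu_c}(\tilde L) \ge s_{\bar\bu}(\tilde L)$ holds for every machine and every packing. Summing over $c$ gives $\sum_{c} s_{\bu_c}(\tilde L) \ge \sum_{c} s_{\bar\bu}(\tilde L) = 1$, so any alternative choice of packings already satisfies \autoref{equation:DiffCardLowerBound} at $\tilde L$ and cannot force the minimum satisfying load above $\tilde L$. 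Hence $L^{*} = \tilde L$ is realized by $\bar\bu$, which is the claim.
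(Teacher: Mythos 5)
Your proof is correct and follows essentially the same route as the paper's: bound each factor from below using $M_1 \ge M_j$, collapse the product via the packing constraint $\sum_j u_{c,j} \le 1$ at the center variable $z$, and conclude $s_{\bar{\bu}} \le s_{\bu_c}$ machine by machine. Your extra step of verifying $L w_c / M_1 \le 1$ by evaluating at the candidate load $\tilde L = M_1/\sum_c w_c$ is a welcome tightening, since the paper's final inequality $\left(L^* w_c/M_1\right)^{\sum_j u_j} \ge L^* w_c/M_1$ silently assumes that this ratio is at most $1$.
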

\begin{proof}
    Denote $s_{\bu}=\prod_{j\in[l]}(L^*w_c/M_j)^{u_j}$ as the term in \autoref{equation:DiffCardLowerBound} for machine $c$ and any edge packing $\bu$. The edge packing is picked by minimizing $s_{\bu}$. We will show that $s_{\Bar{u}}\leq s_{\bu}$. We know $\sum_{j\in[l]}u_j\leq 1$, because of the edge packing constraint on variable $z$. This gives
    \[
    s_{\bu}=\prod_{j\in[l]}\left(\frac{L^*w_c}{M_j}\right)^{u_j}
    \geq \prod_{j\in[l]}\left(\frac{L^*w_c}{M_1}\right)^{u_j}
    =\left(\frac{L^*w_c}{M_1}\right)^{\sum_{j\in[l]}u_j}
    \geq \frac{L^*w_c}{M_1}
    =s_{\Bar{u}}
    \]
    Here the first inequality uses that $M_1$ is the maximum cardinality.
\end{proof}

\begin{theorem}
    Consider the star query. Let $L^*$ be the lower bound on the load. Let the side lengths of $\Lambda_c$ be the following:
    \[
    \lambda_{c,z}=\frac{L^*w_c}{M_1}n,
    \quad \forall i\in[a]\; \lambda_{c,x_i}=n
    \]
    Then we have the following two properties:
    \begin{enumerate}
        \item $\sum_{c\in[p]}|\Lambda_c|\geq V[\Lambda]$
        \item for every atom $S_j$ with arity $r_j$ and cardinality $M_j$, and machine $c$:
        $\pi_{S_j}\Lambda_c\leq\frac{L^*w_c}{M_j}n^{r_j}$
    \end{enumerate}
\end{theorem}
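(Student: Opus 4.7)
The plan is to mirror the argument already used for the cartesian product and the binary join, leveraging the lemma just established that the edge packing $\bar{\bu}$ with $\bar{u}_1 = 1$ and $\bar{u}_j = 0$ for $j \neq 1$ is always an optimal edge packing yielding $L^*$. There are $a+1$ variables in the star query, so $\vol{\Lambda} = n^{a+1}$, and every atom has arity $r_j = 2$ with variable set $\{z, x_j\}$.

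First, for the covering property, I would expand $\sum_{c\in[p]} \vol{\Lambda_c} = \sum_{c\in[p]} \lambda_{c,z} \prod_{i\in[a]} \lambda_{c,x_i}$. Because only $\lambda_{c,z}$ varies with $c$ while every $\lambda_{c,x_i}$ equals $n$, this collapses to $n^{a+1} \sum_{c \in [p]} \frac{L^* w_c}{M_1}$. The key observation is that $\sum_{c \in [p]} \frac{L^* w_c}{M_1} = \sum_{c \in [p]} \prod_{j \in [l]} (L^* w_c / M_j)^{\bar{u}_j}$, which is at least $1$ because $L^*$ is defined so that \eqref{equation:DiffCardLowerBound} is satisfied for the optimal edge packing $\bar{\bu}$. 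This gives $\sum_{c \in [p]} \vol{\Lambda_c} \geq n^{a+1} = \vol{\Lambda}$.

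Second, for the projection bound, I would compute directly $\vol{\pi_{S_j} \Lambda_c} = \lambda_{c,z} \cdot \lambda_{c,x_j} = \frac{L^* w_c}{M_1} \cdot n^2$. Because $S_1$ was assumed to have maximal cardinality, we have $M_1 \geq M_j$ for every $j$, and therefore $\frac{L^* w_c}{M_1} \leq \frac{L^* w_c}{M_j}$, yielding the claimed bound $\vol{\pi_{S_j} \Lambda_c} \leq \frac{L^* w_c}{M_j} \cdot n^{r_j}$.

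The step most in need of care is not the arithmetic above but the downstream compatibility with the packing machinery of Section~\ref{subsec:Packing}: we need a monotonicity statement analogous to \autoref{lemma:machineordering}, namely that $w_c \leq w_{c'}$ implies $\lambda_{c,i} \leq \lambda_{c',i}$ for every $i$. This is immediate here since the only $c$-dependent side length $\lambda_{c,z}$ is linear in $w_c$ and all other sides are the constant $n$. With that in hand, the packing theorem applies verbatim and the star query case slots into the framework used for the cartesian product and binary join.
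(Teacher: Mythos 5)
Your proposal is correct and follows essentially the same route as the paper: the covering property reduces to $\sum_{c}L^*w_c/M_1\geq 1$ via the optimal edge packing $(1,0,\dots,0)$ and the defining inequality for $L^*$, and the projection bound follows from $M_1$ being maximal. Your closing remark on monotonicity of the side lengths in $w_c$ is also exactly the separate lemma the paper proves immediately afterward to make the packing machinery apply.
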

\begin{proof}
    Start by showing the first claim. Similar to before we use \autoref{equation:DiffCardLowerBound} to obtain the inequality.
    \[ \sum_{c\in[p]}|\Lambda_c|=\sum_{c\in[p]}\lambda_{c,z}n^{a}=\sum_{c\in[p]}\frac{L^*w_c}{M_1}n^{a+1}\geq n^{a+1}=V[\Lambda]
    \]
    Next, we show the second claim. We will use that $M_1$ is the maximum cardinality among relations.
    \[
    \pi_{S_j}\Lambda_c=\lambda_{c,z}n=\frac{L^*w_c}{M_1}n^2
    \leq\frac{L^*w_c}{M_j}n^2
    \]
\end{proof}

\begin{lemma}
    For any two machines in the join with weights $w_c$ and $w_{c'}$ such that $w_c\leq w_{c'}$, for any variable $x_i \in[k]$, $\lambda_{c,i}\leq\lambda_{c',i}$.
\end{lemma}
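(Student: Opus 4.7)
The plan is to prove the monotonicity by simply inspecting the closed-form expressions for the side lengths given in the statement of the theorem about the star query. There are really only two cases to consider, corresponding to the two kinds of variables in the star: the center variable $z$ shared by every atom, and the peripheral variables $x_1, \dots, x_a$ each appearing in a single atom.

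First I would handle the variable $z$. By definition $\lambda_{c,z} = L^\ast w_c n / M_1$, which is a linear function of the weight $w_c$ with positive coefficient $L^\ast n / M_1$ (here $L^\ast$, $n$, and $M_1$ do not depend on the machine). Therefore $w_c \leq w_{c'}$ immediately yields $\lambda_{c,z} \leq \lambda_{c',z}$.

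Next I would dispense with each peripheral variable $x_i$. By definition $\lambda_{c,x_i} = n$ is a constant that does not depend on $c$, so $\lambda_{c,x_i} = \lambda_{c',x_i}$ trivially, and in particular $\lambda_{c,x_i} \leq \lambda_{c',x_i}$. Combining the two cases covers all $k = a+1$ variables of the star query, which finishes the proof.

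There is no real obstacle here: the lemma is a direct calculation, and its only purpose is to certify that the monotonicity hypothesis needed by the packing procedure of Section~\ref{subsec:Packing} (analogous to~\autoref{lemma:machineordering}) continues to hold for the star query with the side-length assignment prescribed above. The only point worth being careful about is that the side $\lambda_{c,z}$ is well-defined and at most $n$, which follows because $L^\ast w_c / M_1 \leq 1$ by the single-machine upper bound on $L^\ast$ proved earlier in this section.
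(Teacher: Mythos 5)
Your proof is correct and matches the paper's argument, which likewise just observes that every side length $\lambda_{c,i}$ prescribed for the star query is a nondecreasing function of the weight $w_c$ (linear in $w_c$ for the center variable $z$, constant equal to $n$ for each peripheral variable $x_i$). Your added remark about $\lambda_{c,z}\leq n$ is harmless but not needed for the monotonicity claim itself.
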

\begin{proof}
    This lemma follows from that for all $i\in[k]$, $\lambda_{c,i}$ is an increasing function with $w_c$.
\end{proof}

\subsubsection{Triangle Query}
Next, consider the triangle query. Without loss of generality, suppose $M_1\geq M_2\geq M_3$.
\[
q(x,y,z)\dl S_1(x,y),S_2(y,z),S_3(z,x)
\]
Let $L^*$ be the lower bound on the load. We will introduce the following quantities.
\begin{align*}
    f_{c,x}=\sqrt{L^*w_c\frac{M_2}{M_1M_3}} \\
    f_{c,y}=\sqrt{L^*w_c\frac{M_3}{M_1M_2}} \\
    f_{c,z}=\sqrt{L^*w_c\frac{M_1}{M_2M_3}}
\end{align*}

Note that for a machine $c$, $f_{c,z}\geq f_{c,x}\geq f_{c,y}$, because of the order we assumed on the cardinalities. For sufficiently small machines, all quantities above are less than 1. As $w_c$ increases, some quantities might become equal to or more than $1$. We will classify machines into three different sets based on which of the quantities are greater than $1$ and which are less than $1$.

\begin{definition}
    Label machine $c$ as \textit{small}, \textit{medium} or \textit{big} according to the following.
    \begin{itemize}
        \item \textit{small} if $1 > f_{c,z} \geq f_{c,x} \geq f_{c,y}$
        \item \textit{medium} if $f_{c,z} \geq 1 > f_{c,x} \geq f_{c,y}$
        \item \textit{big} if $f_{c,z} \geq f_{c,x} \geq 1 > f_{c,y}$
    \end{itemize}
\end{definition}

\begin{lemma}
    \label{lemma:smallEPOptimal}
    For small machines, the fractional edge packing that gives the lower bound is $(u_1,u_2,u_3)=(1/2,1/2,1/2)$.
\end{lemma}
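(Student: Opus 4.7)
For a fixed machine $c$, write $r_j = L^* w_c / M_j$ for $j \in \{1,2,3\}$. The machine $c$ picks the fractional edge packing $\bu = (u_1, u_2, u_3)$ to minimize the term $s_\bu = r_1^{u_1} r_2^{u_2} r_3^{u_3}$ in \autoref{equation:DiffCardLowerBound}, subject to the edge-packing constraints for the triangle:
\[
u_1 + u_2 \le 1, \qquad u_2 + u_3 \le 1, \qquad u_1 + u_3 \le 1, \qquad u_j \ge 0.
\]
The plan is to observe that $\log s_\bu = \sum_j u_j \log r_j$ is linear in $\bu$, so the minimum over the packing polytope is attained at a vertex. I would then enumerate the vertices, which are $(0,0,0)$, $(1,0,0)$, $(0,1,0)$, $(0,0,1)$, and $(1/2,1/2,1/2)$, and show that the symmetric vertex achieves the smallest value of $s_\bu$ under the small-machine assumption.

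For small machines, every $r_j < 1$ (each follows from $L^* w_c < M_2 M_3 / M_1 \le M_j$, using $M_1 \ge M_2 \ge M_3$), so all the nonzero-vertex values are strictly below the trivial value $s_{(0,0,0)} = 1$. The remaining comparisons reduce to checking $\sqrt{r_1 r_2 r_3} \le r_j$ for each $j \in \{1,2,3\}$, i.e.\ $r_{j'} r_{j''} \le r_j$ for the other two indices. I would translate each of these three inequalities into a condition on $L^* w_c$ and verify it from the small-machine hypothesis. The crucial case is $j=1$: the inequality $r_2 r_3 \le r_1$ is equivalent to $L^* w_c \le M_2 M_3 / M_1$, which is precisely the condition $f_{c,z} \le 1$ defining small machines. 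The cases $j=2$ and $j=3$ are easier: for instance, $r_1 r_3 \le r_2$ reduces to $L^* w_c \le M_1 M_3 / M_2$, and one shows $M_2 M_3 / M_1 \le M_3 \le M_1 M_3 / M_2$ using $M_1 \ge M_2$; similarly for $j=3$ one reduces to $M_3^2 \le M_1^2$.

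Putting these together, for every vertex $\bu$ of the polytope, $s_{(1/2,1/2,1/2)} \le s_\bu$, and by linearity of $\log s_\bu$ the symmetric packing minimizes $s_\bu$ on the whole polytope. Hence $(1/2,1/2,1/2)$ is an optimal edge packing for every small machine, which is what the lemma asserts.

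The main (and only nontrivial) obstacle is organizing the three inequalities $r_{j'} r_{j''} \le r_j$ so that exactly the small-machine defining condition $f_{c,z} < 1$ is used, and showing the other two reduce to trivialities using the cardinality ordering $M_1 \ge M_2 \ge M_3$; everything else is a routine vertex enumeration.
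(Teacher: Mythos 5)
Your proposal is correct and follows essentially the same route as the paper: both reduce the problem to comparing $s_{(1/2,1/2,1/2)}$ against the three unit-vector packings using the small-machine conditions $f_{c,x},f_{c,y},f_{c,z}<1$ (equivalently, $f_{c,z}<1$ plus the ordering $M_1\geq M_2\geq M_3$). Your write-up is in fact slightly more complete than the paper's, since you explicitly justify why it suffices to check only the vertices $(0,0,0)$, $(1,0,0)$, $(0,1,0)$, $(0,0,1)$, $(1/2,1/2,1/2)$ of the edge-packing polytope via linearity of $\log s_\bu$, a step the paper leaves implicit.
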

\begin{proof}
    We will use that $f_{c,x},f_{c,y},f_{c,z} < 1$. For $f_{c,y}$, this gives
    \[
    \sqrt{L^*w_c\frac{M_3}{M_1M_2}} < 1
    \]
    This can be rewritten as
    \[
    \frac{(L^*w_c)^{3/2}}{\sqrt{M_1M_2M_3}} < \frac{L^*w_c}{M_3}
    \]
    Since the edge packing $\bu$ is picked by minimizing $\prod_{j\in[l]}(L^*w_c/M_j)^{u_{c,j}}$, the above means that the edge packing $(1,1,1)$ gives a lower bound than the edge packing $(0,0,1)$. By a similar argument for $f_{c,x}$ and $f_{c,z}$, the edge packings $(0,1,0)$ and $(0,0,1)$ also give lower bounds.
\end{proof}

\begin{lemma}
    \label{lemma:medBigEPOptimal}
    For medium and big machines, the fractional edge packing that gives the lower bound is $(u_1,u_2,u_3)=(1,0,0)$.
\end{lemma}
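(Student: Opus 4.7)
The plan is to observe that $\log s_{\bu} = \sum_{j \in [l]} u_j \log(L^* w_c / M_j)$ is linear in $\bu$, so minimizing $s_{\bu}$ over the polytope of fractional edge packings of the triangle reduces to checking its vertices. For $K_3$, a standard case analysis (each vertex must make three of the edge-packing inequalities tight) shows that the vertex set is exactly $\{(0,0,0),\,(1,0,0),\,(0,1,0),\,(0,0,1),\,(1/2,1/2,1/2)\}$, so it suffices to verify that $s_{(1,0,0)} = L^* w_c / M_1$ is no larger than any of the other four values.

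I would then dispatch the four comparisons in turn. Against $(0,1,0)$ and $(0,0,1)$, the inequalities $s_{(1,0,0)} \leq s_{(0,1,0)}$ and $s_{(1,0,0)} \leq s_{(0,0,1)}$ are immediate from the ordering assumption $M_1 \geq M_2 \geq M_3$. Against the balanced packing a direct computation gives
\[
\frac{s_{(1/2,1/2,1/2)}}{s_{(1,0,0)}} = \frac{(L^* w_c)^{1/2}\, M_1}{\sqrt{M_1 M_2 M_3}} = \sqrt{\frac{L^* w_c\, M_1}{M_2 M_3}} = f_{c,z},
\]
which is at least $1$ for medium and big machines by the defining conditions $f_{c,z} \geq 1$. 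The remaining comparison against the zero packing, where $s_{(0,0,0)} = 1$, reduces to showing $L^* w_c \leq M_1$; I would obtain this from the earlier upper bound $L^* \leq (\max_j m_j)/(\max_c w_c) = M_1/\max_c w_c$, which yields $L^* w_c \leq M_1 \cdot (w_c/\max_c w_c) \leq M_1$.

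The main (mild) obstacle is the vertex-reduction step: one has to confirm that minimizing the log-linear objective is truly equivalent to checking the finite vertex set of the triangle's edge-packing polytope, and that no extreme point has been overlooked (the tightest-constraint count makes this routine but worth spelling out). Once the vertex list is fixed, each pairwise inequality above is a one-line calculation, and the only comparison requiring additional input — the one against $(0,0,0)$ — is handled by the previously established global upper bound on $L^*$.
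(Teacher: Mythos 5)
Your proof is correct and its core is the same as the paper's: compare $s_{(1,0,0)}$ against the other extreme points of the edge-packing polytope, dispatch $(0,1,0)$ and $(0,0,1)$ via $M_1\geq M_2\geq M_3$, and beat the half-integral packing via the ratio $s_{(1/2,1/2,1/2)}/s_{(1,0,0)}=f_{c,z}\geq 1$, which is exactly the defining condition for medium and big machines. Your version is more complete in two respects, both of which patch real gaps in the printed proof. First, you justify the reduction to the five vertices of the fractional matching polytope of $K_3$ by linearity of $\log s_{\bu}$; the paper assumes without comment that only $(1,0,0)$, $(0,1,0)$, $(0,0,1)$ and the half-integral point need be compared. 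Second, you handle the vertex $(0,0,0)$, where $s=1$, which the paper ignores entirely; your appeal to the earlier bound $L^*\leq \max_j m_j/\max_c w_c$ to get $L^*w_c\leq M_1$ closes that case correctly. Note also that the paper's proof as written contains errors that yours avoids: it asserts $f_{c,y}\geq 1$, which is false for medium and big machines by their very definition (the intended quantity is $f_{c,z}$), it compares against $L^*w_c/M_3$ rather than $L^*w_c/M_1$, and it refers to the infeasible packing $(1,1,1)$ where $(1/2,1/2,1/2)$ is meant. Your argument is the one that should stand.
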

\begin{proof}
    We will use an argument similar to in the previous lemma. In this case, we know that $f_{c,y}\geq 1$
    \[
    \sqrt{L^*w_c\frac{M_3}{M_1M_2}} \geq 1
    \]
    This can be rewritten as
    \[
    \frac{(L^*w_c)^{3/2}}{\sqrt{M_1M_2M_3}}\geq\frac{L^*w_c}{M_3}
    \]
    By a similar argument as in the previous lemma, this means that the edge packing $(1,0,0)$ gives a higher lower bound than $(1,1,1)$. Since still $f_{c,x},f_{c,z}\leq1$, $(0,0,1)$ and $(0,1,0)$ gives even lower bounds than $(1,1,1)$.
\end{proof}

We will now show how to pick the dimension of the subspaces allocated to each machine. We will define the function $g_{\Lambda}:\mathbb{R}_{>0}\rightarrow(\mathbb{R}_{>0},\mathbb{R}_{>0},\mathbb{R}_{>0})$ that maps from the weight $w_c$ of a machine to the vector of dimensions $(\lambda_{c,z},\lambda_{c,y},\lambda_{c,z})$ of the subspace for the machine.

\[
    g_{\Lambda}(w_c)=
    \begin{cases}
        (f_{c,x}n,f_{c,y}n,f_{c,z}n) & \text{if $w_c$ is small} \\
        (f_{c,x}n,f_{c,y}n,n) & \text{if $w_c$ is medium} \\
        (n,\frac{L^*w_c}{M_1}n,n) & \text{if $w_c$ is big} \\
    \end{cases}
\]

\begin{lemma}
    Set sides of $\Lambda_c$ according to the above. Then:
    \begin{equation*}
        \sum_{c\in[p]}|\Lambda_c|\geq V[\Lambda]
    \end{equation*}
\end{lemma}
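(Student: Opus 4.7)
The plan is to verify the inequality by computing $\lvert \Lambda_c\rvert/n^3$ explicitly for each of the three machine types (small, medium, big), and observing that in every case this quantity equals the contribution $\prod_{j\in[l]}(L^*w_c/M_j)^{u_{c,j}}$ of machine $c$ to the lower-bound inequality (\ref{equation:DiffCardLowerBound}), where $\bu_c$ is the optimal edge packing supplied by Lemma~\ref{lemma:smallEPOptimal} or Lemma~\ref{lemma:medBigEPOptimal}. Once this matching is established, summing over all machines and invoking the definition of $L^*$ as a solution of (\ref{equation:DiffCardLowerBound}) immediately yields the desired bound.

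More concretely, I would first expand the product $f_{c,x} f_{c,y} f_{c,z}$ and verify that it simplifies to $(L^*w_c)^{3/2}/\sqrt{M_1 M_2 M_3}$; this is exactly $\prod_j (L^*w_c/M_j)^{1/2}$, which by Lemma~\ref{lemma:smallEPOptimal} is the optimal value for small machines where $\bu_c=(1/2,1/2,1/2)$. For a small machine, $\lvert \Lambda_c\rvert/n^3 = f_{c,x}f_{c,y}f_{c,z}$, so this case is done. For medium machines, I would similarly show $f_{c,x} f_{c,y} = L^*w_c/M_1$ (the cross product of the first two square roots, with $M_2$ and $M_3$ cancelling), so $\lvert \Lambda_c\rvert/n^3 = f_{c,x}f_{c,y}\cdot 1 = L^*w_c/M_1$, matching the optimal edge packing $(1,0,0)$ from Lemma~\ref{lemma:medBigEPOptimal}. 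For big machines the match is immediate: by the prescribed assignment, $\lvert \Lambda_c\rvert/n^3 = 1 \cdot (L^*w_c/M_1) \cdot 1 = L^*w_c/M_1$, again equal to the $(1,0,0)$-contribution.

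Combining these three cases gives the identity $\sum_{c\in[p]} \lvert \Lambda_c\rvert/n^3 = \sum_{c\in[p]} \prod_{j\in[l]}(L^*w_c/M_j)^{u_{c,j}}$, where $\bu_c$ is the optimal packing for each machine's type. Because $L^*$ is defined as the smallest load making the right-hand side at least $1$, we conclude $\sum_c \lvert \Lambda_c\rvert \geq n^3 = V[\Lambda]$, which is exactly the claim.

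The only real work lies in the algebraic verification of the small-machine case, since it involves a product of three square roots; the medium and big cases are essentially bookkeeping once one notices that capping the oversized dimensions at $n$ exactly offsets the corresponding factors above $1$ in $f_{c,x}$ and $f_{c,z}$. The main conceptual point to double-check is that the case-by-case assignment of $(\lambda_{c,x},\lambda_{c,y},\lambda_{c,z})$ in $g_\Lambda$ was chosen precisely so that the resulting volume always reproduces the optimal packing's contribution to (\ref{equation:DiffCardLowerBound}); once that correspondence is made explicit, no further estimation is needed.
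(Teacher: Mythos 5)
Your proof is correct and follows essentially the same route as the paper's: compute $|\Lambda_c|/n^3$ case by case, recognize each as the term $\prod_{j}(L^*w_c/M_j)^{u_{c,j}}$ for that machine's optimal edge packing, and invoke the defining inequality of $L^*$. In fact your version is slightly cleaner than the paper's, which contains two apparent typos (it writes the medium/big contribution as $L^*w_c/M_3$ rather than $L^*w_c/M_1$, and attributes the small-machine term to the packing $(1,1,1)$ rather than $(1/2,1/2,1/2)$); your algebra gets both right.
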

\begin{proof}
    We will consider two types of terms in the sum, the terms corresponding to small machines and medium/large machines. For small machines, the term is
    \[
    |\Lambda_c|=\lambda_{c,x}\lambda_{c,y}\lambda_{c,z}
    =\frac{(L^*w_c)^{3/2}}{\sqrt{M_1M_2M_3}}n^3
    \]
    For medium and large machines, the term is the following
    \[
    |\Lambda_c|=\lambda_{c,x}\lambda_{c,y}\lambda_{c,z}=\frac{L^*w_c}{M_3}n^3
    \]
    Observe that both these terms can be written as $\prod_{j\in[l]}(L^*w_c/M_j)^{u_{c,j}}n^3$, where $\bu_c$ is the edge packing giving the highest lower bound for machine $c$. This is because, by the previous lemma, $\bu_c=(1,1,1)$ for small machines and $\bu=(1,0,0)$ for medium and large machines. Therefore we get
    \[
    \sum_{c\in[p]}|\Lambda_c|=\sum_{c\in[p]}\prod_{j\in[l]}\left(\frac{L^*w_c}{M_j}\right)^{u_{c,j}}n^3\geq n^3=V[\Lambda]
    \]
    Here the inequality uses \autoref{equation:DiffCardLowerBound}.
\end{proof}

\begin{theorem}
    Consider the triangle query. If sides of $\Lambda_c$ are set as described earlier, the following is true:
    \[ \pi_{S_j}\Lambda_c\leq\frac{L^*w_c}{M_j}n^{r_j} \]
\end{theorem}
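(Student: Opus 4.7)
The plan is to prove the bound $\pi_{S_j}\Lambda_c \leq (L^*w_c/M_j)n^{r_j}$ by a direct case analysis on the classification of machine $c$ as small, medium, or big, and within each case checking the three atoms $S_1, S_2, S_3$ separately. Since each atom has arity $2$, there are only nine elementary verifications to carry out, and each reduces to multiplying two of the side lengths $\lambda_{c,x}, \lambda_{c,y}, \lambda_{c,z}$ and comparing against $(L^*w_c/M_j)n^2$.

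First, I would record the three fundamental identities implied by the definitions of $f_{c,x}, f_{c,y}, f_{c,z}$, namely
\[
f_{c,x}f_{c,y}=\frac{L^*w_c}{M_1}, \qquad f_{c,y}f_{c,z}=\frac{L^*w_c}{M_2}, \qquad f_{c,x}f_{c,z}=\frac{L^*w_c}{M_3},
\]
which are immediate from squaring. In the \emph{small} case, the dimensions of $\Lambda_c$ are exactly $(f_{c,x}n, f_{c,y}n, f_{c,z}n)$, so the three projection volumes are $f_{c,x}f_{c,y}n^2$, $f_{c,y}f_{c,z}n^2$, $f_{c,x}f_{c,z}n^2$ respectively, and the above identities give equality in the desired bound.

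In the \emph{medium} case the side lengths are $(f_{c,x}n, f_{c,y}n, n)$, so $\pi_{S_1}\Lambda_c$ is handled by the same identity as in the small case. For $\pi_{S_2}\Lambda_c = f_{c,y}n^2$ I must show $f_{c,y}\leq L^*w_c/M_2$; squaring, this is equivalent to $L^*w_c M_1/(M_2M_3)\geq 1$, i.e., $f_{c,z}^2\geq 1$, which is exactly the defining inequality of the medium class. The analogous argument for $\pi_{S_3}\Lambda_c = f_{c,x}n^2$ also reduces to $f_{c,z}\geq 1$. In the \emph{big} case the dimensions are $(n, (L^*w_c/M_1)n, n)$. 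Then $\pi_{S_1}\Lambda_c$ is immediate, $\pi_{S_2}\Lambda_c = (L^*w_c/M_1)n^2 \leq (L^*w_c/M_2)n^2$ follows from the assumed ordering $M_2\leq M_1$, and $\pi_{S_3}\Lambda_c = n^2 \leq (L^*w_c/M_3)n^2$ reduces to $L^*w_c\geq M_3$, which holds because $f_{c,x}\geq 1$ gives $L^*w_c\geq M_1M_3/M_2\geq M_3$ again by the ordering.

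I do not anticipate a serious obstacle: the main work is bookkeeping to match each case's defining threshold ($f_{c,z}\geq 1$ for medium, $f_{c,x}\geq 1$ for big) with the right atom whose bound is not immediate from the closed-form identities for $f_{c,x}f_{c,y}, f_{c,y}f_{c,z}, f_{c,x}f_{c,z}$. The slightly subtle step is $\pi_{S_3}$ in the big case, where both the threshold $f_{c,x}\geq 1$ and the cardinality ordering $M_1\geq M_2$ are needed simultaneously; everywhere else a single defining inequality suffices.
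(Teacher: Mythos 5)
Your proposal is correct and follows essentially the same route as the paper: a case analysis over small/medium/big machines using the product identities $f_{c,x}f_{c,y}=L^*w_c/M_1$, $f_{c,y}f_{c,z}=L^*w_c/M_2$, $f_{c,x}f_{c,z}=L^*w_c/M_3$ together with the defining thresholds of each class. The only cosmetic difference is that the paper merges the small and medium cases via the observation $\lambda_{c,i}=\min\{f_{c,i},1\}\,n\leq f_{c,i}n$, whereas you verify the medium case by explicitly invoking $f_{c,z}\geq 1$; the underlying inequalities are identical.
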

\begin{proof}
    We start with showing this for small and medium machines together, and then finally for large machines.
    
    \textbf{Small and medium machines:} 
    For small and medium machines $\lambda_{c,x}=n\cdot\min\{f_{c,x},1\}$. Note that $\min\{f_{c,x},1\}\leq f_{c,x}$. This gives:
    \[
    \pi_{S_j}\Lambda_c
    =\prod_{j\in S_j}\lambda_{c,x}
    =\prod_{j\in S_j}\min\{f_{c,i},1\}n
    \leq\prod_{j\in S_j}f_{c,i}n
    =\frac{L^*w_c}{M_j}n^r
    \]

    \textbf{Big machines:} 
    We show this for each relation at a time. Start with $S_3$. Use that both $f_{c,x}$ and $f_{c,z}$ are more than 1.
    \[
    \pi_{S_3}\Lambda_c=\lambda_{c,z}\lambda_{c,x}= n^2\leq f_{c,x}f_{c,z}n^2=\frac{L^*w_c}{M_3}n^2
    \]
    Now $S_2$.
    \[
    \pi_{S_2}\Lambda_c=\lambda_{c,y}\lambda_{c,z}=\frac{L^*w_c}{M_1}n^2\leq\frac{L^*w_c}{M_2}n^2
    \]
    Finally for $S_1$.
    \[
    \pi_{S_1}\Lambda_c=\lambda_{c,y}\lambda_{c,x}=\frac{L^*w_c}{M_1}n^2
    \]
\end{proof}

\begin{lemma}
    For any two machines in the triangle query with weights $w_c$ and $w_{c'}$ such that $w_c\leq w_{c'}$, for any variable $x_i \in[k]$, $\lambda_{c,i}\leq\lambda_{c',i}$.
\end{lemma}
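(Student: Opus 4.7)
The plan is to case-analyze on which of the three classes (\emph{small}, \emph{medium}, \emph{big}) machines $c$ and $c'$ belong to. First I would record the underlying monotonicities: all three quantities $f_{c,x},f_{c,y},f_{c,z}$ are proportional to $\sqrt{w_c}$, and $L^* w_c/M_1$ is linear in $w_c$, so each of them is an increasing function of $w_c$. Also observe that the class of a machine is monotone in $w_c$ (if $w_c \le w_{c'}$ then the class of $c$ is no higher than that of $c'$), since the class is determined by which of $f_{c,x},f_{c,z}$ cross the threshold $1$, and these cross from below as $w_c$ grows. So the only cases to handle are the six ordered pairs of classes with $c$'s class $\le c'$'s class.

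Next I would dispatch the three same-class cases, which are immediate from the monotonicity above: within \emph{small} all three dimensions are $f_{c,\cdot}\,n$ and grow with $w_c$; within \emph{medium} the $z$-dimension is constant at $n$ and the $x,y$-dimensions grow; within \emph{big} the $x,z$-dimensions are constant at $n$ and $\lambda_{c,y}=L^*w_c\,n/M_1$ grows. For the three cross-class cases I would use the defining inequalities of each class to upgrade a small $f$-value to $1$ when the target dimension is $n$. For \emph{small}$\to$\emph{medium} (resp.\ \emph{small}$\to$\emph{big}), the $x$ and $z$ dimensions on the larger side are $n$, and smallness of $c$ gives $f_{c,x},f_{c,z}<1$, so $\lambda_{c,x},\lambda_{c,z}<n$; the $y$-dimension is handled as in the next case.

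The one genuinely nontrivial step is comparing $\lambda_{c,y}=f_{c,y}\,n$ (when $c$ is small or medium) with $\lambda_{c',y}=L^*w_{c'}\,n/M_1$ (when $c'$ is big). By monotonicity of $f_{\cdot,y}$ it suffices to show the extreme case $f_{c',y}\le L^*w_{c'}/M_1$. Squaring and clearing denominators, this is equivalent to $M_1 M_3/M_2 \le L^* w_{c'}$, which is exactly the condition $f_{c',x}\ge 1$ characterizing $c'$ being big. Thus $f_{c,y}\le f_{c',y}\le L^*w_{c'}/M_1$, as required.

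Putting the pieces together, every coordinate of $\Lambda_c$ is bounded by the corresponding coordinate of $\Lambda_{c'}$ in all nine class-pair cases, which proves the lemma. The main obstacle I anticipated—the $y$-coordinate comparison across the medium/big boundary—is resolved cleanly because the ``big'' condition on $c'$ is precisely the algebraic inequality needed, so no slack or additional constants are introduced.
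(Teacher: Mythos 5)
Your proof is correct and follows essentially the same route as the paper: within-class monotonicity of each dimension in $w_c$, plus the boundary identity $f_{c',x}\ge 1 \iff f_{c',y}\le L^*w_{c'}/M_1$ for the medium/big transition, which the paper packages instead as continuity of $g_{\Lambda}$ at the class breakpoints. One minor slip: for a \emph{medium} machine the $x$-dimension is $f_{c',x}n$ rather than $n$, so in the small$\to$medium case the $x$-coordinate comparison is not "bounded by $n$" but follows from the monotonicity of $f_{\cdot,x}$ that you already recorded; this does not affect the validity of the argument.
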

\begin{proof}
    We have to prove that $g_{\Lambda}$ is an increasing function in all three output dimensions. We can see that if the two machines have the same label, either small, medium, or big, this property holds. This is because for all three labels, all three dimensions are increasing functions of $w_c$. It remains to argue that this property holds for machines with different labels.

    We will do this by arguing that at the two points where a machine switches from \textit{small} to \textit{medium}, and \textit{medium} to \textit{large}, $g_{\Lambda}$ is continuous. When a machine switches from \textit{small} to \textit{medium}, $(f_{c,x}n,f_{c,y}n,f_{c,z}n)=(f_{c,x}n,f_{c,y}n,n)$, since $f_{c,z}=1$. Similarly, a machine switches from \textit{medium} to \textit{large} when $f_{c,x}=1$. Then $(f_{c,x}n,f_{c,y}n,n)=(n,\frac{L^*w_c}{M_1}n,n)$, since $f_{c,x}=\sqrt{L^*w_cM_2/(M_1M_3)}=1$, can be rewritten as $L^*w_c/M_1=f_{c,y}$. This proves the lemma.
\end{proof}

\subsection{Matching the Lower Bound}

We can now prove the main theorems, which show that our algorithm matches the lower bound for the aforementioned queries for both data distributions.

\begin{proof}[Proof \autoref{theorem:UpperBoundDiffCardinalityDense}]
    The worst-case load is obtained if every possible tuple that would be sent to the subspace $\Lambda_c$ is present in the input database instance. That gives the following load:
    \[
    L_{cj}=\frac{\log n}{w_c}|\pi_{S_j}\Lambda_c|
    \leq\frac{n^r\log n}{w_c}\frac{L^*w_c}{M_j}
    =O(L^*\log n)
    \]
\end{proof}

\begin{proof}[Proof \autoref{theorem:UpperBoundDiffCardinalitySparse}]
    The probability that a tuple in $\mathbf{a_j} \in S_j$ maps to machine $c$ is the following:
    $$ Pr[(\pi_{S_j}\mathbf{h})a_j \in \Lambda_c]
    = \frac{\pi_{S_j}\Lambda_c}{n^{r_j}}
    \leq \frac{L^*w_c}{M_j}
    $$
    We can now show that the expected load matches the lower bound. Use that $M_j=\Omega(m_j)$ by \autoref{lemma:EntropyOMJSparse}.
    $$
    E[L_{cj}] =\frac{m_j\log n}{w_c}\cdot Pr[(\pi_{S_j}\mathbf{h})a_j \in \Lambda_c]
    \leq \frac{m_j\log n L^*w_c}{w_cm_j}=O(L^*\log n)
    $$
    Furthermore, the number of tuples from $S_j$ mapped to machine $c$, $n_{c,j}$ has a binomial distribution, $n_{c,j}\sim Bin(m_j,L^*w_c/M_j)$. We can bound the probability with a Chernoff bound.
    \begin{align*}
    Pr[L_{cj}\geq(1+\delta)L^*]
    =Pr[N_{c,j}\geq(1+\delta)L^*w_c]
    =Pr[n_{c,j}\geq(1+\delta)\frac{L^*w_c}{M_j}m_j] \\
    \leq\exp\left( -\delta^2\frac{L^*w_c}{3M_j}m_j \right)
    \end{align*}
\end{proof}

The probability that the load exceeds $L^*$ for any machine $c$ and atom $S_j$ can be obtained using the union bound with the above probability.

\end{document}